\definecolor{mygreen}{RGB}{10,110,230}
\definecolor{myred}{RGB}{10,110,230}
\renewcommand{\epsilon}{\varepsilon}
\DeclareMathOperator{\E}{\ensuremath{\normalfont \textbf{E}}}
\newcounter{shanecommentcount}
\newcommand{\hiddencomment}[1]{}
\newcommand{\mc}[1]{\ensuremath{\mathcal{#1}}}
\newcounter{protocolcounter}
\crefname{protocolcounter}{Algorithm}{Algorithms}
\newcommand{\opt}[0]{\mathrm{OPT}}
\newcommand{\yin}{y_{in}}
\newcommand{\yout}{y_{out}}
\newcommand{\IL}{E_{in}^{lo}}
\newcommand{\OL}{E_{out}^{lo}}
\newcommand{\IH}{E_{in}^{hi}}
\newcommand{\OH}{E_{out}^{hi}}
\newcommand{\zli}{z_{in}^{lo}}
\newcommand{\zlo}{z_{out}^{lo}}
\newcommand{\zliest}{Z_{in}^{lo}}
\newcommand{\zloest}{Z_{out}^{lo}}
\newcommand{\pos}{\operatorname{pos}}
\newcommand{\Pest}{\operatorname{P}}
\newcommand{\clamp}[1]{\operatorname{clamp}\paren*{#1}}
\newcommand{\altclamp}{\operatorname{clamp}}
\newcommand{\val}[2]{\text{val}_{#1}\paren*{#2}}
\newcommand{\maxval}[1]{\operatorname{maxval}_{#1}}
\newcommand{\indeg}{d^{-}}
\newcommand{\outdeg}{d^{+}}
\newcommand{\B}{\textbf{B}}
\newcommand{\C}{\textbf{C}}
\newcommand{\tpos}{\widetilde{\pos}}
\newcommand{\zb}{\overline{z}}
\newcommand{\VertexEstimator}{\mathsf{VertexEstimator}}
\newcommand{\Neighbors}{\textbf{N}}
\newcommand{\HTAvg}{\mathsf{HTAvg}}
\newcommand{\EdgeEstimator}{\mathsf{EdgeEstimator}}
\newcommand{\Rsvsin}{R_{in}}
\newcommand{\Rsvsout}{R_{out}}
\newcommand{\zbest}{\overline{Z}}
\newcommand{\degB}{d_{\B}}
\newcommand{\outdegB}{d^+_{\B}}
\newcommand{\cutval}{\textbf{Cut-Val}}
\newcommand{\T}{\textbf{T}}
\newcommand{\Ex}{\mathbb{E}}
\newcommand{\Tmax}{T_{max}}
\newcommand{\Var}{\operatorname{Var}}
\newcommand{\Cov}{\operatorname{Cov}}
\newcommand{\Out}{\mathsf{Out}}
\newcommand{\card}[1]{|#1|}
\newcommand{\zliesth}{\mc{Z}_{in}^{lo}}
\newcommand{\zloesth}{\mc{Z}_{out}^{lo}}
\newcommand{\Pesth}{\operatorname{\mc{P}}}
\newcommand{\zbesth}{\overline{\mc{Z}}}
\DeclarePairedDelimiter{\paren}{\lparen}{\rparen}
\DeclarePairedDelimiter{\set}{\lbrace}{\rbrace}
\DeclarePairedDelimiter{\bracket}{[}{]}
\DeclarePairedDelimiter{\abs}{\lvert}{\rvert}
\DeclareMathOperator{\poly}{poly}
\renewcommand{\O}[1]{\ensuremath{O\left(#1\right)}}
\renewcommand{\epsilon}[0]{\ensuremath{\varepsilon}}
\let\originalleft\left
\let\originalright\right
\renewcommand{\left}{\mathopen{}\mathclose\bgroup\originalleft}
\renewcommand{\right}{\aftergroup\egroup\originalright}
\newtheorem{theorem}{Theorem}
\newtheorem{lemma}{Lemma}[section]
\newtheorem{proposition}[lemma]{Proposition}
\newtheorem{definition}[lemma]{Definition}
\newtheorem{claim}[lemma]{Claim}
\def\thm@space@setup{%
  \thm@preskip= 0.2cm
  \thm@postskip=\thm@preskip 
}
\crefname{lemma}{Lemma}{Lemmas}
\crefname{theorem}{Theorem}{Theorems}
\crefname{property}{Property}{Properties}
\crefname{claim}{Claim}{Claims}
\crefname{corollary}{Corollary}{Corollaries}
\crefname{result}{Result}{Results}
\crefname{conj}{Conjecture}{Conjectures}
\crefname{definition}{Definition}{Definitions}
\crefname{observation}{Observation}{Observations}
\crefname{proposition}{Proposition}{Propositions}
\crefname{assumption}{Assumption}{Assumptions}
\crefname{line}{Line}{Lines}
\crefname{figure}{Figure}{Figures}
\crefname{equation}{}{}
\crefname{section}{Section}{Sections}
\crefname{appendix}{Appendix}{Appendices}
\crefname{problem}{Problem}{Problems}
\crefname{algorithm}{Algorithm}{Algorithms}
\definecolor{mygreen}{RGB}{20,155,20}
\definecolor{myred}{RGB}{195,20,20}
\definecolor{linkcolor}{RGB}{0,0,230}
\definecolor{mylightgray}{RGB}{230,230,230}
\definecolor{verylightgray}{RGB}{240,240,240}
\definecolor{commentcolor}{RGB}{120,120,120}
\renewcommand{\mc}[1]{\ensuremath{\mathcal{#1}}}
\newcounter{myalgctr}
\newenvironment{tbox}{
\par\addvspace{0.2cm}
\begin{tcolorbox}[width=\textwidth,
                  boxsep=2pt,
                  left=1pt,
                  right=1pt,
                  top=4pt,
                  boxrule=1pt,
                  arc=0pt,
                  colback=white,
                  colframe=black
                  ]
}{
\end{tcolorbox}
}
\newenvironment{tboxh}{
\par\addvspace{0.2cm}
\begin{tcolorbox}[width=\textwidth,
                  boxsep=2pt,
                  left=1pt,
                  right=1pt,
                  top=4pt,
                  boxrule=1pt,
                  arc=0pt,
                  colback=white,
                  colframe=black,
                  float=t
                  ]
}{
\end{tcolorbox}
}
\newenvironment{graytbox}{
\par\addvspace{0.1cm}
\begin{tcolorbox}[width=\textwidth,
                  frame hidden,
                  boxsep=5pt,
                  left=1pt,
                  right=1pt,
                  top=2pt,
                  bottom=2pt,
                  boxrule=1pt,
                  arc=0pt,
                  colback=mylightgray,
                  colframe=black,
                  breakable
                  ]
}{
\end{tcolorbox}
}
\newcommand{\tboxhrule}[0]{\vspace{0.1cm} \hrule \vspace{0.2cm}}
\newenvironment{titledtbox}[1]{\begin{tbox}#1 \tboxhrule}{\end{tbox}}
\newenvironment{titledtboxh}[1]{\begin{tboxh}#1 \tboxhrule}{\end{tboxh}}
\renewcommand{\paragraph}{%
  \@startsection{paragraph}{4}%
  {\z@}{10pt}{-1em}%
  {\normalfont\normalsize\bfseries}%
}
\patchcmd{\@algocf@start}{-1.5em}{0pt}{}{}
\title{Half-Approximating Maximum Dicut in the Streaming Setting\footnote{All authors were supported in part by NSF CAREER Award CCF-2442812 and a Google
Research Award.}}
\author{
Amir Azarmehr\thanks{Northeastern University. Emails: \texttt{\{azarmehr.a, s.behnezhad, ferrante.s, saneian.m\}@northeastern.edu}.} \and
Soheil Behnezhad\footnotemark[2]  \and
Shane Ferrante\footnotemark[2] \and
Mohammad Saneian\footnotemark[2] 
}
\begin{document}

\date{}

\maketitle

\thispagestyle{empty}

\begin{abstract}
{
\setlength{\parskip}{0.5em}
    We study streaming algorithms for the maximum directed cut problem. The edges of an $n$-vertex directed graph arrive one by one in an arbitrary order, and the goal is to estimate the value of the maximum directed cut using a single pass and small space. With $O(n)$ space, a $(1-\epsilon)$-approximation can be trivially obtained for any fixed $\epsilon > 0$ using additive cut sparsifiers. The question that has attracted significant attention in the literature is the best approximation achievable by algorithms that use truly sublinear (i.e., $n^{1-\Omega(1)}$) space.
    
    A lower bound of Kapralov and Krachun (STOC'19) implies .5-approximation is the best one can hope for. The current best algorithm for general graphs obtains a .485-approximation due to the work of Saxena, Singer, Sudan, and Velusamy (FOCS'23). The same authors later obtained a $(1/2-\epsilon)$-approximation, assuming that the graph is constant-degree (SODA'25).

    In this paper, we show that for any $\epsilon > 0$, a $(1/2-\epsilon)$-approximation of maximum dicut value can be obtained with $n^{1-\Omega_\epsilon(1)}$ space in \textbf{general graphs}. This shows that the lower bound of Kapralov and Krachun is generally tight, settling the approximation complexity of this fundamental problem. The key to our result is a careful analysis of how correlation propagates among high- and low-degree vertices, when simulating a suitable local algorithm.
}
\end{abstract}

\vspace{5cm}

\paragraph{Independent work:} An independent and concurrent work of Velusamy \cite{velusamy2025near} gives a $(1/2-\epsilon)$-approximation of max dicut in $n^{1-\Omega_\epsilon(1)}$ space and two passes. Our algorithm has the same approximation/space trade-off  but runs in a single pass instead of two.

{
\newpage
\tableofcontents
\thispagestyle{empty}
}

\clearpage
\setcounter{page}{1}

\newcommand{\maxdicut}[0]{\ensuremath{\mathsf{MaxDiCut}}}
\newcommand{\CSP}[0]{\ensuremath{\mathsf{CSP}}}

\section{Introduction}

We study the {\em maximum directed cut} (\maxdicut{}) problem in the {\em streaming} setting.  This problem is a natural generalization of maximum cut to directed graphs. Specifically, the \maxdicut{} problem asks for a partition of vertices into two sets that maximizes the number of (directed) edges going from the first set to the second. In the streaming setting, the edges of this graph arrive one by one in an arbitrary order. The goal is to approximate the value of \maxdicut{} after taking a single pass over the input, while using a small space.

The \maxdicut{} problem has received significant attention from the streaming community over the last decade (see  \cite{guruswami2017streaming,ChouGV20,saxena-SODA23,saxena-FOCS23,saxena-SODA25,Madhu-Survey} and the references therein). Besides being a natural graph problem in its own right, \maxdicut{} serves as an important example of constraint satisfaction problems (\CSP{}) that have been studied extensively in the streaming setting \cite{kogan2014sketchingcutsgraphshypergraphs, KKS15, GVV17, KapralovKSV17, KapralovK19,bhaskara_et_al:LIPIcs.ICALP.2018.16, AKSY20, AN21, CGS+22b,ChouGSSV22, BHP+22, CKP+23, ChouGSSV22, KP22, saxena-SODA23, saxena-FOCS23, KPV23, Sin23,KPSY23,SSV24}. In particular, almost all general \CSP{} streaming algorithms have been extensions of algorithms that were first developed for \maxdicut{} \cite{saxena-SODA25,Madhu-Survey}. 

For the single-pass streaming \maxdicut{} problem, there is a folklore $O(n)$ space algorithm that achieves a $(1-\epsilon)$-approximation for any fixed $\epsilon > 0$.\footnote{The algorithm is to simply store $O(n/\epsilon^2)$ edges uniformly, which preserves all cuts to within an additive error of $\epsilon n$ with high probability, and then enumerate over them (in exponential time) to find the largest one. Note that this provides a multiplicative $(1-\epsilon)$-approximation of \maxdicut{}, since the \maxdicut{} value is large.} It is, therefore, natural to focus on algorithms with sublinear space and ask:
\begin{quote}
    {\em What is the best approximation of \maxdicut{} achievable in $n^{1-\Omega(1)}$ space?}
\end{quote}

On the lower bound side, a result of \citet{KapralovK19} implies that 1/2-approximation is the best one can hope for with sublinear space, already implying a separation. Despite numerous attempts and major progress \cite{guruswami2017streaming,ChouGV20,saxena-SODA23,saxena-FOCS23,saxena-SODA25}, existing algorithms do not match this lower bound. In particular, after a series of improvements, the current best known bound for general graphs is a .485-approximation due to the works of \citet*{saxena-SODA23,saxena-FOCS23}. 

Our main result in this work is to show that an (almost) 1/2-approximation can indeed be obtained for general graphs with truly sublinear space. This result, stated formally below as \cref{thm:main}, matches the lower bound of \cite{KapralovK19} and resolves the problem highlighted above regarding the approximability of \maxdicut{} with sublinear space.

\begin{graytbox}
\begin{theorem}\label{thm:main}
For any $\epsilon > 0$, there is a randomized one-pass streaming algorithm that, with high probability, outputs a $(\tfrac{1}{2}-\epsilon)$-approximation of \maxdicut{} using $n^{1-\Omega_\varepsilon(1)}$ space.
\end{theorem}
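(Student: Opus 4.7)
The plan is to simulate a local algorithm for \maxdicut{} in the streaming setting. The base local algorithm assigns each vertex $v$ independently to the source side $S$ with probability $p(v)$ that depends on $v$'s \emph{bias} $b(v)$, a suitably normalized function of $\outdeg(v)$ and $\indeg(v)$: vertices with higher out-degree relative to in-degree land in $S$ with higher probability, pushing the expected cut well above $1/4$. Prior work \cite{saxena-SODA25} already shows that a suitable rule of this form achieves a $(1/2-\epsilon)$-approximation when the graph has bounded degree, so the burden of the proof is almost entirely on showing such a rule can be simulated in $n^{1-\Omega_\epsilon(1)}$ space on \textbf{general} graphs.

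To simulate, I would estimate the cut value rather than materialize any assignment. During the pass, reservoir-sample a set $R$ of roughly $n^{1-\Omega_\epsilon(1)}$ edges for the final estimator, and in parallel maintain (i) uniform edge samples (or degree sketches) sufficient to estimate $\outdeg(v)$ and $\indeg(v)$ up to a $(1\pm\epsilon)$ factor for every vertex whose degree exceeds a threshold $\Delta = n^{1-\gamma}$, and (ii) for each vertex $v$ in a random subset $R'$, the full list of incidences of $v$ as long as its stored incidences do not exceed $\Delta$. Call $v$ \emph{low-degree} if its final degree is at most $\Delta$, and \emph{high-degree} otherwise: for low-degree $v$ we can recover $b(v)$ exactly from the stored incidences, and for high-degree $v$ the estimate $\widehat{b}(v)$ is concentrated precisely because the degree is large. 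The final estimator then averages $\widehat{p}(u)\bigl(1-\widehat{p}(v)\bigr)$ over sampled edges $(u,v)\in R$, scaled by the total edge count; in expectation this reproduces the local algorithm's cut value.

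The main obstacle is \emph{correlation}: the probability estimates $\widehat{p}(u)$ and $\widehat{p}(v)$ at the two endpoints of a sampled edge are computed from random samples that may overlap with each other and with $R$ itself, and so they are not genuinely independent. For two low-degree endpoints the correlation is negligible, since only a handful of edges touch each such vertex, and the analysis reduces essentially to the constant-degree case of \cite{saxena-SODA25}. The delicate situations are edges with at least one high-degree endpoint, where a single high-degree vertex simultaneously influences $\widehat{p}(\cdot)$ on many edges of $R$. The plan is to (a) decouple sampling streams so that the bias estimate for a vertex never uses edges that are also counted toward $R$, and (b) run a careful second-moment calculation that classifies edges of $R$ by the high/low type of each endpoint and bounds the variance contribution type by type, crucially using the fact that high-degree vertices are necessarily few in number. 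This is precisely the ``careful analysis of how correlation propagates among high- and low-degree vertices'' promised in the abstract, and I expect it to be the hardest step: in particular, the high--high case, where an edge has two high-degree endpoints, is the one most prone to coupling via the shared global edge sample, and will likely require a further peeling of the high-degree subgraph by degree buckets together with an explicit covariance bound.
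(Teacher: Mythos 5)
There is a genuine gap, and it is at the very first step: the local rule you propose to simulate is too weak to prove the theorem. You assign each vertex to $S$ with a probability depending only on its own bias (a normalized function of $\outdeg(v)$ and $\indeg(v)$), and you claim that \cite{saxena-SODA25} shows such a rule gives a $(1/2-\epsilon)$-approximation on bounded-degree graphs. That is not what that work does: \citet{FeigeJ15} prove that \emph{any} assignment rule that is a function of the vertex's bias alone cannot beat a $0.489$ ratio, so no amount of careful streaming simulation of your base rule can reach $1/2-\epsilon$. The local algorithm that actually achieves $(1/2-\epsilon)$ (from \cite{BuchbinderFNS15,Censor-HillelLS17}, used by \cite{saxena-SODA25} and by this paper, cf.\ \cref{alg:pos-def}) is \emph{recursive}: under a random $k$-coloring with $k=O(1/\epsilon)$, the fractional position $\pos(v)$ depends not only on $v$'s degree information but on the average position of its lower-color neighbors, so computing it requires exploring a depth-$k$ recursion tree around $v$, not just $v$'s immediate in/out-degrees.

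Consequently, your simulation plan estimates the wrong quantities and sidesteps the actual difficulty. Degree sketches plus stored incidences suffice to evaluate a bias-only rule, but to simulate the recursive rule one must estimate, for each sampled edge endpoint, the positions of (sampled) lower-color neighbors, recursively, with Horvitz--Thompson reweighting for the probability that an entire sparsified recursion tree is captured in the vertex sample. The correlation problem you flag (overlap between the samples used for $\widehat{b}(\cdot)$ and the final edge reservoir) is comparatively mild and is not the one the abstract refers to; the real issue is that position estimates of different vertices share subtrees of their recursion trees, and a single high-degree vertex inside such a tree can correlate essentially all estimates, with the correlations amplified by the $1/q_v$ rescaling (up to $n^{c\Tmax}$). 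Handling this requires color-dependent degree thresholds of the form $n^{q2^a}$ and an induction over colors bounding conditional means and variances (the analogue of \cref{lem:variance+mean-shift}), none of which has a counterpart in your proposal. Your remark that ``high-degree vertices are necessarily few in number'' also does not help here: with your threshold $\Delta=n^{1-\gamma}$ almost all vertices of a moderately dense graph are ``low-degree'' yet far from constant-degree, so the reduction ``low--low edges behave like the constant-degree case of \cite{saxena-SODA25}'' does not go through either.
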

\end{graytbox}

We note that, prior to our work, \cref{thm:main} was proved by \cite{saxena-SODA25} under the assumption that the underlying graph $G$ has constant maximum degree. This assumption is crucial for the algorithm and analysis of \cite{saxena-SODA25}. Our main contribution is to completely remove this assumption, and achieve an (almost) 1/2-approximation for general graphs. We provide an overview of our approach and the challenges that arise along the way in \cref{sec:techniques}.

\subsection{Further Related Work}

Let us first review the literature on single-pass adversarial order streaming algorithms more extensively.  The first paper to achieve a non-trivial approximation for \maxdicut{} was the paper of \citet*{guruswami2017streaming} which obtained an (almost) $2/5$ approximation using $O(\log n)$ space. This was subsequently improved to $4/9$ by \citet*{ChouGV20} still using $O(\log n)$ space, who also showed obtaining a better approximation requires $\Omega(\sqrt{n})$ space. Finally, the approximation ratio was improved to .485 in the works of \citet*{saxena-SODA23,saxena-FOCS23} using $O(\sqrt{n})$ space.

We note that there are also a number of other results on some relaxations of the streaming setting for the \maxdicut{} problem. This includes algorithms that allow multiple passes over the input, or algorithms that work under the assumption that the input edges are ordered uniformly at random. We refer interested readers to the paper of \cite{saxena-SODA25} for an overview of these results.

\section{Technical Overview}\label{sec:techniques}

\subsection{Background}
A simple approach to solving the Max-DiCut problem would be to use the immediate neighborhood of a vertex (i.e., the ratio between the outgoing and incoming degree) to decide which side of the cut it appears on.
\citet{FeigeJ15} present such an algorithm that achieves a $0.483$-approximation, while showing that any approach based on the bias of the vertices (i.e., the ratio between the in-degree and the out-degree) cannot go beyond a $0.489$ ratio.
This suggests that to obtain a $(\frac{1}{2} - \epsilon)$-approximation, it is necessary to examine a larger radius around the vertices.
Below, we overview a line of work culminating in the $(\frac{1}{2} - \epsilon)$-approximation of \cite{saxena-SODA25} for constant-degree graphs in $n^{1-\Omega_\epsilon(1)}$ space.

Consider the following approach in the sequential (classic) setting.
The algorithm iterates over the vertices one by one.
Each vertex is assigned a position in the cut based on two things: (1) its \emph{degree information}, i.e., the number of incoming or outgoing edges to the assigned and unassigned vertices, and (2) the average position assigned to its neighbors that have already been processed.
\citet{BuchbinderFNS15} employ this method to obtain a $\frac{1}{2}$-approximation of Max-DiCut in linear time (in fact, they do so more generally for unconstrained submodular maximization).

\citet*{Censor-HillelLS17} derive a $(\frac{1}{2} - \epsilon)$-approximate LOCAL distributed algorithm from the sequential approach by utilizing vertex colorings.
The vertices are randomly colored using $k = O(\frac{1}{\epsilon})$ colors.
This can be turned into a valid coloring (i.e., such that adjacent vertices have different colors) by deleting any violating edges.
As a result, only $\epsilon m$ edges are removed in expectation, and hence Max-DiCut is largely unaffected, since it is at least $m/4$.
With the coloring at hand, the sequential algorithm can be implemented by ordering the vertices based on their color,
where the order between vertices of the same color is inconsequential, as none of them are adjacent.
Therefore, the position of each vertex can be computed recursively based on its information degree and the average position of its lower-color neighbors.
This yields a LOCAL distributed algorithm that computes the position of each vertex by examining its $k$-neighborhood, i.e., in $k$ rounds of communication.

We note the use of \emph{fractional positions} for these approximations.
That is, rather than assigning a 0-1 indicator variable to each vertex $u$ showing whether $u$ is on the source side of the cut, the algorithm assigns a fractional value $x_u \in [0, 1]$.
This could be interpreted as the probability of $u$ appearing on the source side of the cut.
As a result, the expected number of directed edges going out of the source side is equal to
$$
\sum_{(u, v) \in E} x_u (1-x_v).
$$

\citet{saxena-SODA25} simulate the $k$-round LOCAL algorithm of \cite{Censor-HillelLS17} to obtain a $(\frac{1}{2} - \epsilon)$-approximation for constant-degree graphs, using a space of $n^{1-\Omega_\epsilon(1)}$.
We assume a $k$-coloring of the graph, where $k = O(\frac{1}{\epsilon})$, and a function corresponding to the LOCAL algorithm, which outputs the position of a vertex based on its degree information (i.e., incoming and outgoing degree to lower-color and higher-color neighbors) and the average position of the lower-color neighbors.
For a vertex $v$, let $\pos(v)$ denote its position according to the LOCAL algorithm.
To compute $\pos(v)$, the streaming algorithm simulates the LOCAL algorithm by collecting the recursion tree that determines the position of a vertex $v$.
For a color-$1$ vertex, the recursion tree only consists of the vertex itself, and for higher colors, the recursion tree consists of the vertex plus the union of the trees of its lower-color neighbors.

Their algorithm relies on the small size of the recursion tree to collect it using random sampling.
Consider a graph with a constant maximum degree $\Delta$.
For any vertex, the size of the recursion tree is at most $\Tmax = O(\Delta^k)$ since the depth is at most equal to the number of colors $k$.
Therefore, if the vertices are sampled with probability $n^{-c}$ and the edges of the sampled vertices are stored completely,
then for any vertex, the entire recursion tree is collected with probability at least $n^{-c\Tmax}$.
If the recursion tree of a vertex is collected, then the LOCAL algorithm can be exactly simulated to compute the position of the vertex.
In this case, we say the vertex has succeeded.
When $c$ is chosen appropriately, for a sufficiently large number of edges, both endpoints succeed, and the cut value can be computed.
This yields a single-pass algorithm with memory $O(n^{1 - c})$.

\subsection{Our Techniques}

In addition to the $k$-coloring of the graph and the function corresponding to the LOCAL algorithm, we assume that the graph has $O(n/\epsilon^2)$ edges. This is without loss of generality, since we can sample the edges with probability $O(\frac{n}{\epsilon^2 m})$ which preserves the value of Max-DiCut up to a factor of $(1 + O(\epsilon))$ with high probability (where the value of $m$ can be guessed up to a constant by running parallel copies of the algorithm, so we can assume we know $m$ w.l.o.g.).

On a high level, our algorithm sparsifies the recursion trees and estimates the position of a vertex rather than exactly simulating the LOCAL algorithm.
A great deal of care is required to manage the correlation between the estimates.\footnote{
We remark that \cite{saxena-SODA25} also presents a multi-pass algorithm that sparsifies the recursion trees. However, their approach does not extend to the single-pass setting, as it relies heavily on collecting the layers of the recursion trees pass by pass in a BFS manner.}
We divide the vertices into two groups based on their degree, and handle each group differently.
A vertex of color $a$ is \emph{mildly low-degree} if it has a degree of at most $n^{q2^a}$ for a fixed constant $q$ dependent on $\epsilon$.\footnote{More precisely, the edges are sampled with probability $n^{-c}$, and the vertex is considered low-degree if it has at most $n^{q2^a}$ sampled edges. We ignore this detail here for the sake of simplicity.}
Otherwise, it is \emph{high-degree}.
Note that our notion of low-degree here includes vertices of degrees polynomial in $n$, i.e.\ it is not restricted to constant-degree vertices and still requires sparsification.

\paragraph{Mildly low-degree vertices. }
First, we discuss low-degree vertices.
Every low-degree vertex is \emph{sampled} with probability $n^{-c}$ to form the vertex set $W$.
The degree information is maintained exactly for the vertices in $W$.
Then, for any low-degree sampled vertex $v \in W$, a set of $d = O_\epsilon(1)$ lower-color neighbors are sampled with replacement, referred to as the \emph{selected neighbors}.

Consider, for now, a case where all vertices are low-degree.
With $W$ and the selected neighbors at hand, the position of a vertex $v \in W$ is estimated as follows.
Recall that the exact position $\pos(v)$ is determined by the degree information, which is computed exactly for all vertices in $W$, and the average position of its lower-color neighbors.
The latter is estimated through the selected neighbors of $v$.
Note that, as a result, estimating the position of $v$ would require first estimating the position of its selected neighbors.
This dependency motivates the notion of \emph{succeeding} for a vertex:
we say a vertex succeeds if it is sampled, and all its selected neighbors also succeed.
Put differently, the selected neighbors define a sparsified recursion tree, and a vertex succeeds if all the vertices in its sparsified tree are sampled in $W$, in which case the estimated position can be computed.

We give a high-level analysis of this special case, where all the vertices are low-degree.
For the sake of analysis, one can imagine that every vertex (not just those in $W$) selects $d$ lower-color neighbors.
Note that, similarly to the constant-degree case, the sparsified recursion trees have a size of at most $\Tmax = O(d^k)$.
Therefore, when $c$ is chosen appropriately, the sparsified tree of each vertex is collected in $W$ with a sufficiently large probability.
We note that we are estimating the position of each vertex at most once, and passing up the same estimate for each recursive call.
That is, if two vertices $u$ and $v$ select the same lower-color neighbor $w$, the same estimate of the position of $w$ is passed up to $u$ and $v$.
As a result, our estimates may be correlated, which could potentially make the averages unreliable (i.e., high-variance).
However, we manage to bound the variance inductively by allowing a variance of $\sigma^2_a$ for a vertex of color $a$, which is larger for higher colors.
This issue of correlated averages is exacerbated with the high-degree vertices as we discuss below.

\paragraph{High-degree vertices. }
Next, we move on to high-degree vertices.
As opposed to low-degree vertices, which might fail to produce an estimate with some probability, high-degree vertices always succeed.
Recall that to estimate the position of a vertex $u$, we require the degree information and the average position of the lower-color neighbors.
By sampling all the edges with probability $n^{-c}$, we can accurately estimate the degree information of all high-degree vertices.
To estimate the average position of the lower-color neighbors, first, we use the same edge samples and obtain a sample of lower-color neighbors $\Neighbors(u)$.
For each selected neighbor $v \in \Neighbors(u)$, let $q_v$ be the probability that $v$ succeeds,\footnote{In the algorithm, the probability of success is exactly determined by the color of the vertex. We defer how this is achieved to the main body of the paper.} 
and let $\Pest(v)$ denote its estimated position when it does.
To estimate the average of the neighbors, we sum $\Pest(v) / q_v$ for the neighbors $v \in \Neighbors(u)$ that succeed, and divide by $\card{\Neighbors(u)}$.
The estimates are scaled by $1/q_v$ to compensate for the cases where the vertex fails, so that each vertex contributes about $\pos(v)$ to the sum in expectation.

We reiterate that for high degree vertices $v$, the success probability $q_v$ is equal to $1$.
As a result, in the simple case where all the vertices are high-degree, the method above trivially takes the average estimate for all $v \in \Neighbors(u)$.
In this case, the high degree of $u$ ensures that enough neighbors are sampled for this estimate to be accurate.
However, for general values of $q_v$, i.e.\ when some of the lower-color neighbors are low-degree and $q_v < 1$, managing the variance of our estimate becomes more challenging.
That is, when the estimates are scaled by $1/q_v$,
the variance of the estimate for $v$ grows by $1/q_v^2$,
and the covariance of the estimates for $v_1$ and $v_2$ grows by $1/q_{v_1}q_{v_2}$.
This may increase the variance of our average significantly, as $1/q_v$ could be as large as $n^{c\Tmax}$, where $\Tmax = O(d^k)$.
To illustrate, in part, how the variances are analyzed with the rescalings, we present a special case below.

Consider a special case where for a high-degree vertex $u$, the entire (non-sampled) recursion tree consists of low-degree vertices.
Let $v_1, \ldots, v_{\card{\Neighbors(u)}}$ denote the selected neighbors of $u$.
If two of these neighbors $v_i$ and $v_j$ include the same vertex $w$ in their sampled recursion tree.
Our algorithm estimates the position of $w$ only once, and the same estimate is used for $v_i$ and $v_j$.
As a result, the estimates for $v_i$ and $v_j$ may be correlated.
In contrast with the case where all the vertices were low-degree, this correlation is substantial, as it will be scaled by $1/{q_{v_i}}q_{v_j}$.

To address this challenge, the number of pairs $(v_i, v_j)$ with intersecting subtrees can be bounded as follows.
Each vertex $v_i$ has a subtree of size at most $\Tmax = O(d^k)$, due to the fact that all vertices in the recursion tree of $u$ are low-degree.
Moreover, each vertex in the subtree of $v_i$ can be in the subtree of at most $n^{q2^{a-1}}$ other vertices $v_j$, where $a$ denotes the color of $u$
(this relies crucially on the threshold for low-degree vertices, which grows doubly-exponentially in $a$).
Therefore, the number of intersecting pairs is at most $O(d^k n^{q2^{a-1}})$, which is small enough since $u$ is taking average over approximately $n^{q2^a}$ neighbors.

The main challenge of our analysis revolves around the full interaction between high-degree and low-degree vertices, i.e., when high-degree vertices appear in the sparsified recursion tree.
For a vertex $u$, many of its lower-color neighbors may be correlated as the sparsified recursion tree is no longer entirely constant-degree (in fact, every pair of neighbors may be correlated as a high-degree vertex of degree $n-1$ can connect everything).
However, we prove that any such correlation that \enquote{goes through a high-degree vertex} is small in magnitude.
That is, while many of the lower-color neighbors may be correlated, the sum of the correlations (and hence the variance of the estimated average), is still small.

The analysis proceeds in two stages.
In the first stage, we reveal the randomness of selecting the neighbors (\cref{sec:p1-p2}).
In this stage, based on the selected neighbors (which determines the sparsified trees), we define a set of intermediary estimates that correspond to a hypothetical scenario, where all the vertices succeed, i.e., every vertex is in $W$ and there is no sampling of vertices.
We prove that these estimates are close (with constant probability) to the exact positions computed by the LOCAL algorithm.
In the second stage, we reveal the randomness of sampling the vertices, i.e., the randomness of $W$ (\cref{sec:p2-p3}).
Combined with the sparsified trees, this determines the final estimates that the algorithm computes.
We show that these estimates are close (with constant probability) to the intermediary estimates.

In each stage, we analyze the expectation and the variance of our estimates using an induction on the color.
With larger colors, the estimates are allowed to be more biased (i.e., the expected value can be further from the exact position), and less accurate (i.e., the variance can be larger).
In the second stage, to bound the correlation between different estimates, we prove that the estimate for a vertex $v$ is not largely affected by conditioning on the success of other vertices.
More precisely, we show that the shift in the conditional expectation and the growth of the conditional variance are proportional to the number of vertices we condition on (linear for the expected value, and quadratic for the variance). 
\section{Preliminaries}

\paragraph{The problem.} Given a directed graph $G=(V, E)$, the \maxdicut{} is a vertex subset $S \subseteq V$ maximizing the number of (directed) edges that go from $S$ to $V \setminus S$. Let $\opt(G)$ denote the number of such edges in \maxdicut{}. For $\alpha \in (0, 1]$, we say an algorithm provides an $\alpha$-approximation for \maxdicut{} if given any graph $G$ it provides an estimate $s$ such that $\alpha \cdot \opt(G) \leq s \leq \opt(G)$.

\paragraph{Streaming setting.}
Let the vertices have labels $1, 2, \ldots, n$.\footnote{While this assumption is standard, it is not crucial for our algorithm. Assuming the vertex labels are unique elements from a larger set of size $N = \poly(n)$, our algorithm needs the following modification. To sample the vertices with probability $n^{-c}$, rather than directly sampling from $\{1, 2, \ldots, n\},$ the algorithm can use $n^{1-c}$-independent hash functions.}
The graph is given to the algorithm as a stream of edges, each denoted by an ordered pair of vertex labels.
The algorithm reads the edges from the stream one by one and outputs an approximation of \maxdicut{} at the end.
The goal is to produce a good approximation while using a small amount of space.
Our algorithm produces a $(\frac{1}{2} - \epsilon)$-approximation using $n^{1 - \Omega_\epsilon(1)}$ space.

It can be assumed, by incurring an $O(\log n)$ factor in the space complexity, that the value of $m$ (the number of edges) is known up to a constant.
To do so, we can run $O(\log n)$ independent copies of the algorithm, where the $i$-th copy assumes $m \in [2^{i-1}, 2^i[$.
If the copy misbehaves (i.e.\ tries to use more than $n^{1 - \Omega_\epsilon(1)}$ space) or $2^i$ edges arrive in the stream, then it is terminated.
At the end of the stream, the value of $m$ is known, and we can use the output from the appropriate copy.
For simplicity, the rest of the paper assumes that $m$ is known exactly. This affects only the constants in our algorithm and analysis.
We also disregard the logarithmic factor in the space, as the significance of our result is that the space complexity is \emph{strongly sublinear} in $n$.

\begin{definition}
    Given a vertex $v$ in graph $G=(V,E)$, we define several notions of degree. Let $d(v)$ be the degree of $v$. Let $\indeg(v)$ and $\outdeg(v)$ be the incoming and outgoing degrees of $v$, respectively. For some set of edges $F \subseteq E$, we define $d_F(v), d_F^-(v),$ and $d_F^+(v)$ similarly, only counting the edges in $F$. Lastly, if $G$ has a vertex coloring $\chi:V\to[k]$ then for color $a\neq \chi(v)$, define $d_a(v), d_a^-(v),d_a^+(v)$ similarly, only counting edges adjacent to vertices of color $a$.
\end{definition}

\begin{definition}
    For a graph $G=(V,E)$, and any vertex function $f : V \to [0,1]$, we define a corresponding function on the edges with the same name $f: E \to [0,1]$ where, for edge $e=(u,v)$, 
    \[
    f(e) = f(u)\cdot(1-f(v)).
    \]
\end{definition}

\begin{definition}
    For a graph $G=(V,E)$, and any vertex function $f : V \to [0,1]$, define $\val{G}{f}$ as follows:
    \[
    \val{G}{f} = \frac{1}{|E|}\sum_{e \in E} f(e) = \frac{1}{|E|}\sum_{(u,v) \in E} f(u) \cdot (1-f(v)).
    \]
\end{definition}

\begin{definition}
    For a graph $G=(V,E)$, $\maxval{G}$ is the maximum over all boolean functions $f:V \to \{0,1\}$ of $\val{G}{f}$.
\end{definition}

We see that this is equal to $\maxdicut$ for graph $G$, when scaled by the number of edges in the graph.

\begin{proposition}
    For a graph $G=(V,E)$, let $f$ any vertex function $f : V \to [0,1]$, there exists boolean functions $g,g': V \to \{0,1\}$ with 
    \[
    \val{G}{g} \leq \val{G}{f} \leq \val{G}{g'}.
    \]
\end{proposition}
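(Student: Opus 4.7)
The plan is to exploit the fact that $\val{G}{\cdot}$ is multilinear in the vertex values. Writing out
\[
\val{G}{f} = \frac{1}{|E|}\sum_{(u,v) \in E} f(u)(1-f(v)),
\]
and assuming no self-loops (self-loops contribute $0$ to any directed cut and can be removed without changing the problem), each term depends on the value at any particular vertex $u$ either not at all or linearly. Consequently, if I fix $f(w)$ for every $w \neq u$ and vary only $f(u) \in [0,1]$, the resulting univariate function is affine, so its maximum and minimum over $[0,1]$ are attained at the endpoints $0$ and $1$.

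From this observation, the construction of $g'$ is iterative coordinate rounding. I would order the vertices arbitrarily as $u_1, \dots, u_n$ and set $f^{(0)} := f$. At step $i$, I define $f^{(i)}$ to agree with $f^{(i-1)}$ on every vertex except $u_i$, where I set $f^{(i)}(u_i) \in \{0,1\}$ to whichever of the two endpoints makes $\val{G}{f^{(i)}}$ at least as large as $\val{G}{f^{(i-1)}}$; such a choice exists by the affinity above. Chaining the inequalities, $\val{G}{g'} \geq \val{G}{f}$ with $g' := f^{(n)}$ boolean. Constructing $g$ is symmetric: at each step round $f(u_i)$ to the minimizing endpoint.

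An equivalent probabilistic phrasing, which I might include as a remark, is to draw $h : V \to \{0,1\}$ by independently setting $h(u) = 1$ with probability $f(u)$. Since for any edge $(u,v)$ with $u \neq v$ the bits $h(u)$ and $h(v)$ are independent,
\[
\Ex\bracket*{\val{G}{h}} = \frac{1}{|E|} \sum_{(u,v) \in E} f(u)(1-f(v)) = \val{G}{f},
\]
so by the probabilistic method some realization $g$ in the support achieves at most this expectation and some realization $g'$ achieves at least it.

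There is essentially no obstacle here; the only subtlety is the self-loop caveat, since a self-loop $(u,u)$ would contribute $f(u)(1-f(u))$, which is not affine in $f(u)$ and strictly exceeds $0 = g(u)(1-g(u))$ for any boolean $g$, breaking the upper-bound direction. I would address this by noting up front that self-loops are never cut by any bipartition, so removing them preserves both $\opt(G)$ and $\val{G}{f}$ restricted to boolean $f$, leaving the proposition intact on the self-loop-free graph on which the multilinear argument applies verbatim.
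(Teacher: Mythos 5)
The paper states this proposition without supplying a proof (it is treated as a standard fact), so there is no paper argument to compare against. Your proof is correct: the coordinate-by-coordinate rounding argument works because $\val{G}{\cdot}$ is affine in each $f(u)$ once the other coordinates are fixed, and the probabilistic-method phrasing is a valid equivalent. The self-loop caveat is the right point of care — a self-loop term $f(u)(1-f(u))$ would break affinity — and your observation that self-loops contribute nothing to any directed cut and can be removed is the correct way to dispose of it. No gap.
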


This implies that $\maxval{G}$ is equivalently the maximum over all fractional assignments $f:V \to [0,1]$. We see that $\maxval{G} \leq 1$. We can also trivially lower-bound $\maxval{G}$ by $1/4$ using the fractional cut $f(v) = 0.5$ for all vertices, giving $\val{G}{f} = 1/4$. Thus, our \maxdicut{} value for any graph is at least $m/4$.

\begin{definition} Given $x \in \mathbb{R}$, we define
\[
\clamp{x} := 
    \begin{cases}
        0 & \text{if } x \leq 0, \\ 
        x & \text{if } 0 < x < 1, \\
        1 & \text{if } x \geq 1.
    \end{cases}
\]
    
\end{definition}


\section{Basic Setup and Initialization}

\subsection{Constants}\label{sec:constants}

There are a number of constants that are necessary parameters of the algorithms used in this paper. These constants are defined in \Cref{table:values}. 

\setlength{\arrayrulewidth}{1pt}
\setlength{\tabcolsep}{12pt}
\renewcommand{\arraystretch}{1.5}

\begin{table}[h!]
\centering
\begin{tabularx}{\textwidth}{
    |>{\columncolor{gray!50}}c
    |>{\columncolor{gray!50}}c
    |*{7}{>{\centering\arraybackslash}X|}
}
\hline
\rowcolor{white}
\textbf{Constant} &
\textbf{$k$} &
\textbf{$\alpha$} &
\textbf{$\delta_a$} &
\textbf{$\sigma_a$} &
\textbf{$d$} &
\textbf{$\T_a$} &
\textbf{$q$} &
\textbf{$c$} \\ 
\hline

\rowcolor{white}
\textbf{Value} &
$1/\epsilon^2$ &
$\epsilon^4$ &
$\epsilon^{4^{k+2-a}}$ &
$\delta_a^2$ &
$\epsilon^{-4^{k+3}}$ &
$(2d)^a$ &
$2^{-(k+1)}$ &
$q/10\T_k$ \\
\hline
\end{tabularx}
\caption{Value of parameters used throughout the paper.}
\label{table:values}
\end{table}

We discuss some of the key relationships between these constants and their main functions in relative increasing scale.

\begin{itemize}
\item $\epsilon$ is our main parameter, our algorithm produces a $(1/2-\epsilon)$-approximation. 

\item $k$ is the number of colors in our colored graph (formalized in \cref{sec:simplifying-coloring}).

\item $\alpha$ is a parameter defined for \Cref{alg:pos-def}, this is a deterministic non-streaming algorithm that our algorithm simulates, which gives a $(1/2-\alpha)$-approximation.

\item $\delta_a,\sigma_a$ are families of constants used for inductively bounding the mean and variance of values in the analysis. In general, $\delta_a, \sigma_a$ are used for vertices of color $a$, however, we also refer to the constants $\delta_0,\sigma_0,\delta_{k+1},\sigma_{k+1}$ despite all vertices having color from 1 to $k$.

\item $d$ is the constant that determines the number of sampled neighbors we take for low-degree vertices.

\item $\T_a$ is a constant that determines the number of vertices a low-degree vertex of color $a$ relies on being sampled for an accurate estimate. We also refer to this as the size of the \enquote{low-degree-tree} of our vertex. This also determines the probability that we are able to estimate the position of a vertex of color $a$, as each vertex is sampled independently with probability $n^{-c}$, this probability is $n^{-c\T_a}$. We also refer to the constant $\Tmax = \T_k$. 

\item $q$ is a constant used for the threshold that determines if a vertex is high-degree. A vertex of color $a$ is high-degree (loosely speaking) if its degree is greater than $n^{q\cdot 2^a}$. 

\item $c$ is our smallest constant which determines sampling probabilities and hence our space bound. Our algorithm has a space complexity of $O(n^{1-c})$, where $c = \exp(\exp(1/\epsilon))^{-1}$. 

\item Lastly, for convenience, we assume that $\epsilon \leq 0.01$ and $n$ is large enough such that $\exp\paren*{-cn^c} \leq c$ so that all Chernoff bounds give small enough probabilities.
\end{itemize}



\subsection{Simplifying and Coloring the Input Graph}\label{sec:simplifying-coloring}

Let $G=(V, E)$ be an $n$-vertex $m$-edge graph. In this section, we prove that we can instead estimate the max directed cut value of a vertex-colored graph $G'=(V,E')$ with the following three properties without loss of generality.

\begin{enumerate}[label=$(A\arabic*)$]
    \item Graph $G'$ has at most $10(n/\epsilon^4)$ edges and at least $n^{1-c}$. \label{assumption:linearedges}
    \item For every vertex $v \in V$ with $d(v) \geq n^{2q}$, we have $d^-(v), d^+(v) \geq \epsilon^2d(v)$.\label{assumption:balanced-in-out}
    \item For every vertex $v \in V$ with $d(v) \geq n^{2q}$, the colors of the neighbors are evenly distributed. That is, for a color $a \neq \chi(v)$, we have 
    $$\abs*{d_a^-(v)-d^-(v)/(k-1)} \leq \delta_0d(v)/(k-1) \quad \text{and} \quad \abs*{d_a^+(v)-d^+(v)/(k-1)} \leq \delta_0d(v)/(k-1).$$ 
    \label{assumption:balanced-colors}
\end{enumerate}

More specifically, we prove the following:

\begin{lemma}\label{lem:streaming-reduction}
    Suppose there is a streaming algorithm that $\paren*{\frac{1}{2}-17\epsilon^2}$-approximates the value of maximum directed cut for a colored-graph under assumptions \ref{assumption:linearedges}, \ref{assumption:balanced-in-out}, and \ref{assumption:balanced-colors} using space $O(n^{1-c})$ and failure probability $p$. Then there is an algorithm that $\paren*{\frac{1}{2}-\epsilon}$-approximates the maximum directed cut value on general graphs, using $O(n^{1-c})$ space, and with failure probability $\epsilon$.
\end{lemma}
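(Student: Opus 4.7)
The plan is to perform a sequence of streaming-friendly reductions that transform the input graph $G$ into a colored graph $G'$ satisfying (A1), (A2), and (A3), invoke the hypothesized black-box algorithm on $G'$, and rescale the returned estimate to obtain a $(\tfrac{1}{2}-\epsilon)$-approximation of $\opt(G)$. All of these reductions are carried out on the fly in one pass, using independent sketches run in parallel.

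First, I would guess $m := |E(G)|$ within a factor of $2$ via $O(\log n)$ parallel copies, store the entire graph explicitly if $m \leq n^{1-c}$, and otherwise subsample each streamed edge independently with probability $\rho := 10n/(\epsilon^4 m)$. A Chernoff argument shows that the sampled graph $G_1$ has $n^{1-c} \leq |E(G_1)| \leq 10n/\epsilon^4$ with high probability and that $\opt(G_1)/\rho \in (1\pm\epsilon^2)\opt(G)$, giving (A1). Concurrently, I would run a Misra-Gries sketch over the sampled edges in $O(n^{1-c})$ space to identify the set $H$ of vertices with sampled degree at least $n^{2q}$, and maintain $d^+(v), d^-(v)$ exactly for every $v \in H$; this fits in sublinear space since $|H| = O(n^{1-2q}/\epsilon^4)$ and $q > c$. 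At the end of the stream, for each $v \in H$ with $d^-(v) < \epsilon^2 d(v)$ (resp.\ $d^+(v) < \epsilon^2 d(v)$), I would inject $\epsilon^2 d(v)$ dummy incoming (resp.\ outgoing) edges into $G_1$, each incident to a fresh degree-$1$ dummy vertex, producing a graph $G_2$ that satisfies (A2); a short argument (extending any assignment of $G_1$ to the dummies optimally) shows $|\opt(G_2) - \opt(G_1)| \leq 2\epsilon^2 |E(G_1)| \leq 8\epsilon^2 \opt(G_1)$. Finally, I would assign colors through a $k$-wise independent hash $\chi : V \to [k]$ stored in $\polylog(n)$ space; for every $v \in H$, Chernoff bounds on $d_a^+(v)$ and $d_a^-(v)$, combined with $n^{2q} \gg k/\delta_0^2$ (as permitted by the largeness-of-$n$ convention of \cref{sec:constants}) and a union bound over $|H|\cdot k$ events, establish (A3).

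Feeding the resulting colored graph $G' := G_2$ to the black-box returns $s' \geq (1/2 - 17\epsilon^2)\opt(G')$; outputting $s := s'/\rho$ (possibly shaved by a $(1-O(\epsilon^2))$ factor so that $s \leq \opt(G)$) gives $s \geq (1/2 - 17\epsilon^2)(1 - 8\epsilon^2)(1 - \epsilon^2)\opt(G) \geq (1/2 - \epsilon)\opt(G)$ using the standing assumption $\epsilon \leq 0.01$. The total failure probability is $p + o(1)$, which is driven below $\epsilon$ by $O(\log(1/\epsilon))$ independent trials aggregated via the median. The main obstacle will be Step 2: verifying that a Misra-Gries sketch with threshold $n^{2q}$ reliably catches every heavy vertex in $O(n^{1-c})$ space after subsampling, and arguing that the dummy edges can be appended \emph{after} the true edges in the stream (which is permitted because the black-box accepts an adversarial edge order) without spoiling any of the invariants (A1)-(A3) used by the downstream algorithm.
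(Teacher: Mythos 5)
Your proposal follows the same three-phase skeleton as the paper (edge subsampling for (A1), degree balancing for (A2), random coloring for (A3)), but your method for phase 2 is genuinely different and, as written, has a real gap. The paper achieves (A2) with a remarkably simple operation: independently flip each edge's orientation with probability $3\epsilon^2$. This is a local, one-pass, per-edge coin flip that needs no heavy-hitter identification, creates no new vertices, and by Chernoff immediately guarantees $d^-(v), d^+(v) \geq 2\epsilon^2 d(v)$ for every $v$ with $d(v) \geq n^{2q}$, at a total $\mathrm{OPT}$ cost of $(1\pm 16\epsilon^2)$.

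Your approach instead tries to identify the heavy set $H$ via Misra--Gries and then ``maintain $d^+(v), d^-(v)$ exactly for every $v \in H$,'' finally injecting dummy pendant edges at the end of the stream. This does not directly work in one pass: $H$ is only determined at the \emph{end} of the stream, so you cannot retroactively have been tracking exact degrees for its members during the stream; and Misra--Gries gives counts with additive error $\Theta(\text{stream length}/(\text{number of counters}))$, not exact degrees. You could conceivably patch this (e.g.\ track every vertex whose running count ever exceeds a lowered provisional threshold, prune aggressively, and argue that the survivors contain $H$ with exact degrees from the point of entry onward), but that argument is not present and is delicate. The dummy-vertex bookkeeping also adds friction that the paper's flip trick avoids entirely: $\opt(G')$ is now a \emph{one-sided overestimate} of $\opt(G_1)$ by up to $8\epsilon^2\,\opt(G_1)$, so your final shaving factor has to account for this asymmetrically, and the dummy vertices inflate the vertex count by a constant-in-$\epsilon$ factor.

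Two smaller issues in phase 3. First, $k$-wise independence with $k = 1/\epsilon^2$ does \emph{not} give Chernoff-type concentration on $d_a^{\pm}(v)$: limited-independence tail bounds decay only polynomially in the deviation, and since $2q = 2^{-k}$ is much smaller than $2/k = 2\epsilon^2$, the resulting tail is far too weak to union-bound over all $n$ vertices. You would need $\Theta(\log n)$-wise independence (or a Nisan-style PRG) for this step. Second, you never remove monochromatic edges; the downstream algorithm requires a \emph{proper} coloring (the edge partition into $\IL, \OL, \IH, \OH$ in \cref{alg:pos-def} silently assumes no endpoint shares a color). The paper explicitly deletes these edges and charges a $(1\pm 8\epsilon^2)$ loss; your write-up omits this. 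The overall arithmetic $(1/2-17\epsilon^2)(1-O(\epsilon^2)) \geq 1/2-\epsilon$ is fine, but these structural gaps in the reduction itself need to be filled before the calculation is the right one to be doing.
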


\begin{proof}
We reduce an arbitrary input graph $G=(V,E)$ to one satisfying assumptions 
\ref{assumption:linearedges}, \ref{assumption:balanced-in-out}, and \ref{assumption:balanced-colors}, while preserving the maximum directed cut up to a small multiplicative factor.

\paragraph{Step 1. Reducing the number of edges.}
If $G$ has fewer than $10n^{1-c}$ edges, we can simply store all of them and we can compute our maximum directed cut exactly, so we assume $G$ has at least $10n^{1-c}$ edges. If $G$ still has fewer than $2n/\epsilon^4$ edges, leave it as is. Otherwise, form a random subgraph $G_2$ by sampling each edge independently with probability 
\[
p = \frac{n}{\epsilon^4 m}.
\]
Let $\mathrm{OPT}_2$ be the maximum directed cut in $G_2$. Fix a cut $S \subseteq V$ with $\delta$ crossing edges. Define independent Bernoulli random variables $X_i$ indicating whether edge $i$ is sampled and crosses $S$, and set
$X = \sum_{i=1}^m X_i$ with expectation $\mathbb{E}[X]=p\delta$. By a Chernoff bound (\Cref{lem:add-chernoff}), 
\[
\Pr\paren*{|X - p\delta| \ge 3p \epsilon^2 m} \leq 2 \exp \paren*{-\frac{(3p\epsilon^2 m)^2}{3p\delta}} \leq 2\exp \paren*{-3n}.
\]
Applying a union bound over all $2^n$ cuts, we obtain
\begin{equation}\label{eq:On-edge-reduction-union-bound}
    \Pr\paren*{\exists\ \text{cut $S$ with error $\geq$ $3p\epsilon^2 m$}} \le 2\exp(-n) \leq \epsilon^2.
\end{equation}

Thus with high probability, if we scale by $1/p$, every cut value is preserved up to additive $\pm 3 \epsilon^2 m$, which implies $p\mathrm{OPT}_2 \in (1 \pm 12\epsilon^2)\mathrm{OPT}$ as $\mathrm{OPT} \geq m/4$. In addition let $m_2$ be the number of edges in $G_2$, by \Cref{lem:mult-chernoff}, we know that $\Pr(\abs*{m_2-pm} > pm/2) \leq 2\exp \paren*{-\frac{pm}{12}} = 2\exp \paren*{-\frac{n}{12\epsilon^4}} \leq \epsilon^2$. Thus, if this holds, $10n^{1-c} \leq m_2 \leq 2n/\epsilon^4$. This means that we have reduced our graph to a graph $G_2$ with property \ref{assumption:linearedges} with multiplicative error $(1 \pm 12\epsilon^2)$ and failure probability $2\epsilon^2$.

\paragraph{Step 2. Balancing high-degree vertices.}
Next, independently flip the orientation of each edge of $G_2$ with probability $3\epsilon^2$, and let $G_3$ denote the resulting graph. For a vertex $v$. In $G_3$, let $X_v$ denote these incoming such neighbors. Then $\mathbb{E}[X_v]\ge 3\epsilon^2 d(v)$, but at most $d(v)$. By a Chernoff bound (\Cref{lem:add-chernoff}), we have
\[
\Pr \paren*{X_v < 2\epsilon^2d(v)} \leq 2\exp \paren*{-\frac{\epsilon^4d(v)^2}{3d(v)}}.
\]

Hence if $d(v)\ge n^{2q}$, then with high probability both the in-degrees and out-degrees of $v$ are at least $2\epsilon^2d(v)$. A union bound implies that all high-degree vertices have each direction balanced with failure probability at most $4nk\exp\paren*{-\frac{2\epsilon^4}{3}n^{2q}} \leq \epsilon^2$. Now, let $X$ be the total number of flipped edges, we see that $\Ex[X] = 3\epsilon^2m_2$, and by \Cref{lem:add-chernoff}, 
\[
\Pr \paren*{X > 4\epsilon^2 m_2} \leq 2\exp\paren*{-\frac{\epsilon^4m_2^2}{3\cdot3\epsilon^2m_2}} \leq \epsilon^2.
\]
This implies that with high probability, $\mathrm{OPT}_3 \in (1 \pm 16\epsilon^2)\mathrm{OPT}_2$ ($\mathrm{OPT}_2 \geq m_2/4$). Thus, we have reduced our graph $G_2$ to a graph $G_3$ with properties \ref{assumption:linearedges} and \ref{assumption:balanced-in-out} with multiplicative error $(1 \pm 16\epsilon^2)$ and failure probability $2\epsilon^2$.

\paragraph{Step 3. Coloring the Graph.} 

We use $2r$-independent uniform hash functions to obtain the coloring, where $r = O_{\epsilon}(1)$ is specified shortly.
That is, starting from $G_3$, create a colored graph $G'$ by coloring each vertex randomly with a color $\chi(v)$ from $1$ to $k$ and removing all edges $(u,v)$ where $\chi(u) = \chi(v)$, making it a proper coloring. 
We remark that storing the coloring takes $O(r \log n)$ space.
For a vertex $v$ with $d(v) \geq n^{2q}$, let $d^-(v)$ denote the number of incoming edges, and let $d_a^-(v)$ denote the number of incoming neighboring vertices of color $a$ (all degrees are considered before removing the violating edges). To obtain \ref{assumption:balanced-colors}, we use the Markov bound on the $2r$-th moment of $d_a^-(v)$, and then use the union bound over all vertices and colors to show the assumption holds.\footnote{One could more generally use Chernoff-like bounds for variables with limited dependence \cite{SchmidtSS95,Skorski22}.}

As each incoming neighbor takes color $a$ with probability $\frac{1}{k}$, it holds that $\Ex[d_a^-(v)] = \indeg(v)/k$.
Since the colors of the vertices are $2r$-independent, the $2r$-th central moment of $d_a^-(v)$ is the same as that of a binomial distribution with $d^-(v)$ trials and $1/k$ success probability.
That is:
$$
\Ex\left[(d^-_a(v) - d^-(v) / k)^{2r}\right] = (2r - 1)!! \sigma^{2r} + O((d^-(v))^{r-1})
= O_{r, \epsilon}(1) \cdot (d^-(v))^r,
$$
where $(2r-1)!! = \frac{(2r)!}{2^rr!}$ denotes the product of odd integers up to $2r - 1$, and $\sigma^2 = d^-(v) \cdot \frac{1}{k} \cdot(1 - \frac{1}{k})$.
Applying the Markov bound to the $2r$-th moment gives us the following bound:
\begin{align}
\Pr \paren*{\abs*{d_a^-(v)-d^-(v)/k} > \delta_0d^-(v)/k}
&\leq \frac{\Ex\left[(d^-_a(v) - d^-(v) / k)^{2r}\right]}{(\delta_0d^-(v)/k)^{2r}} \notag \\
&\leq O_{r, \epsilon}(1) \cdot \big(d^-(v)\big)^{-r} \notag \\
&\leq O_{r, \epsilon}(1) \cdot n^{-2qr} \label{eq:color-balance},
\end{align}
where the last inequality follows from \ref{assumption:balanced-in-out}, since $d(v) \geq n^{2q}$.
Therefore, applying the union bound to all vertices $v$ with $d(v) \geq n^{2q}$ and all colors, implies that $\abs*{d_a^-(v)-d^-(v)/k} \leq \delta_0d^-(v)/k$ holds for all of them with probability $O_{\epsilon, r}(1) \cdot  n^{1-2qr}$.
Hence, letting $r = 10/q = O_{\epsilon}(1)$, it holds with high probability.


To show \ref{assumption:balanced-colors} is satisfied, we must adapt this bound to one that uses our new degrees after edges are removed. We see that $d'^-(v) = d^-(v)-d_{\chi(v)}^-(v)$ which changes our bound by a constant factor. Thus, our new estimate is $d'^-(v)/(k-1)$, and we double the error on our new bound for simplicity, giving $\abs*{d'^-_a(v)-d'^-(v)/(k-1)} \leq \frac{2\delta_0 d'^-(v)}{k-1} \leq \frac{2\delta_0 \epsilon^2d'(v)}{k-1} \leq \frac{\delta_0d'(v)}{k-1}$. This implies that $G'$ has property \ref{assumption:balanced-colors}, as well as preserving properties \ref{assumption:linearedges} and \ref{assumption:balanced-in-out}. 

Finally, to show that the \maxdicut{} is preserved, we must prove that only a few edges are removed in this transformation (i.e.\ only a few edges violate the random coloring).
For an edge $e$, let $Y_e$ be the indicator variable that it is removed.
Since the coloring is uniform and 2-independent, it holds that $\Ex[Y_e] = 1/k$.
Let the total number of removed edges be $Y = \sum_{e\in E(G_3)} Y_e$.
Since the value of $Y_e$ is determined by the color of its endpoints, having $2r \geq 4$ implies that the indicator variables for the edges are also 2-independent.
Therefore, the variance of $Y$ can be bounded by $m_2 \frac{1}{k}(1-\frac{1}{k})$. Applying the Chebyshev bound, we get:
\[
\Pr \paren*{\abs*{Y-m_2/k} > m_2/k} \leq \frac{\Var(Y)}{(m_2/k)^2}
\leq \frac{k(1 - \frac{1}{k})}{n^{1 - c}} \leq \epsilon^2.
\]

Thus, let $\mathrm{OPT}_{G'}$ be the maximum directed cut in $G'$. If the bound above does not fail, then we know that $\mathrm{OPT}_{G'} \in \paren*{1 \pm \frac{8}{k}}\mathrm{OPT}_3 \subseteq \paren*{1 \pm 8\epsilon^2}\mathrm{OPT}_3$. Thus, we have reduced $G_3$ to $G'$ with properties \ref{assumption:linearedges}, \ref{assumption:balanced-in-out}, and \ref{assumption:balanced-colors}, multiplicative error $\paren*{1 \pm 8\epsilon^2}$ and failure probability $2\epsilon^2$.

To conclude the proof, we see that our total multiplicative error is $\paren*{1 \pm 12\epsilon^2}\cdot\paren*{1 \pm 8\epsilon^2}\cdot\paren*{1 \pm 16\epsilon^2} \subseteq \paren*{1\pm 37\epsilon^2}$. We now use our assumption that there is an algorithm that gives a $\paren*{1/2-17\epsilon^2}$ approximation of \maxdicut{} with failure probability $\epsilon^2$. Our reduction implies that we can return a value $\Out$ in the following range.
\[
(1-37\epsilon^2)\cdot\paren*{1/2-17\epsilon^2}\cdot\opt \leq \Out \leq (1+37\epsilon^2)\opt.
\]
Thus, we can divide our $\Out$ by $(1+37\epsilon^2)$ to get a value in the range
\[
\paren*{1/2-\epsilon}\cdot\opt \leq \Out \leq \opt,
\]
with probability $1-\epsilon$. This concludes the proof.
\end{proof}

Thus, for the remainder of the proof, we can assume that we are given a colored graph $G'$ with assumptions \ref{assumption:linearedges}, \ref{assumption:balanced-in-out}, and \ref{assumption:balanced-colors}.


\section{An Offline Algorithm for Estimating the Max Dicut Value}

We begin by recalling an offline algorithm of \cite[Algorithm 4]{saxena-SODA25} that is deterministic and in the general non-streaming setting to define the fractional cut position values ($\pos$). Though we would like to use this algorithm as a black box, we will make some modifications later, so an exact formulation of the algorithm and its prerequisites are required here. Despite this, we leave the more detailed description to \cite{saxena-SODA25} as we claim no novelty regarding this formulation.

First, we define a partition of the edges adjacent to a vertex $v \in V$ as follows:
\[
\begin{aligned}
\IL(v) &\coloneqq \set*{ (u, v) \in E \mid \chi(u) < \chi(v) },
&\hspace{1cm} \OL(v) &\coloneqq \set*{ (v, u) \in E \mid \chi(u) < \chi(v) } , \\
\IH(v) &\coloneqq \set*{ (u, v) \in E \mid \chi(u) > \chi(v) }, &\hspace{1cm} \OH(v) &\coloneqq \set*{ (v, u) \in E \mid \chi(u) > \chi(v) }. 
\end{aligned}
\]

Next, for all $v \in V$, we define:
\begin{equation}\label{eq:yinout-def}
\begin{aligned}
\yin(v) &\coloneqq \max\paren*{ \abs*{ \IH(v) }, \alpha \cdot \abs*{ \IL(v) } }, \\
\yout(v) &\coloneqq \max\paren*{ \abs*{ \OH(v) }, \alpha \cdot \abs*{ \OL(v) }}.
\end{aligned}
\end{equation}

The intuition behind these definitions is, in almost all cases $\yin(v) = \abs*{\IH(v)}$, and $\yout(v) = \abs*{\OH(v)}$. The reason we add this $\alpha$ term is so that we can get a lower-bound for $\yin(v) + \yout(v)$ in terms of $\abs*{\IL(v)}+\abs*{\OL(v)}$, which we have in \Cref{eq:yinout-simple}. This will help us in the analysis because, as we will see in our algorithm, the size of $\zli(v)$ and $\zlo(v)$ scale corresponding to $\abs*{\IL(v)}+\abs*{\OL(v)}$.

We may assume that $\yin(v) + \yout(v) > 0$, since we never need to compute the position of isolated vertices (and their position could be set arbitrarily).
Thus, we know the following basic identities about $\yin(v)$ and $\yout(v)$
\begin{equation}\label{eq:yinout-simple}
\begin{aligned}
    0 < \yin(v) + \yout(v) &\leq d(v), \\
    \alpha\cdot\paren*{\abs*{\IL(v)}+\abs*{\OL(v)}} &\leq \yin(v) + \yout(v).
\end{aligned}
\end{equation}

Now we can describe the algorithm. For a directed graph $G'=(V,E)$ with proper k-coloring $\chi:V \to [k]$ of $G$, the function $\pos : V \to [0,1]$ is defined recursively according to the algorithm described below.

\begin{algorithm}[H]
    \caption{A recursive, deterministic procedure to define the fractional cut $\pos$}
    \label{alg:pos-def}
    \KwIn{Graph $G'=(V,E)$, a proper k-coloring $\chi:V \to [k]$ of $G$, and a vertex $v \in V$}
    \KwOut{The assignment $\pos(v) \in [0,1]$}
    
    \DontPrintSemicolon
    \SetAlgoSkip{bigskip}
    \SetAlgoInsideSkip{}

    Compute Recursively:

    \[
    \zli (v) = \displaystyle \sum_{(u,v) \in \IL(v)}
    \pos(u), 
    \hspace{1cm}
    \zlo (v) = \displaystyle \sum_{(v,u) \in \OL(v)}
    (1-\pos(u)).
    \]

    Output:
    $$
    \pos(v) = \clamp{\frac{\yout(v) + \zlo(v) - \zli(v)}{\yin(v)+\yout(v)}}
    $$

\end{algorithm}

Now that the algorithm is defined, we can use the following lemma from \cite{saxena-SODA25}\footnote{Lemma 7.9 in the paper.} of its correctness without proof.

\begin{lemma}\label{lem:half-apx-valGpos}
Let $k > 0$, $G = \paren*{ V, E }$ be a graph, and $\chi : V \to [k]$ be a proper coloring of $G$. Let $\alpha \geq 0$ be arbitrary and $\pos : V \to [0,1]$ be as defined by \cref{alg:pos-def}. We have:
\[
\paren*{ \frac{1}{2} - \alpha } \cdot \maxval{G} \leq \val{G}{\pos} \leq \maxval{G} .
\]
\end{lemma}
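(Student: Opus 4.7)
The upper bound $\val{G}{\pos} \leq \maxval{G}$ is immediate from the proposition stated in the Preliminaries: $\pos$ is a fractional assignment $V \to [0,1]$, the proposition then produces a boolean $g'$ with $\val{G}{\pos} \leq \val{G}{g'}$, and $\val{G}{g'} \leq \maxval{G}$ by definition. So the work is in the lower bound.

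For the lower bound $\val{G}{\pos} \geq (\tfrac{1}{2}-\alpha)\maxval{G}$, the plan is a hybrid/exchange argument in the spirit of the Buchbinder--Feldman--Naor--Schwartz analysis of unconstrained submodular maximization, adapted to this greedy-by-color algorithm. Fix an optimal boolean $f^\star:V\to\{0,1\}$ with $\val{G}{f^\star}=\maxval{G}$ and define hybrids $h_0=f^\star,h_1,\dots,h_k=\pos$ by $h_a(v)=\pos(v)$ if $\chi(v)\leq a$ and $h_a(v)=f^\star(v)$ otherwise. Because $\chi$ is proper, color-$a$ vertices are mutually non-adjacent, so the step from $h_{a-1}$ to $h_a$ decomposes cleanly into per-vertex pieces and we obtain
\[
\maxval{G}-\val{G}{\pos}\;=\;\frac{1}{\card{E}}\sum_{v\in V}\paren*{g_v(f^\star(v))-g_v(\pos(v))},
\]
where $g_v(x)=(1-x)(\zli(v)+A_v)+x(\zlo(v)+B_v)$ with $A_v:=\sum_{(u,v)\in \IH(v)}f^\star(u)$ and $B_v:=\sum_{(v,u)\in \OH(v)}(1-f^\star(u))$ collecting the (fixed) higher-color contributions. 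Crucially, $A_v\leq\card{\IH(v)}\leq\yin(v)$ and $B_v\leq\card{\OH(v)}\leq\yout(v)$.

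The heart of the argument is a per-vertex inequality of the form
\[
g_v(f^\star(v))-g_v(\pos(v))\;\leq\;(1-\pos(v))\zli(v)+\pos(v)\zlo(v)\;+\;O(\alpha)\paren*{\card{\IL(v)}+\card{\OL(v)}}.
\]
Once this is established, summing over $v$ yields the theorem easily: the first two terms on the right are exactly $v$'s contribution to $\val{G}{\pos}$ from the edges in $\IL(v)\cup\OL(v)$ it ``owns'' as their higher-color endpoint, so they sum to $\card{E}\cdot\val{G}{\pos}$, and since every edge is owned by exactly one vertex, $\sum_v(\card{\IL(v)}+\card{\OL(v)})=\card{E}$. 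This gives $\maxval{G}-\val{G}{\pos}\leq\val{G}{\pos}+O(\alpha)$, and $\maxval{G}\geq 1/4$ (from the trivial half/half fractional cut) lets us absorb the additive $O(\alpha)$ into a multiplicative $O(\alpha)\maxval{G}$ term, producing $\val{G}{\pos}\geq(\tfrac{1}{2}-\alpha)\maxval{G}$ after tuning the implicit constant.

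The main obstacle is proving the per-vertex inequality itself, which is a case analysis on the clamp in \cref{alg:pos-def}. The substantive case is the unclamped one $\pos(v)\in(0,1)$: there the definition of $\pos(v)$ is equivalent to the balance identity $\pos(v)\yin(v)+\zli(v)=(1-\pos(v))\yout(v)+\zlo(v)$, and plugging this into the (linear in $\pos(v)$) function $g_v$ together with $A_v\leq\yin(v)$ and $B_v\leq\yout(v)$ pushes $g_v(f^\star(v))-g_v(\pos(v))$ below the required bound; the $O(\alpha)(\card{\IL(v)}+\card{\OL(v)})$ slack is exactly what is needed when the $\max$ defining $\yin(v)$ or $\yout(v)$ is realized by the $\alpha\card{\IL(v)}$ or $\alpha\card{\OL(v)}$ fallback instead of by $\card{\IH(v)}$ or $\card{\OH(v)}$, and captures the loss the algorithm incurs by conservatively inflating its surrogate for the unknown higher-color contributions. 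The clamped cases $\pos(v)\in\{0,1\}$ then reduce to the observation that the clamp only triggers when the clamped endpoint is already favorable for $g_v$ up to the same slack, and are handled by substituting the inequality witnessing the clamp into $g_v(1)-g_v(0)$.
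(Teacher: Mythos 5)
The paper does not actually prove this lemma—it is imported verbatim from \cite{saxena-SODA25} (their Lemma~7.9)—so your attempt has to stand on its own, and it does not: the central per-vertex inequality is false. Your hybrid decomposition is fine (the upper bound, the identity $\maxval{G}-\val{G}{\pos}=\frac{1}{\card{E}}\sum_{v}\paren*{g_v(f^\star(v))-g_v(\pos(v))}$, and the bookkeeping $\sum_v\paren*{(1-\pos(v))\zli(v)+\pos(v)\zlo(v)}=\card{E}\cdot\val{G}{\pos}$, $\sum_v\paren*{\card{\IL(v)}+\card{\OL(v)}}=\card{E}$ are all correct). But consider a color-$1$ vertex $v$ with $\card{\IL(v)}=\card{\OL(v)}=0$ and $\card{\IH(v)}=\card{\OH(v)}=D$, e.g.\ the graph consisting of $D$ in-neighbors and $D$ out-neighbors of $v$, all of color $2$. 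Then $\zli(v)=\zlo(v)=0$, $\yin(v)=\yout(v)=D$, and $\pos(v)=1/2$ (unclamped, and $A_v\le\yin(v)$, $B_v\le\yout(v)$ hold). Taking the optimum $f^\star$ with $f^\star(v)=1$ and all neighbors at $0$ gives $A_v=0$, $B_v=D$, hence $g_v(f^\star(v))-g_v(\pos(v))=D/2$, while your claimed bound $(1-\pos(v))\zli(v)+\pos(v)\zlo(v)+O(\alpha)\paren*{\card{\IL(v)}+\card{\OL(v)}}$ equals $0$. So the "substantive unclamped case" argument you sketch cannot go through: the loss at $v$ is not chargeable to $v$'s own lower-color credit at all.

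The missing idea is that the loss incurred when a vertex is fixed must be charged in part against its edges to \emph{higher}-colored (still undecided) vertices—whose cut value your scheme credits exclusively to the other endpoint. In the example the aggregate bound survives only because the color-$2$ neighbors have huge slack that covers $v$'s deficit, a cross-vertex transfer your per-vertex inequality forbids. This is precisely what the Buchbinder--Feldman--Naor--Schwartz double-greedy accounting (which \cite{saxena-SODA25} adapts) does differently: the per-step loss against the hybrid optimum is bounded by the combined gain of \emph{two} maintained solutions (undecided vertices set to $0$ and to $1$), and those gains contain the $\yin/\yout$-type terms for higher-color neighbors; the factor $1/2$ then comes from telescoping this two-sided potential, not from the identity that lower-color credits sum to $\card{E}\cdot\val{G}{\pos}$. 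A secondary issue: even if your per-vertex bound held with slack $O(\alpha)\paren*{\card{\IL(v)}+\card{\OL(v)}}$, absorbing it via $\maxval{G}\ge 1/4$ yields $(1/2-C\alpha)\maxval{G}$ for some constant $C>1$, whereas the lemma asserts exactly $(1/2-\alpha)\maxval{G}$ for the given algorithm parameter $\alpha$, so "tuning the implicit constant" is not available.
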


\subsection{Modification for high-degree vertices}

We will now give an estimate to this formula for high degree vertices which makes use of the assumption that our coloring of the vertices is random. 

First, we define $\zb(v)$ as the average position value of the lower-colored neighbors (if there are no lower-colored neighbors, we say $\zb(v) = 0$).
\begin{equation}\label{eq:zb-def}
    \zb(v) = \frac{1}{\abs*{\IL(v)} + \abs*{\OL(v)}} \left( \displaystyle \sum_{(u,v) \in \IL(v)} \pos(u) + \displaystyle \sum_{(v,u) \in \OL(v)} \pos(u) \right).
\end{equation}

Next, we can define the estimate $\tpos(\cdot)$, we note that this is still an offline estimate and we are not yet in the streaming setting.

\begin{equation}\label{eq:tpos-def}
    \tpos(v) =
    \altclamp
    \left(
    \begin{cases}
            \frac{\alpha+1}{\alpha} \cdot\frac{\outdeg(v)}{d(v)} - \frac{\zb(v)}{\alpha} & \text{if } \chi(v) = k \\ 
            \frac{k-1}{k-\chi(v)} \cdot \frac{\outdeg(v)}{d(v)} - \frac{\chi(v)-1}{k-\chi(v)} \cdot \zb(v) & \text{if } \chi(v) \neq k
    \end{cases}
    \right).
\end{equation}

Now, we prove that the estimate is accurate for high-degree vertices.

\begin{lemma}\label{lem:pos-tpos}
    For a vertex $v$ with $d(v) \geq n^{2q}$,
    \[
    \abs*{\pos(v)-\tpos(v)} \leq 3\delta_0 k^2.
    \]
\end{lemma}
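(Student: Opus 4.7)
My plan is to split on whether $\chi(v) = k$. The case $\chi(v) = k$ is an identity with zero error, and the case $\chi(v) < k$ uses \ref{assumption:balanced-colors} to show that the pre-clamp formulas defining $\pos(v)$ and $\tpos(v)$ agree up to error $O(\delta_0 k^2)$; the $1$-Lipschitz property of $\clamp$ then transfers the bound through the clamp on both sides.

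First I would rewrite $\pos(v)$ in a form that exposes $\zb(v)$. Writing out $\zlo(v) - \zli(v)$ and using \cref{eq:zb-def}, one gets $\zlo(v) - \zli(v) = |\OL(v)| - (|\IL(v)| + |\OL(v)|)\,\zb(v)$, so
\[
\pos(v) = \clamp{\frac{\yout(v) + |\OL(v)| - (|\IL(v)| + |\OL(v)|)\,\zb(v)}{\yin(v) + \yout(v)}}.
\]
When $\chi(v) = k$, every other color is lower, so $|\IH(v)| = |\OH(v)| = 0$, $|\IL(v)| = d^-(v)$, $|\OL(v)| = d^+(v)$, $\yin(v) = \alpha d^-(v)$, and $\yout(v) = \alpha d^+(v)$ all hold exactly (this uses only that the degrees are nonzero, which follows from \ref{assumption:balanced-in-out}). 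Substituting and dividing top and bottom by $\alpha d(v)$ collapses the expression identically to $\tpos(v)$, yielding $|\pos(v) - \tpos(v)| = 0$ in this case; no balancedness hypothesis is invoked.

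For $\chi(v) = a < k$, I would apply \ref{assumption:balanced-colors} to each color $b \neq a$ in each direction, and sum over $b > a$ to conclude that $|\IH(v)|$ lies within $(k-a)\delta_0 d(v)/(k-1) \le \delta_0 d(v)$ of $\tfrac{k-a}{k-1} d^-(v)$, and similarly for $|\OH(v)|,\, |\IL(v)|,\, |\OL(v)|$. Since the constants table sets $\alpha = \epsilon^4$ while $1/(k-1) = \Theta(\epsilon^2)$ and $|\IL(v)| \le d^-(v)$, a short calculation gives $|\IH(v)| > \alpha |\IL(v)|$ and $|\OH(v)| > \alpha |\OL(v)|$, so $\yin(v) = |\IH(v)|$ and $\yout(v) = |\OH(v)|$. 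Using $|\OH(v)| + |\OL(v)| = d^+(v)$, the pre-clamp value of $\pos(v)$ can be written as $N/D$ with $N = d^+(v) - (|\IL(v)| + |\OL(v)|)\zb(v)$ and $D = |\IH(v)| + |\OH(v)|$, while the pre-clamp value of $\tpos(v)$ is $\tilde N / \tilde D$ with $\tilde N = d^+(v) - \tfrac{a-1}{k-1} d(v)\zb(v)$ and $\tilde D = \tfrac{k-a}{k-1} d(v)$.

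The approximations above together with $|\zb(v)| \le 1$ give $|N - \tilde N|,\, |D - \tilde D| \le 2\delta_0 d(v)$. Since $\tilde D \ge d(v)/(k-1)$ and $\delta_0$ is much smaller than any inverse polynomial in $k$, we get $D \ge \tilde D/2$. Expanding
\[
\frac{N}{D} - \frac{\tilde N}{\tilde D} = \frac{N - \tilde N}{D} - \frac{\tilde N\,(D - \tilde D)}{D \tilde D}
\]
and using the crude bound $|\tilde N|/\tilde D \le 2k/(k-a) \le 2k$, the difference is at most $O(\delta_0 k^2)$; careful tracking of the constants keeps it within $3\delta_0 k^2$. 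The $1$-Lipschitz property of $\clamp$ then finishes the bound. No deep obstacle arises; the one subtlety worth flagging is the verification that the $\max$ defining $\yin(v)$ (resp.\ $\yout(v)$) selects the $|\IH(v)|$ (resp.\ $|\OH(v)|$) term rather than the $\alpha$-scaled low-color term, which is precisely why the parameters are tuned so that $\alpha \ll 1/k$.
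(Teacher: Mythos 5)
Your proposal is correct and follows essentially the same route as the paper's proof: the exact identity in the $\chi(v)=k$ case via rewriting $\zlo(v)-\zli(v)$ in terms of $\zb(v)$, and for $\chi(v)<k$ the use of \ref{assumption:balanced-colors} (with \ref{assumption:balanced-in-out} implicitly ensuring the $\max$ in $\yin,\yout$ resolves to the high-color terms) to compare the pre-clamp fractions, finished by the $1$-Lipschitzness of $\clamp{\cdot}$. The only cosmetic differences are that the paper approximates $\yin(v),\yout(v)$ rather than identifying them exactly with $\abs{\IH(v)},\abs{\OH(v)}$, and it carries out the constant tracking explicitly (using $\abs{\tilde N}/\tilde D\le k-1$ rather than your crude $2k$) to land within $3\delta_0 k^2$, which your sketch correctly anticipates is achievable.
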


\begin{proof}
    The first case just takes some manipulation of the formula for $\pos(v)$. We begin with the term $\zlo(v)-\zli(v)$, which we can rearrange and substitute using \cref{eq:zb-def}.
    \begin{equation}\label{eq:zlozli-rearranged}
    \begin{aligned}
        \zlo(v) - \zli(v) &= \displaystyle \sum_{(v,u) \in \OL(v)} (1-\pos (u)) - \displaystyle \sum_{(u,v) \in \IL(v)} \pos (u) \\
        &= |\OL(v)| - \left( \displaystyle \sum_{(v,u) \in \OL(v)} \pos (u) + \displaystyle \sum_{(u,v) \in \IL(v)} \pos (u) \right) \\
        &= |\OL(v)| - \zb(v) \cdot (|\IL(v)| + |\OL(v)|). \\
    \end{aligned}
    \end{equation}
    
    Since $\chi(v) = k$, we know that $\IH(v) =\OH(v) = \emptyset$, thus $d(v) = |\IL(v)| + |\OL(v)|$, and $\outdeg(v) = |\OL(v)|$. By this and \cref{eq:yinout-def}, we can see that
    \[
    \begin{aligned}
        \frac{\yout(v) + \zlo(v) - \zli(v)}{\yin(v)+\yout(v)} &= \frac{\alpha \abs*{\OL(v)} + |\OL(v)| - \zb(v) \cdot (|\IL(v)| + |\OL(v)|)}{\alpha \abs*{\OL(v)} + \alpha \abs*{\IL(v)}  } \\
        &= \frac{\alpha+1}{\alpha} \cdot\frac{\outdeg(v)}{d(v)} - \frac{\zb(v)}{\alpha}.
    \end{aligned}
    \]
    Thus, when $\chi(v) = k$, we have $\pos(v) = \tpos(v)$ with probability 1, proving this case.
    
    For the second case where $\chi(v) \neq k$, we will prove this by giving an estimate for the quantities $\abs{\IL(v)}$, $\abs{\OL(v)}$, $\abs{\IH(v)}$, and $\abs{\OH(v)}$ that makes use of the random coloring as mentioned earlier.
    
    \begin{claim}\label{claim:deg-est-bounds}
        For any vertex $v$,
        the following 4 identities hold:

        \[
        \begin{aligned}
            \abs*{\abs{\IL(v)}-\indeg(v) \cdot \frac{\chi(v)-1}{k-1}} &\leq \delta_0 d(v),\\
            \abs*{\abs{\OL(v)}-\outdeg(v) \cdot \frac{\chi(v)-1}{k-1}} &\leq \delta_0 d(v),\\
            \abs*{\abs{\IH(v)}-\indeg(v) \cdot \frac{k-\chi(v)}{k-1}} &\leq \delta_0 d(v),\\
            \abs*{\abs{\OH(v)}-\outdeg(v) \cdot \frac{k-\chi(v)}{k-1}} &\leq \delta_0 d(v).\\
        \end{aligned}
        \]
    \end{claim}
    \begin{proof}
        The proof is a simple triangle inequality from \ref{assumption:balanced-colors}. We see that for a given color $\chi(v)$, there are $\chi(v)-1$ colors less than it, and $k-\chi(v)$ colors greater than it. Since $\abs*{\IL(v)} = \sum_{i=1}^{\chi(v)-1} d_a^-(v)$, we can apply \ref{assumption:balanced-colors}. The others work equivalently.
    \end{proof}
    
    We can now bound some relevant terms. First, we bound $\yin(v)$ and $\yout(v)$. We can use the conditions above and \cref{eq:yinout-def} to see that:
    \begin{equation}\label{yin-highdeg-sub}
        \abs*{\yin(v) - \max \paren*{\indeg(v) \cdot \frac{k-\chi(v)}{k-1}, \alpha \cdot\indeg(v) \cdot \frac{\chi(v)-1}{k-1}}} \leq \delta_0 d(v).
    \end{equation}
    We see that we can factor out a $\indeg(v)$ from the $\max$, and as long as $\chi(v) \neq k$, the first term will always dominate as $\alpha < 1/{k-1}$. This similarly happens for $\yout(v)$ meaning that we have the following 
    \begin{equation}\label{eq:yinout-bound}
    \begin{aligned}
        \abs*{\yin(v)- \indeg(v) \cdot \frac{k-\chi(v)}{k-1}} &\leq \delta_0 d(v), \\
        \abs*{\yout(v)- \outdeg(v) \cdot \frac{k-\chi(v)}{k-1}} &\leq \delta_0 d(v).
    \end{aligned}
    \end{equation}
    And likewise their sum as $\indeg(v) + \outdeg(v) = d(v)$:
    \begin{equation}\label{eq:yinout-sum-bound}
        \abs*{(\yin(v)+\yout(v)) - d(v) \cdot \frac{k-\chi(v)}{k-1}} \leq 2\delta_0 d(v).
    \end{equation}
    
    Now we bound the term $\zlo(v) - \zli(v)$. Again, use the bounds from \cref{claim:deg-est-bounds} that we have conditioned on to substitute each degree estimate into \cref{eq:zlozli-rearranged}. Each of the three substitutions adds an error of $\delta_0 d(v)$. Thus, by triangle inequality and the fact that $\zb(v) \in [0,1]$, we have a total error of less than $3 \delta_0 d(v)$:
    \[
    \abs*{\paren*{\zlo(v) - \zli(v)} - \paren*{\outdeg(v) \cdot \frac{\chi(v)-1}{k-1} - \zb(v) \cdot \paren*{\indeg(v) \cdot \frac{\chi(v)-1}{k-1} + \outdeg(v) \cdot \frac{\chi(v)-1}{k-1}}}} \leq 3\delta_0 d(v),
    \]
    and since $\indeg(v) + \outdeg(v) = d(v)$,
    \begin{equation}\label{eq:zlo-zli-bound}
        \abs*{\paren*{\zlo(v) - \zli(v)} - \paren*{\outdeg(v) \cdot \frac{\chi(v)-1}{k-1} - \zb(v) \cdot d(v) \cdot \frac{\chi(v)-1}{k-1}}} \leq 3\delta_0 d(v).
    \end{equation}
    
    Now, we have bounded additive errors for each term in the formula for $\pos(v)$. For ease of notation, we can express the numerator and denominator for $\pos$ as $X$ and $Y$, and we can do the same for our new substituted estimates as $X'$ and $Y'$, respectively.
    \[
    \begin{aligned}
        X &= \yout(v) + \zlo(v) - \zli(v), \\
        Y &= \yin(v) + \yout(v), \\
        X' &= \outdeg(v) \cdot \frac{k-\chi(v)}{k-1} + \outdeg(v) \cdot \frac{\chi(v)-1}{k-1} - \zb \cdot d(v) \cdot \frac{\chi(v)-1}{k-1}, \\
        Y' &= \indeg(v) \cdot \frac{k-\chi(v)}{k-1} + \outdeg(v) \cdot \frac{k-\chi(v)}{k-1} = d(v) \cdot \frac{k-\chi(v)}{k-1}.
    \end{aligned}
    \]
    As it will be useful later, from this we can also bound $\frac{d(v)}{Y'}$, assuming $\chi(v) \neq k$:
    \begin{equation}\label{eq:dYprime}
        \frac{d(v)}{Y'} = \frac{k-1}{k-\chi(v)} \leq k.
    \end{equation}
    
    We can simplify $\frac{X'}{Y'}$ and see that $\clamp{\frac{X'}{Y'}}$ is equal to $\tpos(v)$.
    \begin{flalign*}
        \clamp{\frac{X'}{Y'}} &= \clamp{\frac{\outdeg(v) \cdot \frac{k-\chi(v)}{k-1} + \outdeg(v) \cdot \frac{\chi(v)-1}{k-1} - \zb \cdot d(v) \cdot \frac{\chi(v)-1}{k-1}}{d(v) \cdot \frac{k-\chi(v)}{k-1}}} \\
        &= \clamp{\frac{k-1}{k-\chi(v)} \cdot \frac{\outdeg(v)}{d} - \zb \cdot \frac{\chi(v)-1}{k-\chi(v)}}\\
        &=  \tpos(v). \tag{By definition \cref{eq:tpos-def} as $\chi(v) \not= k$.)}
    \end{flalign*}
    From \cref{eq:yinout-bound}, \cref{eq:yinout-sum-bound}, and \cref{eq:zlo-zli-bound}, we know that:
    
    \begin{equation}\label{eq:XY-bounds}
    \begin{aligned}
        \abs*{X-X'} &\leq 4\delta_0 d(v), \\
        \abs*{Y-Y'} &\leq 2\delta_0 d(v).
    \end{aligned}
    \end{equation}
    Thus, we have:
    \begin{equation}\label{eq:pos-tpos-start}
    \begin{aligned}
        \abs*{\frac{X}{Y}-\frac{X'}{Y'}} &\leq \abs*{\frac{X}{Y}-\frac{X'}{Y}} + \abs*{\frac{X'}{Y}-\frac{X'}{Y'}} \leq \frac{4\delta_0 d(v)}{Y} + \abs{X'} \cdot \frac{\abs{Y'-Y}}{\abs{Y}\abs{Y'}} \leq \frac{4\delta_0 d(v)}{Y} + \abs*{\frac{X'}{Y'}} \cdot \frac{2\delta_0 d(v)}{Y}.
    \end{aligned}
    \end{equation}
    
    Now, we see that $\abs*{\frac{X'}{Y'}} \leq k$ by examining \cref{eq:tpos-def}: It is a difference of two terms each of which contains a term in the range $[0,1]$ (namely the terms $\frac{\outdeg(v)}{d(v)}$ and $\zb$) and a fractional term which is maximized when $\chi(v) = k-1$ (again, we assume that $\chi(v) \neq k$).
    
    In addition, we must bound $\frac{d(v)}{Y}$ which we do with \cref{eq:XY-bounds} and \cref{eq:dYprime}
    \[
    \frac{d(v)}{Y} \leq \frac{d(v)}{Y'-2\delta_0 d(v)} = \frac{d(v)/Y'}{1-2\delta_0 d(v)/Y'} \leq \frac{k}{1-2\delta_0 k}.
    \]
    We return to \cref{eq:pos-tpos-start}:
    \begin{equation}\label{eq:xy-final-bound}
        \abs*{\frac{X}{Y}-\frac{X'}{Y'}} \leq \frac{4\delta_0 k}{1-2\delta_0 k} + \frac{2\delta_0 k^2}{1- 2\delta_0 k} \leq 3 \delta_0 k^2.
    \end{equation}
    
    Now, we can use \cref{lem:clamp-basic} and substitute $\pos(v)$ and $\tpos(v)$ to conclude the proof of \cref{lem:pos-tpos}
    \[
    \abs*{\pos(v)-\tpos(v)} = \abs*{\clamp{\frac{X}{Y}}-\clamp{\frac{X'}{Y'}}} \leq \abs*{\frac{X}{Y}-\frac{X'}{Y'}} \leq 3 \delta_0 k^2. \qedhere
    \]
\end{proof}

\section{Streaming Adaptation of the Algorithm}

In this section, we describe the streaming algorithm we use to estimate \maxdicut{}. To describe this algorithm, we need a method called Reservoir sampling \cite{vitter1985random}. The goal of reservoir sampling is to uniformly sample a subset of a given size from an initial population in the streaming setting. We write the pseudocode in \Cref{alg:init-reservoir} and \Cref{alg:update-reservoir}.

\begin{algorithm}[H]
    \caption{InitializeReservoir} \label{alg:init-reservoir}
    \KwIn{Integer size $s$}
    \KwOut{Reservoir Data Structure $(S,\mathsf{count})$ containing an array of size $s$ and a counter $\mathsf{count}$}
    Initialize $S \gets $ array of size $s$, $\mathsf{count} \gets 1$. \\
    \KwRet{$(S,\mathsf{count})$}
\end{algorithm}

\begin{algorithm}[H]
    \caption{UpdateReservoir} \label{alg:update-reservoir}
    \KwIn{Reservoir $(S,\mathsf{count})$, and edge $e$}
    \KwOut{Updated Reservoir}
    \If{$\mathsf{count} \leq |S|$}{
        $S[\mathsf{count}] = e$
    }
    \Else{
        With probability $|S|/\mathsf{count}$, select a uniformly random index $i$ in $[1,|S|]$, $S[i] \gets e$.
    }
    $\mathsf{count} \gets \mathsf{count}+1$\\
    \KwRet{$(S,\mathsf{count})$}
\end{algorithm}

For the first $|S|$ updates, the algorithm adds elements to the sample without any condition. After that, the $i^{th}$ element is added to the sample with probability $\frac{|S|}{i}$, replacing one of the previously sampled elements at random. After $i$ steps, the array $S$ contains a uniformly random subset of all elements added. 

We also describe another auxiliary method called the Horvitz-Thompson estimator \cite{horvitz1952generalization} (\Cref{alg:HTAvg}). This is a method of producing an unbiased average over a set of values where each element may fail with some probability $p_i$. The idea is to take a weighted average over the elements that do not fail, normalizing by the probability that we see them. This causes the mean of our $\HTAvg$ to be the same as the true mean. 

\begin{algorithm}
    \caption{Horvitz-Thompson Estimator (HTAvg)}
    \label{alg:HTAvg}
    \KwIn{Takes a list $A$ of values and probabilities $(v_i,p_i)$ or $\bot$ and returns a value.}
    \KwOut{A number which is the Horvitz-Thompson Estimator of the inputs.}
    \If{$A = \emptyset$}{
        \KwRet{$0$}
    }
    \KwRet{\[
    \frac{1}{|A|}\sum_{i=1}^{|A|} v_i/p_i \text{ if $A_i \neq \bot$}
    \]} 
\end{algorithm}

Now we are ready to describe the streaming algorithm (\Cref{alg:streaming-alg}). We start by processing the stream and storing all the information we will need. Then, we have two functions $\VertexEstimator(\cdot)$ and $\EdgeEstimator(\cdot)$ which use this information to estimate the $\pos(\cdot)$ value of a vertex or edge. The streaming part of the algorithm has a few main functions. 

First, we independently sample vertices each with probability $n^{-c}$ from $[n]$\footnote{Our algorithm can be modified to a setting where the vertex set is not $[n]$, but comes from a larger domain $[N]$ where $N = \poly(n)$. To do so, it suffices to use $10\Tmax$-independent hash functions for sampling the vertices.} and add the sampled vertices to a set $W$. For each sampled vertex in $v \in W$, we store a few pieces of information. First, we store degree information for $v$ in the form of 4 counters $\abs*{\IL(v)}, \abs*{\OL(v)}, \abs*{\IH(v)},$ and $\abs*{\OH(v)}$. 

Second, we use reservoir sampling to uniformly sample $d$ incoming edges from $\IL(v)$, and another $d$ outgoing edges from $\OL(v)$. Both of these samplings are done with replacement, and they are stored in $\Rsvsin(v)$ and $\Rsvsout(v)$, respectively (typically reservoir sampling samples without replacement, but by having $d$ reservoirs of size 1, rather than 1 reservoir of size $d$ we can sample with replacement). This is all the information that we store for each sampled vertex in $W$. 

Independently, we also uniformly sample edges with probability $n^{-c}$ in a set $\B$, and we use reservoir sampling to independently sample another set $\C$ of exactly $n^{1-c}$ edges without replacement. The edges in $\B$ will be used to estimate our high-degree vertices, the information relevant to each individual vertex $v \in W$ will be used to estimate our low-degree vertices, and the edges in $\C$ will be used to obtain our final estimate of \maxdicut{}.

\begin{algorithm}
    \caption{Streaming algorithm}
    \label{alg:streaming-alg}
    \KwIn{Graph $G'=(V,E)$ with proper coloring $\chi(v)$ and properties \ref{assumption:linearedges}, \ref{assumption:balanced-in-out}, and \ref{assumption:balanced-colors}.}
    \KwOut{A value $\cutval$ that is an estimate of the maximum directed cut of $G'$} 
    
    \DontPrintSemicolon
    \SetAlgoSkip{bigskip}
    \SetAlgoInsideSkip{}

    \textbf{Pre-processing:}\; 
    
    \Indp
    Initialize Global Sets $W \gets \emptyset$, $\Rsvsin \gets \emptyset$, $\Rsvsout \gets \emptyset$, $\B \gets \emptyset$, $\C \gets \emptyset$ \;

    $\C \gets \textsf{InitializeReservoir}(n^{1-c})$.
    
    \For{every vertex $v \in [n]$}{
        Add $v$ to $W$ independently with probability $n^{-c}$. \\
        \If{$v \in W$}{
            Initialize $\Rsvsin(v), \Rsvsout(v)$ as arrays of size $d$.
            \For{$i$ in $[1,d]$}{
                $\Rsvsin(v)[i] \gets \textsf{InitializeReservoir}(1)$ \\
                $\Rsvsout(v)[i] \gets \textsf{InitializeReservoir}(1)$ \\ 
            }
        }
        If $v$ was added to $W$, then initialize counters for the degree information of $v$, $\abs*{\IL(v)}, \abs*{\OL(v)}, \abs*{\IH(v)}, \abs*{\OH(v)}$. During the stream, we do not store all of these edges, but we can get exact counts for these values with constant space. \label{line:vertex-deg-counters}
        
    }
    \Indm

    \textbf{Stream processing:}
    
    \For{every edge $(u,v)$ in the stream}{
        Add $e = (u, v)$ to $\B$ independently with probability $n^{-c}$\;
        $\textsf{UpdateReservoir}(\C,e)$ \\
        \If{$u \in W$}{
            Update degree counters: $\abs*{\IL(u)}, \abs*{\OL(u)}, \abs*{\IH(u)}, \abs*{\OH(u)}$ as needed. \\
            \If{$\chi(v) < \chi(u)$}{
                \For{$i$ in $[1,d]$}{
                    $\textsf{UpdateReservoir}(\Rsvsout(u)[i],e)$
                }
            }
        }
        \If{$v \in W$}{
            Update degree counters: $\abs*{\IL(v)}, \abs*{\OL(v)}, \abs*{\IH(v)}, \abs*{\OH(v)}$ as needed. \\
            \If{$\chi(u) < \chi(v)$}{
                \For{$i$ in $[1,d]$}{
                    $\textsf{UpdateReservoir}(\Rsvsin(v)[i],e)$
                }
            }
        }
    }
    $\C' \gets \emptyset$ \\
    \For{edge $e \in \C$}{
        \If{$\EdgeEstimator(e) = \bot$}{
            $\C'.\mathsf{append}(\bot)$
        }
        \Else{
            $(\Pest(e),\T(e)) \gets \EdgeEstimator(e)$ \\
            $\C'.\mathsf{append}((\Pest(e),n^{-c|\T(e)|}))$ \label{line:probability-tree-sampling}
        } 
    }
    
    \textbf{Post-processing:}
    
    $\cutval \gets \HTAvg(\C')$\\ \label{line:cutval-def}
    \Return{$\cutval$}
\end{algorithm}

After the stream processing, we call the $\EdgeEstimator(\cdot)$ function on our edges in $\C$ and average them with $\HTAvg(\cdot)$. Now, we describe how our estimators for vertices and edges work. The input to both algorithms is a single edge or vertex with which we plan to estimate its $\pos(\cdot)$ value as $\Pest(\cdot)$. The output, however, is either a pair of our estimate $\Pest(\cdot)$ and a tree $\T(\cdot)$, or simply $\bot$ in the case that we fail. When we get a return value of $(\Pest(\cdot),\T(\cdot))$ for an edge or vertex, this means that our estimate for the true $\pos(\cdot)$ value of that edge or vertex is $\Pest(\cdot)$, and this estimate's success was dependent on $\T(\cdot) \subseteq W$. Thus, since each of these vertices is in $W$ with probability $n^{-c}$, the probability of success is $n^{-c|\T(\cdot)|}$. This is what we pass to our $\HTAvg$.

We describe the edge estimator (\Cref{alg:edge-estimator}). 
For an edge $e = (u,v)$, simply call the vertex estimator on both endpoints. If either one fails, then our edge estimate also fails. If they both succeed, the position of our edge is $\Pest(e) = \Pest(u) \cdot (1-\Pest(v))$, and the result was dependent on the union of $\T(u)$ and $\T(v)$ for success.

\begin{algorithm}[H]
    \caption{Edge Estimator}
    \label{alg:edge-estimator}
    \KwIn{Global: Graph $G'=(V,E)$ with proper coloring $\chi: V \to [k]$ and properties \ref{assumption:linearedges}, \ref{assumption:balanced-in-out}, and \ref{assumption:balanced-colors}, set of vertices $W \subset V$, reservoir data structures $\Rsvsin$ and $\Rsvsout$, set of edges $\B \subset E$}
    \KwOut{$\bot$ or tuple of a position estimate $\Pest(u) \cdot(1-\Pest(v))$, and a set $\T(u,v)$} 
    \textbf{Parameter:} Edge $(u,v)$
    \DontPrintSemicolon
    \SetAlgoSkip{bigskip}
    \SetAlgoInsideSkip{}

    \If{$\VertexEstimator(u) = \bot$ or $\VertexEstimator(v) = \bot$}{
        \KwRet{$\bot$} \\
    }
    $\Pest(u),\T(u) \gets \VertexEstimator(u)$ \\
    $\Pest(v),\T(v) \gets \VertexEstimator(v)$ \\
    \KwRet{$(\Pest(u) \cdot(1-\Pest(v)), \T(u) \cup \T(v))$}

\end{algorithm}

We describe the vertex estimator algorithm (\Cref{alg:vertex-estimator}) which is a bit more complex. The algorithm has two cases for vertices of high and low degree. We determine if a vertex is high-degree using the number of $B$ edges that are adjacent to that vertex (i.e., using $\degB(v)$). If this value is greater than a given threshold based on the color of $v$ (specifically, $n^{q\cdot2^{\chi(v)}}$),  then we consider it high-degree. In this case, we estimate our position value with a formula similar to the formula for $\tpos(\cdot)$ that requires both the average position of lower-colored neighbors and the proportion of outgoing vs incoming adjacent edges to $v$. We can estimate the first by a Horvitz-Thompson average of neighbors along the $B$ edges ($\zbest(v)$), and the second is also easy to estimate using the $\B$ edges. We say that we can give an estimate of $\pos(v)$ for high-degree vertices with probability 1, strictly from the $\B$ edges, thus $\T(v) = \emptyset$ in this case.

In the case of a low-degree vertex $v$, we do not use the edges of $B$. 
Instead, we use the reservoir samples $\Rsvsin(v)$ and $\Rsvsout(v)$. We use a formula similar to $\pos(v)$ that requires degree information, and position estimates from each adjacent lower-colored vertex in $\Rsvsin(v)$ and $\Rsvsout(v)$. Thus, we only succeed if every neighboring vertex in $\Rsvsin(v) \cup \Rsvsout(v)$ succeeds, plus we need $v \in W$. In addition, we also add some dummy vertices to $\T(v)$, each of which fails with the same probability $n^{-c}$. These ensure that the size of $\T(v)$, and thereby the probability of failure, is dependent only on the color of $\chi(v)$. We make sure that $|\T(v)| = \T_{\chi(v)}$. This will help us decouple some of the randomness later on so we can deal with it separately.

Also, we remark that the notation $(u,v) \in \Rsvsin(v)$ is shorthand for enumerating over all edges in the \emph{array} $\Rsvsin(v)$ of size $d$, but we use set notation as it is convenient. In addition, in the low-degree case, our algorithm has access to $\yin(v),\yout(v)$ as we keep counters for $\abs*{\IL(v)}$, $\abs*{\OL(v)}$, $\abs*{\IH(v)}$, and $\abs*{\OH(v)}$ in \Cref{line:vertex-deg-counters} of our streaming algorithm. 

\begin{algorithm}
    \caption{Vertex Estimator}
    \label{alg:vertex-estimator}
    \KwIn{Global: Graph $G=(V,E)$ with coloring $\chi: V \to [k]$ and properties \ref{assumption:linearedges}, \ref{assumption:balanced-in-out}, and \ref{assumption:balanced-colors}, set of vertices $W \subset V$, reservoir data structures $\Rsvsin$ and $\Rsvsout$, set of edges $\B \subset E$}
    \KwOut{$\bot$ or tuple of a position estimate $\Pest(v)$, and a set $\T(v)$} 
    \textbf{Parameter:} Vertex $v$
    \DontPrintSemicolon
    \SetAlgoSkip{bigskip}
    \SetAlgoInsideSkip{}
    
    \If{$\degB(v) > n^{q\cdot2^{\chi(v)}}$}{ 
        \textbf{High Degree Case} \\
        Initialize $\Neighbors(v), \Neighbors'(v)$ as empty lists \\ 
        \For{Neighbor $u$ of $v$ along the edges in $\B$}{
            \If{$\chi(u) < \chi(v)$}{
                $\Neighbors(v)$.append($u$)\\
                \If{$\VertexEstimator(u) = \bot$}{
                    $\Neighbors'(v)$.append($\bot$)\\
                }
                \Else{
                    $(\Pest(u), \T(u)) = \VertexEstimator(u)$\\
                    $\Neighbors'(v)$.append($(\Pest(u), n^{-c|\T(u)|})$)\\
                }
            }
        }
        $\zbest(v) \gets \HTAvg(\Neighbors'(v))$ \\
        $
        \Pest(v) \gets \altclamp \left(
        \begin{cases}
            \frac{\alpha+1}{\alpha} \cdot \frac{\outdegB(v)}{\degB(v)} - \frac{1}{\alpha} \cdot \zbest(v) & \text{if } \chi(v) = k \\
            \frac{k-1}{k-\chi(v)} \cdot \frac{\outdegB(v)}{\degB(v)} - \frac{\chi(v)-1}{k-\chi(v)} \cdot \zbest(v) & \text{if } \chi(v) \neq k  
        \end{cases}
        \right) \label{line:Pest-def-high-deg}
        $\\
        $\T(v) \gets \emptyset$ \\
    }
    \Else {
        \textbf{Low Degree Case} \\
        \If{$v \notin W$}{
            \KwRet{$\bot$} \\
        }
        $\T(v) \gets \set*{v}$ \\
        \For{$(u,v) \in \Rsvsin(v)$ and $(v,u) \in \Rsvsout(v)$}{
            \If{$\VertexEstimator(u) = \bot$}{
                \KwRet{$\bot$} \\
            }
            $\Pest(u),\T(u) = \VertexEstimator(u)$ \\
            $\T(v) \gets \T(v) \cup \T(u)$ \\
        }
        \While{$|\T(v)| < \T_{\chi(v)}$} {
            Create Dummy Vertex $w$, flip coin so $w$ succeeds with probability $n^{-c}$, if so, add it to $W$. \\
            \If{$w \notin W$}{
                \KwRet{$\bot$}
            }
            $\T(v) \gets \T(v) \cup \{w\}$ \tcp*{This ensures $|\T(v)| = \T_{\chi(v)}$}
        }
        $
        \zliest (v) \gets \frac{\abs*{\IL(v)}}{d} \displaystyle \sum_{(u,v) \in \Rsvsin(v)}
        \Pest(u), 
        \hspace{0.5cm}
        \zloest (v) \gets \frac{\abs*{\OL(v)}}{d} \displaystyle \sum_{(v,u) \in \Rsvsout(v)}
        (1-\Pest(u)).
        $\\

        $
        \Pest(v) \gets \clamp{\frac{\yout(v) + \zloest(v) - \zliest(v)}{\yin(v)+\yout(v)}}
        $\\
    }
    \KwRet{$(\Pest(v), \T(v))$}
\end{algorithm}

Our algorithm has several independent sources of randomness that we will need to deal with separately. First, there is the randomness of the edges in $\B$ and in $\C$. Next, there is the randomness in the selections of all reservoirs $\Rsvsin(v)$ and $\Rsvsout(v)$. Lastly, there is the randomness of the sampled vertices in $W$. We deal with each source of randomness separately in \Cref{sec:high-deg-randomness}, \Cref{sec:p1-p2}, and \Cref{sec:p2-p3} respectively.

We now examine the space complexity. We see that by property \Cref{assumption:linearedges}, the total number of edges in $G'$ is at most $10(n/\epsilon^4)$. Each of these edges is sampled with probability $n^{-c}$ in $\B$ giving $\Ex[|\B|] \leq 10n^{1-c}/\epsilon^4 = O(n^c)$. By a Chernoff bound (\Cref{lem:mult-chernoff}), $\Pr \paren*{\abs*{\B} > 20n^{1-c}/\epsilon^4} \leq 2\exp\paren*{-\frac{4}{3}20n^{1-c}\epsilon^4} \leq \epsilon^3$. Similarly, we sample each vertex with probability $n^{-c}$. A similar Chernoff bound gives $\Pr \paren*{\abs*{W} > 2n^{1-c}} \leq \epsilon^3$, not counting the added dummy vertices. Each vertex adds at most constantly many dummy vertices ($\leq \Tmax$) and contributes constant space. Lastly, $|\C| = n^{1-c}$. Thus, with high probability, the algorithm uses $O(n^{1-c})$ space. We can simply fail in the case where either Chernoff bound above goes over the limit, giving us an added failure probability of $2\epsilon^3$.

\section{Revealing Randomness for High-Degree Vertices}\label{sec:high-deg-randomness}
The following claim states that, considering the randomness of $\B$, $d_{\B,a}^+(v) n^c$ and $d_{\B,a}^-(v) n^c$ provide accurate estimations of $d_a^+(v)$ and $d_a^-(v)$, with high probability for high-degree vertices.

\begin{claim}\label{claim:high-deg-bound-prob}
    For any vertex $v$ and color $a$, it holds that
    \[
    \begin{aligned}
        \Pr \paren*{\abs*{d_{\B,a}^+(v) - \outdeg_a(v) \cdot n^{-c}} \geq \delta_0 \cdot \outdeg_a(v) \cdot n^{-c}} &\leq 2 \exp \paren*{-\frac{\delta_0^2}{3} \cdot \outdeg_a(v) \cdot n^{-c}},\\
        \Pr \paren*{\abs*{d_{\B,a}^-(v) - \indeg_a(v) \cdot n^{-c}} \geq \delta_0 \cdot \indeg_a(v) \cdot n^{-c}} &\leq 2 \exp \paren*{-\frac{\delta_0^2}{3} \cdot \indeg_a(v) \cdot n^{-c}}.\\
    \end{aligned}
    \]
\end{claim}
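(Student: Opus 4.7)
The claim is a direct application of a multiplicative Chernoff bound to the sampling used to construct $\B$. I will handle the outgoing case; the incoming case is symmetric.

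The plan is to view $d_{\B,a}^+(v)$ as a sum of independent Bernoulli random variables. Recall from the streaming algorithm that each edge of $G'$ is placed in $\B$ independently with probability $n^{-c}$. Let $e_1,\dots,e_{\outdeg_a(v)}$ enumerate the outgoing edges from $v$ whose other endpoint has color $a$, and let $X_i$ be the indicator that $e_i \in \B$. Then
\[
d_{\B,a}^+(v) \;=\; \sum_{i=1}^{\outdeg_a(v)} X_i,
\]
and the $X_i$ are i.i.d.\ Bernoulli$(n^{-c})$, so $\E[d_{\B,a}^+(v)] = \outdeg_a(v)\cdot n^{-c}$.

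Next I would invoke the standard multiplicative Chernoff bound (the paper uses \cref{lem:mult-chernoff} elsewhere) with deviation parameter $\delta_0$. Applied to a sum of independent $[0,1]$-valued random variables with mean $\mu = \outdeg_a(v)\cdot n^{-c}$, this yields
\[
\Pr\paren*{\abs*{d_{\B,a}^+(v) - \outdeg_a(v)\cdot n^{-c}} \geq \delta_0\cdot \outdeg_a(v)\cdot n^{-c}} \;\leq\; 2\exp\paren*{-\frac{\delta_0^2}{3}\cdot \outdeg_a(v)\cdot n^{-c}},
\]
which is exactly the first inequality stated in the claim. The incoming bound follows identically, replacing the enumeration of outgoing edges with the $\indeg_a(v)$ incoming edges of color $a$ into $v$; independence still holds because each edge is sampled into $\B$ independently.

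There is no real obstacle here: the only thing to double-check is that the edges being sampled are genuinely independent Bernoullis (they are, by the algorithm's definition of $\B$, as opposed to, say, the reservoir $\C$, which would require a different argument) and that the Chernoff form used matches the constants in the statement (the factor $1/3$ corresponds to the two-sided multiplicative Chernoff bound valid for $\delta_0 \in (0,1)$, which is satisfied since $\delta_0 = \epsilon^{4^{k+2}} \leq \epsilon < 1$). Given these sanity checks, the proof is a one-line invocation of Chernoff per inequality.
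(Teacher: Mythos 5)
Your proposal is correct and follows essentially the same route as the paper: define an independent Bernoulli indicator for each color-$a$ edge incident to $v$, note the sum is $d_{\B,a}^+(v)$ with mean $\outdeg_a(v)\cdot n^{-c}$, and apply the multiplicative Chernoff bound (\cref{lem:mult-chernoff}) with deviation $\delta_0$. Your extra sanity checks (independence of $\B$-sampling versus the reservoir $\C$, and $\delta_0 < 1$) are consistent with the paper's argument.
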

\begin{proof}
    We prove the first as the second is similar. We use the Chernoff bound to estimate the size of $d_{\B,a}^+(v)$ relative to $\outdeg_1(v)$. Define a random variable for each outgoing edge adjacent to $v$ with neighboring color $a$, and let this random variable be equal to 1 if the edge is selected in set $\B$ with probability $n^{-c}$ as in \Cref{alg:streaming-alg}. These random variables are all independent, and their sum is $d_{\B,a}^+(v)$, and the expectation is $\outdeg_a(v) \cdot n^{-c}$. The claim follows from a direct application of the Chernoff bound (\cref{lem:mult-chernoff}).
\end{proof}

Now, for the remainder of the paper, let us condition on these bounds holding for all vertices of high degree and all colors. If a vertex is high-degree, then $d_{a}^+(v) \geq \frac{\epsilon^2d(v)}{2(k-1)} \geq \frac{\epsilon^2n^{2q}}{2(k-1)}$ (by assumption \ref{assumption:balanced-colors}). Therefore, a union bound over all vertices happens with probability at most $4n\exp \paren*{-\frac{\delta_0^2}{3} \cdot \frac{\epsilon^2n^{2q}}{2(k-1)} \cdot n^{-c}} \leq \delta_0$. Thus, with failure probability $\delta_0$, from this point on we fix all randomness of $\B$ and assume that these estimates of $d_{\B,a}^+(v)$ and $d_{\B,a}^-(v)$ are accurate for all high-degree vertices. We note that fixing the randomness of $\B$ also fixes which vertices are high-degree and which are low-degree. This assumption also implies the following by a simple triangle inequality over colors:
\begin{equation} \label{eq:high-deg-estimates}
\begin{aligned}
    \abs*{\outdegB(v) - \outdeg(v) \cdot n^{-c}} &\leq \delta_0 \cdot \outdeg(v) \cdot n^{-c},\\
    \abs*{\degB(v) - d(v) \cdot n^{-c}} &\leq \delta_0 \cdot d(v) \cdot n^{-c}.
\end{aligned}
\end{equation}

With these assumptions, we can now prove two items that will be helpful for the future, the first is a bound on the difference between the fraction of outgoing degree to total degree and the same fraction estimated using the edges in $\B$. The second is a bound on the number of vertices $v$ which can have any given vertex $u$ in its tree $\T(v)$.

\begin{claim} \label{claim:deg-frac-est}
    For any high-degree vertex $v$, it holds that
    \[
    \abs*{\frac{\outdeg(v)}{d(v)} - \frac{\outdegB(v)}{\degB(v)}} \leq 3\delta_0.
    \]
\end{claim}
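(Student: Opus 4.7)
The plan is to treat this as a routine $\epsilon$-calculation, reducing the ratio comparison to the multiplicative bounds already established for the numerator and denominator in Equation \eqref{eq:high-deg-estimates}. Since we have conditioned on those bounds holding for every high-degree vertex, we can write
\[
\outdegB(v) = (1+\eta_1) \cdot \outdeg(v) \cdot n^{-c}, \qquad \degB(v) = (1+\eta_2) \cdot d(v) \cdot n^{-c},
\]
for some $\eta_1, \eta_2 \in [-\delta_0, \delta_0]$. Substituting, the factors of $n^{-c}$ cancel, so
\[
\frac{\outdegB(v)}{\degB(v)} = \frac{1+\eta_1}{1+\eta_2} \cdot \frac{\outdeg(v)}{d(v)}.
\]

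Next, I would bound the discrepancy by subtracting and factoring:
\[
\abs*{\frac{\outdeg(v)}{d(v)} - \frac{\outdegB(v)}{\degB(v)}} = \frac{\outdeg(v)}{d(v)} \cdot \abs*{1 - \frac{1+\eta_1}{1+\eta_2}} = \frac{\outdeg(v)}{d(v)} \cdot \frac{|\eta_2 - \eta_1|}{|1+\eta_2|}.
\]
Now $\outdeg(v)/d(v) \leq 1$ trivially, $|\eta_2 - \eta_1| \leq 2\delta_0$ by the triangle inequality, and $|1+\eta_2| \geq 1 - \delta_0$. Since $\delta_0 = \epsilon^{4^{k+2}}$ is much less than $1$ (we assumed $\epsilon \leq 0.01$), the denominator satisfies $1 - \delta_0 \geq 2/3$, giving
\[
\abs*{\frac{\outdeg(v)}{d(v)} - \frac{\outdegB(v)}{\degB(v)}} \leq \frac{2\delta_0}{1-\delta_0} \leq 3\delta_0.
\]

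There is no real obstacle here; this is a direct corollary of \eqref{eq:high-deg-estimates}. The only subtlety worth mentioning is that the claim is vacuous unless $v$ is indeed high-degree so that the Chernoff bounds we conditioned on at the start of \cref{sec:high-deg-randomness} apply, and that one should verify $\degB(v) > 0$ before dividing — but for high-degree $v$ the conditioning guarantees $\degB(v) \geq (1 - \delta_0) d(v) n^{-c} > 0$, so the division is well defined.
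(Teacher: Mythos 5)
Your proof is correct and takes essentially the same approach as the paper: both arguments reduce the claim to the conditioned bounds in \eqref{eq:high-deg-estimates}, use $\outdeg(v)/d(v)\leq 1$ and $\degB(v)\geq(1-\delta_0)d(v)n^{-c}$, and finish with the same $\frac{2\delta_0}{1-\delta_0}\leq 3\delta_0$ step. Your relative-error parametrization via $\eta_1,\eta_2$ is merely an algebraic repackaging of the paper's application of \cref{lem:frac-diff-prelim}.
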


\begin{proof}
    We begin by manipulating our expression and using \Cref{lem:frac-diff-prelim} and \Cref{eq:high-deg-estimates}.  
    \begin{flalign*}
        \abs*{\frac{\outdeg(v)}{d(v)} - \frac{\outdegB(v)}{\degB(v)}} &= \abs*{\frac{\outdeg(v)\cdot n^{-c}}{d(v)\cdot n^{-c}} - \frac{\outdegB(v)}{\degB(v)}} \\
        &\leq \frac{\abs*{\outdeg(v)\cdot n^{-c} - \outdegB(v)} \cdot d(v) \cdot n^{-c} + \abs*{d(v) \cdot n^{-c} - \degB(v)} \cdot \outdeg(v) \cdot n^{-c}}{d(v) \cdot n^{-c} \cdot \degB(v)} \tag{\cref{lem:frac-diff-prelim}} \\
        &\leq \frac{\delta_0 \cdot \outdeg(v) \cdot n^{-c} \cdot d(v) \cdot n^{-c} + \delta_0 \cdot d(v) \cdot n^{-c} \cdot \outdeg(v) \cdot n^{-c}}{d(v) \cdot n^{-c} \cdot \degB(v)} \\
        &= \frac{2\delta_0 \cdot \outdeg(v) \cdot n^{-c}}{\degB(v)}.
    \end{flalign*}  
    Now, since $\degB(v) \geq (1-\delta_0) \cdot d(v) \cdot n^{-c}$ from \Cref{eq:high-deg-estimates}, and $\frac{\outdeg(v)}{d(v)} \leq 1$, we have
    \[
    \begin{aligned}
        \abs*{\frac{\outdeg(v)}{d(v)} - \frac{\outdegB(v)}{\degB(v)}} &\leq \frac{2\delta_0 \cdot \outdeg(v) \cdot n^{-c}}{\degB(v)} \leq \frac{2\delta_0 \cdot \outdeg(v) \cdot n^{-c}}{(1-\delta_0) \cdot d(v) \cdot n^{-c}} \leq 3\delta_0.
    \end{aligned}
    \]
    This concludes the proof.
\end{proof}

\begin{lemma}\label{lem:count-tree-containment}
    Let $u$ be a vertex of low degree. Then, the total number of vertices $v$ with $\chi(v) \leq a$ and $u \in \T(v)$ is at most $n^{q\cdot 2^{a}}$.
\end{lemma}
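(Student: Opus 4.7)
The plan is to proceed by induction on $a$, bounding the set $M_a(u) := \{v : \chi(v) \leq a, u \in \T(v)\}$. First I would observe that only low-degree vertices can contribute, since the algorithm assigns $\T(v) = \emptyset$ whenever $v$ is high-degree. Moreover, unpacking the low-degree branch of \cref{alg:vertex-estimator} shows that, ignoring the dummy vertices (which are freshly created for each $v$ and are never equal to any real target $u$), the tree decomposes recursively as $\T(v) = \{v\} \cup \bigcup_{w \in \Rsvsin(v) \cup \Rsvsout(v)} \T(w)$, and since reservoir samples only include strictly lower-colored neighbors, every real vertex in $\T(v)$ has color at most $\chi(v)$, with equality only for $v$ itself. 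This immediately handles the base cases: $M_a(u) = \emptyset$ for $a < \chi(u)$, and $M_{\chi(u)}(u) = \{u\}$, both satisfying the claimed bound.

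For the inductive step, I would decompose $M_a(u) = M_{a-1}(u) \,\sqcup\, \{v : \chi(v) = a,\, u \in \T(v)\}$. Any $v$ in the second piece is necessarily low-degree of color exactly $a$, and by the recursive definition there exists some reservoir neighbor $w \in \Rsvsin(v) \cup \Rsvsout(v)$ with $u \in \T(w)$. Such a $w$ is a low-degree graph-neighbor of $v$ of color strictly less than $a$ satisfying $u \in \T(w)$, hence $w \in M_{a-1}(u)$. Thus every new vertex $v$ of color $a$ arises as a graph-neighbor of some $w \in M_{a-1}(u)$, giving the counting inequality
\[
\bigl|\{v : \chi(v) = a,\, u \in \T(v)\}\bigr| \;\leq\; \sum_{w \in M_{a-1}(u)} \deg(w).
\]
Because each $w$ is low-degree of color at most $a-1$, I would bound $\deg(w) \leq n^{q \cdot 2^{a-1}}$, and then invoke the inductive hypothesis $|M_{a-1}(u)| \leq n^{q \cdot 2^{a-1}}$ to conclude the count of new color-$a$ vertices is at most $n^{q \cdot 2^{a-1}} \cdot n^{q \cdot 2^{a-1}} = n^{q \cdot 2^{a}}$. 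Adding in the lower-order term $|M_{a-1}(u)|$ and absorbing it yields the final bound $|M_a(u)| \leq n^{q \cdot 2^a}$.

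The main obstacle I anticipate is the subtle mismatch between the operational notion of low-degree used by the algorithm ($\degB(v) \leq n^{q \cdot 2^{\chi(v)}}$) and the true-degree bound needed in the counting step. Conditioning on the high-probability concentration event from \cref{claim:high-deg-bound-prob} translates the $\degB$ threshold into a true-degree bound of order $n^{q \cdot 2^{\chi(v)} + c}$; this extra $n^c$ slack, together with the small constant factors picked up in each induction step (the obvious recursion actually gives $|M_{a-1}| + |M_{a-1}| \cdot n^{q \cdot 2^{a-1}}$, producing a factor of $2$ per level), needs to be absorbed into the exponent. Both losses are tolerated because the doubly-exponential growth of $q \cdot 2^a$ leaves ample room, and because of the choice $c = q/(10\Tmax)$ which makes $c$ vanishingly small relative to $q$; ensuring this absorption is rigorous is the one place some care is required.
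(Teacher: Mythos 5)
Your proposal is correct and takes essentially the same route as the paper's proof: induct over colors, observe that only (algorithmically) low-degree vertices can have $u$ in their tree, count the new color-$a$ vertices as graph-neighbors of the previous level, and use the conditioned concentration of $\degB$ to turn the low-degree threshold into a true-degree bound of roughly $n^{q\cdot 2^{a-1}+c}$. The absorption issue you flag is exactly how the paper closes the argument: it unrolls the recursion into a product, uses $\sum_{b\leq a-1} q\cdot 2^{b} = q(2^{a}-2)$, and lets the leftover $n^{2q}$ slack swallow the per-level $n^{c}/(1-\delta_0)$ (and constant) factors, so your induction should be stated in that strengthened (unrolled) form rather than with the bound $n^{q\cdot 2^{a}}$ verbatim.
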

\begin{proof}
    We prove the claim by induction. Let $S_b$ be the set of vertices $v$ such that $u \in \T(v)$, and $\chi(v) \leq b$. We note first that $S_{\chi(u)} = 1$ because the only vertex in this set is $u$ itself. This set can only ever contain low-degree vertices as high-degree vertices $v$ have $\T(v) = \emptyset$. Now, we see that a low-degree vertex $v$ can only be in $S_{b+1}$ if it has some child in $S_b$. We can count the maximum number of these by seeing that each element $w$ of $S_b$ can have at most $d(w) \leq  n^{q\cdot 2^b} \cdot n^c/(1-\delta_0)$ neighbors. This is because if $d(w) > n^{q\cdot 2^b} \cdot n^c/(1-\delta_0)$ then by \Cref{eq:high-deg-estimates}, $\degB(w) > n^{q\cdot2^b}$ implying that it is high-degree. However we know that $w$ must be low-degree. Hence $|S_{b+1}| \leq n^{q\cdot 2^b} \cdot n^c \cdot |S_b|/(1-\delta_0)$. This implies the following product.
    \[
    \abs*{S_a} \leq \displaystyle \prod_{b=\chi(u)}^{a-1} n^{q\cdot 2^b} \cdot n^c/(1-\delta_0)\leq n^{q \sum_{b=1}^{a-1} 2^b} \cdot 2^k \cdot n^{ck} \leq  n^{q\cdot(2^{a}-2)} \cdot n^{q} \leq n^{q\cdot 2^{a}}. \qedhere
    \]
\end{proof}


\section{Fixing the Tree Selection} 
\label{sec:fixing-the-trees}
\label{sec:p1-p2}

So far, we have fixed the randomness of $\B$ (which determines the high-degree vertices and the list of their selected neighbors).
In this section, we consider the randomness of the low-degree vertices.
For each \emph{sampled} low-degree vertex $v \in W$, the algorithm selects two sets of $d$ neighbors $\Rsvsin(v)$ and $\Rsvsout(v)$, sampled with replacement from $\IL(v)$ and $\OL(v)$ respectively.
For the sake of the analysis, we imagine the algorithm does the same for \emph{every} low-degree vertex $v$.
Based on the randomness of these selected neighbors, we define a set of intermediate estimations $\Pesth(v)$, for each vertex $v$.
They represent the estimates of our algorithm in a hypothetical situation where every vertex succeeds (i.e., there is no sampling of vertices and everything is included in $W$). We show that for each vertex $v$, $\Ex[\val{G'}{\Pesth}]$ is close to $\val{G'}{\pos}$. Later, we show that after fixing a tree selection, and hence fixing $\val{G'}{\Pesth}$, the cut value corresponding to the estimated positions $\Pest$ (i.e., $\Ex[\val{G'}{\Pest}]$) is close to $\val{G'}{\Pesth}$.

Formally, for each low-degree vertex $v$, we let
$$
    \zliesth (v) = \frac{\abs*{\IL(v)}}{d} \displaystyle \sum_{(u,v) \in \Rsvsin(v)}
    \Pesth(u), 
    \hspace{0.5cm}
    \zloesth (v) = \frac{\abs*{\OL(v)}}{d} \displaystyle \sum_{(v,u) \in \Rsvsout(v)}
    (1-\Pesth(u)),
$$
and
\begin{equation}
    \Pesth(v) = \clamp{\frac{\yout(v) + \zloesth(v) - \zliesth(v)}{\yin(v)+\yout(v)}}. \label{eq:lo-pesth}
\end{equation}
For each high-degree vertex $v$, we let
$$
\zbesth(v) = \frac{1}{\card{\Neighbors(v)}} \sum_{u \in \Neighbors(v)} \Pesth(u),
$$
and
\begin{equation}
        \Pesth(v) = \altclamp \left(
        \begin{cases}
            \frac{\alpha+1}{\alpha} \cdot \frac{\outdegB(v)}{\degB(v)} - \frac{1}{\alpha} \cdot \zbesth(v) & \text{if } \chi(v) = k \\
            \frac{k-1}{k-\chi(v)} \cdot \frac{\outdegB(v)}{\degB(v)} - \frac{\chi(v)-1}{k-\chi(v)} \cdot \zbesth(v) & \text{if } \chi(v) \neq k  
        \end{cases}.
        \right) \label{eq:hi-pesth}
\end{equation}
Notably, the estimates above differ from $\Pest(v)$ in that they are defined for all vertices $v$, and there is no rescaling (as in the Horvitz-Thompson average).

\begin{lemma}\label{lem:pesth-variance}
    For any vertex $v$ with $\chi(v) = a$, it holds that
        $$ \Var(\Pesth(v)) \leq \sigma_a^2, $$
    where the randomness is over the selection of the low-degree neighbors.
\end{lemma}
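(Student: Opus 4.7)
The plan is to induct on $a = \chi(v)$. For the base case $a = 1$, any color-$1$ vertex has no lower-color neighbors, so $\zliesth(v) = \zloesth(v) = 0$ (in the low-degree case) or $\zbesth(v) = 0$ (in the high-degree case). Since the randomness of $\B$ is already fixed in \cref{sec:high-deg-randomness}, $\Pesth(v)$ becomes a deterministic function of fixed quantities, giving $\Var(\Pesth(v)) = 0 \leq \sigma_1^2$.

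For the inductive step, I would assume the claim for all colors less than $a$, so $\Var(\Pesth(u)) \leq \sigma_{\chi(u)}^2 \leq \sigma_{a-1}^2$ for every $u$ with $\chi(u) < a$ (noting that $\sigma_b$ is nondecreasing in $b$ since $\sigma_b = \epsilon^{2\cdot 4^{k+2-b}}$). The main tool is the Cauchy-Schwarz bound $|\Cov(X,Y)| \leq \sqrt{\Var(X)\Var(Y)}$, which combined with the inductive hypothesis gives $|\Cov(\Pesth(u), \Pesth(u'))| \leq \sigma_{a-1}^2$ for any lower-color $u, u'$, regardless of how the estimates may share randomness through common descendants.

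For a low-degree $v$ of color $a$, I would apply $1$-Lipschitzness of $\operatorname{clamp}$ and the inequality $\Var(X-Y) \leq 2\Var(X) + 2\Var(Y)$ to obtain $\Var(\Pesth(v)) \leq (2\Var(\zliesth(v)) + 2\Var(\zloesth(v)))/(\yin(v)+\yout(v))^2$. Since the $d$ samples in $\Rsvsin(v)$ are i.i.d.\ uniform over $\IL(v)$, decomposing $\Var(\zliesth(v))$ into $d$ diagonal variance terms (each at most $\sigma_{a-1}^2 + 1/4$ via the law of total variance, noting that $\E[\Pesth(u)] \in [0,1]$) and $d(d-1)$ covariance terms (each at most $\sigma_{a-1}^2$ by the Cauchy-Schwarz bound above) gives $\Var(\zliesth(v)) \leq O(|\IL(v)|^2/d) + |\IL(v)|^2 \sigma_{a-1}^2$, and similarly for $\zloesth$. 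Using $\yin(v) + \yout(v) \geq \alpha(|\IL(v)|+|\OL(v)|)$ from \cref{eq:yinout-simple}, this yields $\Var(\Pesth(v)) \leq O(1)/(\alpha^2 d) + O(1)\sigma_{a-1}^2/\alpha^2$, which is at most $\sigma_a^2$ by the parameter values in \cref{table:values}. For a high-degree $v$, since $\B$ is fixed the set $\Neighbors(v)$ and the values $\degB(v)$, $\outdegB(v)$ are all deterministic, so the reservoir randomness enters $\Pesth(v)$ only through $\zbesth(v)$ with a coefficient bounded by $\max(1/\alpha, (\chi(v)-1)/(k-\chi(v))) \leq 1/\alpha$. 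Expanding $\Var(\zbesth(v))$ and applying Cauchy-Schwarz on each covariance term yields $\Var(\zbesth(v)) \leq \sigma_{a-1}^2$, hence $\Var(\Pesth(v)) \leq \sigma_{a-1}^2/\alpha^2 \leq \sigma_a^2$.

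The main obstacle I expect is the bookkeeping with the constants in \cref{table:values} to verify $\sigma_{a-1}^2/\alpha^2 \leq \sigma_a^2$, which reduces to checking $16 \cdot 4^{k+2-a} - 8 \geq 4 \cdot 4^{k+2-a}$, and similarly $1/(\alpha^2 d) \leq \sigma_a^2$; both hold comfortably given the doubly-exponential gap between successive $\sigma_a$. Notably, the Cauchy-Schwarz bound on covariances is quite loose in that it uses no structural independence among the $\Pesth(u)$'s, but it suffices here precisely because we only need a crude variance bound on $\Pesth(v)$. The sharper analyses that track how correlations propagate through high-degree vertices, which are genuinely delicate, are what the paper defers to the subsequent sections dealing with the final cut-value estimator rather than this preliminary variance bound.
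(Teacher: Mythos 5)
Your proposal follows the same inductive structure as the paper's proof, with the same high-degree argument and essentially the same bookkeeping. The only structural variant is that you use $\Var(X-Y)\leq 2\Var(X)+2\Var(Y)$ in the low-degree case rather than expanding $\Var(\zloesth - \zliesth)$ directly into covariances as the paper does; both work with the given constants since the extra factor of~$2$ is absorbed by the gap between $\sigma_{a-1}^2/\alpha^2$ and $\sigma_a^2$.

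There is one gap that is worth fixing, in the off-diagonal covariance bound for the low-degree case. You bound $\Cov(X_i, X_j)\leq\sigma_{a-1}^2$ for $i\neq j$ ``by the Cauchy-Schwarz bound above,'' but that bound was stated for \emph{fixed} lower-color vertices $u,u'$: $|\Cov(\Pesth(u),\Pesth(u'))|\leq\sigma_{a-1}^2$. Here $X_i=\Pesth(u_{I_i})$ with the index $I_i$ into $\IL(v)$ itself random, and a direct application of Cauchy-Schwarz to $X_i, X_j$ gives only $|\Cov(X_i,X_j)|\leq\sqrt{\Var(X_i)\Var(X_j)}\leq\sigma_{a-1}^2+\tfrac14$, since $\Var(X_i)$ includes the extra $\tfrac14$ from the random index (you already noted this for the diagonal). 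That weaker bound would destroy the $1/(4d)$ amortization: the $\tfrac14$ would appear in \emph{all} $d^2$ terms, giving $\Var(\Pesth(v))\gtrsim 1/(2\alpha^2)$, which is nowhere near $\sigma_a^2$. The correct argument is the paper's \cref{lem:sampling-covariance}: apply the law of total covariance conditioned on $(I_i,I_j)$; the conditional covariance is $\leq\sigma_{a-1}^2$ by your fixed-index Cauchy-Schwarz bound, and the cross term $\Cov(\Ex[X_i\mid I_i],\Ex[X_j\mid I_j])$ vanishes because the two reservoir draws $I_i, I_j$ are independent. This is the same trick you already use for the diagonal (law of total variance), so the fix is short, but it needs to be explicit, and your closing remark that the bound ``uses no structural independence'' undersells the role the independence of the $d$ reservoir draws plays here.
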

\begin{proof}
    We prove the claim by induction on $a$.
    For the base case, $a = 1$, the value $\Pesth(v)$ is a constant as it depends only on the degree information of $v$.
    That is, $\Pesth(v)$ is determined by $\yin(v)$ and $\yout(v)$ for low-degree vertices, which are absolute constants, and by $\outdegB(v)$ and $\degB(v)$  for high-degree vertices, which are constants given that we have fixed the randomness of $\B$.
    Therefore, $\Pesth(v)$ is a constant, and $\Var(\Pesth(v))$ is equal to $0$.
    
    For the induction step, first, we examine a high-degree vertex $v$.
    Recall the definition of $\Pesth(v)$ for high-degree vertices \eqref{eq:hi-pesth}.
    We can remove the clamp since doing so can only increase the variance (\cref{lem:var-of-clamp}).
    Observe that the terms involving $\outdegB$ and $\degB$ are constants since we have fixed the randomness of $\B$.
    Also for any color $\chi(v)$ it holds $\frac{\chi(v) - 1}{k - \chi(v)} \leq k - 2 \leq \frac{1}{\alpha}$.
    Therefore, we have
    \begin{equation}
    \Var(\Pesth(v)) \leq \Var\left(\frac{1}{\alpha} \zbesth(v) \right) = \frac{1}{\alpha^2} \Var(\zbesth(v)). \label{eq:hi-pesth-ub}
    \end{equation}
    To bound $\Var(\zbesth(v))$, we expand it as follows:
    \begin{align*}
        \Var(\zbesth) 
        &= \frac{1}{\card{\Neighbors(v)}^2}\Var\left(\sum_{u \in \Neighbors(v)} \Pesth(u) \right) \\
        &= \frac{1}{\card{\Neighbors(v)}^2}\sum_{u, w \in \Neighbors(v)} \Cov(\Pesth(u), \Pesth(w)) \\
        &\leq \frac{1}{\card{\Neighbors(v)}^2}\sum_{u, w \in \Neighbors(v)} \sqrt{\Var(\Pesth(u)) \Var(\Pesth(w))}  \tag{by \cref{lem:cauchy-schwarz}}\\
        &\leq \frac{1}{\card{\Neighbors(v)}^2} \cdot \card{\Neighbors(v)}^2\sigma_{a-1}^2 \tag{by the induction hypothesis}\\
        &= \sigma_{a-1}^2.
    \end{align*}
    Plugging this back into \eqref{eq:hi-pesth-ub}, we obtain:
    $$
    \Var(\Pesth(v)) \leq \frac{\sigma_{a-1}^2}{\alpha^2} \leq \sigma_a^2.
    $$

    The analysis is slightly more complicated for low-degree vertices.
    We start by removing the clamp (\cref{lem:var-of-clamp}) and the constants similar to the previous case:
    \begin{equation}
    \Var(\Pesth(v)) \leq \frac{1}{(\yin(v) + \yout(v))^2}\Var(\zloesth(v) - \zliesth(v))
    \leq \frac{1}{\alpha^2(\card{\IL} + \card{\OL})^2}\Var(\zloesth(v) - \zliesth(v)).
    \label{eq:lo-pesth-ub}
    \end{equation}
    Let $X_1, \ldots, X_d$ denote the estimate $\Pesth$ for the selected neighbors of $v$ in $\Rsvsin(v)$, and let $X'_1, \ldots, X'_d$ denote the estimate for the vertices in $\Rsvsout(v)$.
    Then, we can expand $\Var(\zloesth(v) - \zliesth(v))$ as follows:
    \begin{align}
        \Var(\zloesth(v) &- \zliesth(v))
        =  \Var\left( \frac{\abs*{\OL(v)}}{d} \sum_{i} 1 - X'_i - \frac{\abs*{\IL(v)}}{d} \sum_i X_i \right) \notag \\
        &= \frac{\card{\OL(v)}^2}{d^2}\sum_{i, j} \Cov(X'_i, X'_j) 
        + 2\frac{\card{\OL(v)}\card{\IL(v)}}{d^2}\sum_{i, j} \Cov(X'_i, X_j) \notag \\
        &\quad + \frac{\card{\IL(v)}^2}{d^2}\sum_{i, j} \Cov(X_i, X_j). \label{eq:pesth-covars}
    \end{align}

    We bound the first in \eqref{eq:pesth-covars} sum as follows:
    $$
    \sum_{i, j} \Cov(X'_i, X'_j)
    = \sum_i \Var(X'_i) + \sum_{i \neq j} \Cov(X'_i, X'_j)
    \leq \frac{d}{4} + d^2 \sigma_{a-1}^2,
    $$
    where for the last inequality, we trivially bound the variance of $X_i'$ by $\frac{1}{4}$ (as it holds $X_i' \in [0, 1]$), and bound $\Cov(X'_i, X'_j)$ for $i \neq j$ using the induction hypothesis and \cref{lem:sampling-covariance}.
    Similarly, it holds:
    $$
    \sum_{i, j} \Cov(X_i, X_j) \leq \frac{d}{4} + d^2\sigma_{a-1}^2.
    $$
    Lastly, for the covariances between $X_i'$ and $X_j$ we can use \cref{lem:sampling-covariance} alone, since there are no variances involved:
    $$
    \sum_{i, j} \Cov(X'_i, X_j) \leq d^2 \sigma_{a-1}^2.
    $$
    Plugging these bounds back into \eqref{eq:pesth-covars}, we obtain:
    \begin{align*}
        \Var(\zloesth(v) - \zliesth(v))
        &\leq \frac{\card{\OL(v)}^2}{d^2}\left(\frac{d}{4} + d^2\sigma_{a-1}^2\right)
        + \frac{2\card{\OL(v)}\card{\IL(v)}}{d^2} d^2 \sigma_{a-1}^2 
        + \frac{\card{\IL(v)}^2}{d^2}\left(\frac{d}{4} + d^2\sigma_{a-1}^2\right) \\
        &\leq (\card{\IL(v)} + \card{\OL(v)})^2 \left(\sigma_{a-1}^2 + \frac{1}{4d}\right).
    \end{align*}
    Plugging this back into \eqref{eq:lo-pesth-ub}, we get
    $$
    \Var(\Pesth(v)) \leq \frac{\sigma_{a-1}^2 + 1/4d}{\alpha^2} \leq \sigma_a^2,
    $$
    which concludes the proof of the induction.
\end{proof}

\begin{lemma}\label{lem:pesth-mean}
    For any vertex $v$ with $\chi(v) = a$, it holds that
        $$ \card{\Ex[\Pesth(v)] - \pos(v)} \leq \delta_a, $$
    where the randomness is over the selection of the low-degree neighbors.
\end{lemma}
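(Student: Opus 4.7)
The plan is to prove the lemma by induction on the color $a$, mirroring the decomposition used in \cref{lem:pesth-variance}. For the base case $a=1$, a low-degree vertex has $\IL(v)=\OL(v)=\emptyset$, so $\zliesth(v)=\zloesth(v)=\zli(v)=\zlo(v)=0$ and the formula \eqref{eq:lo-pesth} coincides with \cref{alg:pos-def}, giving $\Pesth(v)=\pos(v)$ deterministically. For a high-degree color-$1$ vertex, $\zbesth(v)=0$ vacuously, so $\Pesth(v)$ reduces to a clamped multiple of $\outdegB(v)/\degB(v)$, which differs from the corresponding expression in $\tpos(v)$ by at most $3\delta_0$ via \cref{claim:deg-frac-est}, and $|\tpos(v)-\pos(v)|\leq 3\delta_0 k^2$ by \cref{lem:pos-tpos}. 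Since $\delta_1$ swamps $O(\delta_0 k^2)$, the base case follows.

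For the inductive step at a low-degree vertex $v$ of color $a$, let $Y=(\yout(v)+\zloesth(v)-\zliesth(v))/(\yin(v)+\yout(v))$ and $Z=(\yout(v)+\zlo(v)-\zli(v))/(\yin(v)+\yout(v))$, so that $\Pesth(v)=\clamp{Y}$ and $\pos(v)=\clamp{Z}$. Since $\clamp{\cdot}$ is $1$-Lipschitz, I split
$$|\Ex[\Pesth(v)]-\pos(v)|\;\leq\;\Ex[|\clamp{Y}-\clamp{\Ex[Y]}|]+|\clamp{\Ex[Y]}-\clamp{Z}|\;\leq\;\sqrt{\Var(Y)}+|\Ex[Y]-Z|.$$
The variance term is at most $\sigma_a$ by the exact computation carried out in \cref{lem:pesth-variance}'s low-degree step. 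For the bias term, each $\Rsvsin(v)[i]$ is a uniform sample of $\IL(v)$, so $\Ex[\zliesth(v)]=\sum_{(u,v)\in\IL(v)}\Ex[\Pesth(u)]$, and similarly for $\zloesth$. The induction hypothesis gives $|\Ex[\Pesth(u)]-\pos(u)|\leq \delta_{a-1}$ for every lower-colored neighbor, yielding $|\Ex[Y]-Z|\leq(|\IL(v)|+|\OL(v)|)\delta_{a-1}/(\yin(v)+\yout(v))\leq \delta_{a-1}/\alpha$ by \eqref{eq:yinout-simple}. The constants in \cref{table:values} are arranged so that $\sigma_a+\delta_{a-1}/\alpha\leq\delta_a$ (since $\sigma_a=\delta_a^2$ and $\delta_{a-1}=\delta_a^4$).

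For a high-degree vertex $v$ of color $a$, I first invoke \cref{lem:pos-tpos} to reduce to comparing $\Ex[\Pesth(v)]$ to $\tpos(v)$ (paying $3\delta_0 k^2$). Writing $Y_H$ for the unclamped argument inside $\Pesth(v)$ and $Z_H$ for that inside $\tpos(v)$, the same Lipschitz decomposition gives $|\Ex[\Pesth(v)]-\tpos(v)|\leq\sqrt{\Var(Y_H)}+|\Ex[Y_H]-Z_H|$, where $\sqrt{\Var(Y_H)}\leq\sigma_a$ by \cref{lem:pesth-variance}. The two unclamped arguments differ in the coefficient on the degree ratio, controlled within $3\delta_0\cdot k$ by \cref{claim:deg-frac-est} combined with $(k-1)/(k-\chi(v))\leq k$, and in the term $\zbesth(v)$ versus $\zb(v)$. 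Setting $\hat\zb(v):=|\Neighbors(v)|^{-1}\sum_{u\in\Neighbors(v)}\pos(u)$, the induction hypothesis applied to each $u\in\Neighbors(v)$ gives $|\Ex[\zbesth(v)]-\hat\zb(v)|\leq\delta_{a-1}$.

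The main obstacle is the remaining gap $|\hat\zb(v)-\zb(v)|$, which measures how well the $\B$-induced sample of lower-colored neighbors represents the true average of $\pos$-values over all such neighbors. Since $v$ is high-degree, $|\Neighbors(v)|\geq \indeg(v)n^{-c}(1-\delta_0)+\outdeg(v)n^{-c}(1-\delta_0)$ is at least of order $n^{q\cdot 2^{a}}$ (up to the $n^{-c}$ factor), so Hoeffding applied to the i.i.d.\ Bernoulli inclusions defining $\B$ yields $|\hat\zb(v)-\zb(v)|\leq n^{-\Omega(q\cdot 2^a)}$ except on an event of probability $\leq \delta_0$. I would therefore augment the conditioning from \cref{claim:high-deg-bound-prob} with a union bound over these concentration events for all high-degree vertices, at an additional failure cost of $O(\delta_0)$. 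Combining \cref{lem:pos-tpos}, \cref{claim:deg-frac-est}, the variance bound, the induction hypothesis on neighbors, and the $\B$-sample concentration gives a total bias of $O(\sigma_a + k\delta_{a-1} + \delta_0 k^2)$, which is bounded by $\delta_a=\epsilon^{4^{k+2-a}}$ with the slack built into the constants.
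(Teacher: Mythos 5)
Your proposal follows essentially the same route as the paper's proof: induction on color, clamp-removal via \cref{lem:conditional-clamping}, and per-term bias bounds from the induction hypothesis. Where it goes beyond the paper is in the high-degree inductive step. In passing from $\abs*{\Ex[\zbesth(v)]-\zb(v)}$ to a difference of sample averages over $\Neighbors(v)$, the paper's derivation implicitly identifies the $\B$-induced average $\tfrac{1}{\card{\Neighbors(v)}}\sum_{u\in\Neighbors(v)}\pos(u)$ with the full lower-colored average $\zb(v)$ from \cref{eq:zb-def}, but the conditioning established in \cref{claim:high-deg-bound-prob} only fixes degree counts, not which neighbors are sampled, so that substitution leaves an uncontrolled term. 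You correctly flag this and close the gap by adding a per-high-degree-vertex Hoeffding concentration event to the $\B$-conditioning, at an extra $O(\delta_0)$ failure cost. That is the right patch; the one detail to spell out is that $\card{\Neighbors(v)}$ is itself random, so the concentration should be applied separately to the numerator $\sum_{u\in\Neighbors(v)}\pos(u)$ and the denominator $\card{\Neighbors(v)}$ (both concentrate by the same Chernoff argument as in \cref{claim:high-deg-bound-prob}). The rest of your proposal — base case, low-degree step, and constant bookkeeping — matches the paper's argument.
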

\begin{proof}
    We prove this inductively. First, we see that if $a = 1$, then $\Pesth(v) = \pos(v)$ if $v$ is low-degree and $\Pesth(v) = \tpos(v)$ if $v$ is high-degree with probability 1. Thus the lemma follows from \Cref{lem:pos-tpos}. Now, we assume inductively that for any vertex $u$ with $\chi(u) < a$, we have
    \[
    \card{\Ex[\Pesth(u)] - \pos(u)} \leq \delta_{a-1}.
    \]
    Begin with the case where $v$ is low-degree. We expand the formulas for $\Pesth(v)$ and $\pos(v)$.
    \begin{align}
        \card{\Ex[\Pesth(v)] - \pos(v)} &= \abs*{\Ex \bracket*{\clamp{\frac{\yout(v) + \zloesth(v) - \zliesth(v)}{\yin(v)+\yout(v)}}} - \clamp{\frac{\yout(v) + \zlo(v) - \zli(v)}{\yin(v)+\yout(v)}}} \notag \\
        &\leq \abs*{\Ex \bracket*{\frac{\yout(v) + \zloesth(v) - \zliesth(v)}{\yin(v)+\yout(v)}} - \frac{\yout(v) + \zlo(v) - \zli(v)}{\yin(v)+\yout(v)}} + \sqrt{\Var(\Pesth(v))} \tag{by \Cref{lem:conditional-clamping}} \\
        &\leq \frac{1}{\alpha \cdot (\abs*{\IL(v)} + \abs*{\OL(v)})} \paren*{\abs*{\Ex [\zloesth(v)] - \zlo(v)} + \abs*{\Ex [\zliesth(v)] - \zli(v)}} + \sqrt{\Var(\Pesth(v))}. \label{eq:pesth-pos-mean-1}
    \end{align}
    From \Cref{lem:pesth-variance}, we know that $\sqrt{\Var(\Pesth(v))} \leq \sigma_a$, and so we can focus on $\abs*{\Ex [\zliesth(v)] - \zli(v)}$ and $\abs*{\Ex [\zloesth(v)] - \zlo(v)}$. We only prove a bound for the first as the second is analogous. To do this, first note that if $\IL(v)$ is empty then the bound is trivial. Otherwise, we must figure out what $\Ex[\zliesth(v)]$ is. $\zliesth(v)$ is a sum of $d$ neighboring $\Pesth(\cdot)$ values selected uniformly at random with replacement. Let these values be the random variables $X_1, \dots X_d$. Then $\Ex[\zliesth(v)] = \Ex \bracket*{\frac{\abs*{\IL(v)}}{d}\sum_{i=1}^d X_i}$. Each $X_i$ has identical expectation and it is equal to the average $\Ex[\Pesth(\cdot)]$ value of incoming lower-colored vertices as it takes each of these values with equal probability. Thus we have 
    \[
    \Ex[\zliesth(v)] = \frac{\abs*{\IL(v)}}{d} \cdot \sum_{i=1}^d \paren*{\frac{1}{\abs*{\IL(v)}} \sum_{(u,v) \in \IL(v)} \Ex[\Pesth(u)]} = \sum_{(u,v) \in \IL(v)} \Ex[\Pesth(u)].
    \]
    We can now we can use this to bound the difference between $\Ex[\zliesth(v)]$ and $\zli(v)$.
    \[
    \begin{aligned}
        \abs*{\Ex [\zliesth(v)] - \zli(v)} &= \abs*{\sum_{(u,v) \in \IL(v)} \Ex[\Pesth(u)] - \sum_{(u,v) \in \IL(v)} \pos(v)} \\
        &\leq \sum_{(u,v) \in \IL(v)} \abs*{\Ex[\Pesth(u)]-\pos(v)} \leq \abs*{\IL(v)} \cdot \delta_{a-1}.
    \end{aligned}
    \]
    With the last step coming from applying the inductive hypothesis. We similarly find that $ \abs*{\Ex [\zloesth(v)] - \zlo(v)} \leq \abs*{\OL(v)} \cdot \delta_{a-1}$ and we can plug both bounds as well as \Cref{lem:pesth-variance} back into \Cref{eq:pesth-pos-mean-1} to get the following.
    \[
    \begin{aligned}
        \card{\Ex[\Pesth(v)] - \pos(v)} &\leq \frac{1}{\alpha \cdot (\abs*{\IL(v)} + \abs*{\OL(v)})} \paren*{\abs*{\OL(v)} \cdot \delta_{a-1} + \abs*{\IL(v)} \cdot \delta_{a-1}} + \sigma_a \\
        &\leq \frac{\delta_{a-1}}{\alpha} + \sigma_a \leq \delta_a.
    \end{aligned}
    \]
    This concludes the proof for low-degree vertices. 
    
    We now consider the case where $v$ is high-degree. In this case, we will prove a bound for $\card{\Ex[\Pesth(v)]-\tpos(v)}$ as we know that $\tpos(v)$ and $\pos(v)$ are close. Similarly to before, we can expand the formulas for $\Pesth(v)$ and $\tpos(v)$. At the same time, we use \Cref{lem:conditional-clamping} to remove the clamps. Also, we will use the case where $\chi(v) = k$ for the formula as it is an upper-bound on both cases because $\frac{\chi(v) - 1}{k - \chi(v)} \leq k - 2 \leq \frac{1}{\alpha}$.
    \[
    \begin{aligned}
        \abs*{\Ex [\Pesth(v)] - \tpos(v)} &\leq \abs*{\Ex \bracket*{\frac{\alpha+1}{\alpha} \cdot \frac{\outdegB(v)}{\degB(v)} - \frac{1}{\alpha} \cdot \zbesth(v)} - \paren*{\frac{\alpha+1}{\alpha} \cdot \frac{\outdeg(v)}{d(v)} - \frac{1}{\alpha} \cdot \zb(v)}} + \sqrt{\Var(\Pesth(v))} \\
        &\leq \frac{\alpha+1}{\alpha} \cdot \abs*{\frac{\outdegB(v)}{\degB(v)} - \frac{\outdeg(v)}{d(v)}} + \frac{1}{\alpha}\cdot \abs*{\Ex [\zbesth(v)] - \zb(v)} + \sqrt{\Var(\Pesth(v))}\\
        &\leq \frac{\alpha+1}{\alpha} \cdot 3\delta_0 + \frac{1}{\alpha}\cdot \abs*{\Ex \bracket*{\frac{1}{\abs*{\Neighbors(v)}}\displaystyle \sum_{u \in \Neighbors(v)} \Pesth(u)} - \frac{1}{\abs*{\Neighbors(v)}}\displaystyle \sum_{u \in \Neighbors(v)} \pos(u)} + \sqrt{\Var(\Pesth(v))}\\
        &\leq \frac{\alpha+1}{\alpha} \cdot 3\delta_0 + \frac{1}{\alpha\abs*{\Neighbors(v)}}\cdot \displaystyle \sum_{u \in \Neighbors(v)} \abs*{\Ex \bracket*{\Pesth(u)} - \pos(u)} + \sqrt{\Var(\Pesth(v))} \\
        &\leq \frac{\alpha+1}{\alpha} \cdot 3\delta_0 + \frac{1}{\alpha} \cdot \delta_{a-1} + \sigma_a \leq \delta_a/2.
    \end{aligned}
    \]
    Where the final step comes from \Cref{lem:pesth-variance} and the inductive hypothesis. Combining this with \Cref{lem:pos-tpos} proves the claim.
\end{proof}

Now that we have bounded the variance of $\Pesth(\cdot)$ of a vertex and the distance from its mean to its $\pos(\cdot)$, we can now say that the $\Pesth(\cdot)$ of a vertex is close to $\pos(\cdot)$ with high probability. Then, we extend the result to edges.

\begin{lemma}\label{lem:pesth-pos-prob}
    For any vertex $v$, it holds
    \[
    \Pr \paren*{\abs*{\Pesth(v)-\pos(v)} > 2\delta_k} \leq \delta_k.
    \]
\end{lemma}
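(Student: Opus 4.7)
The plan is to apply Chebyshev's inequality, using the two preceding lemmas as inputs. From \cref{lem:pesth-mean} I have $\abs{\Ex[\Pesth(v)] - \pos(v)} \leq \delta_a \leq \delta_k$ (since $\delta_a = \epsilon^{4^{k+2-a}}$ is non-decreasing in $a$ for $\epsilon < 1$), and from \cref{lem:pesth-variance} I have $\Var(\Pesth(v)) \leq \sigma_a^2 \leq \sigma_k^2 = \delta_k^4$. These are exactly the mean/variance controls that Chebyshev needs.

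Concretely, first I would apply Chebyshev to $\Pesth(v)$ around its own mean with deviation $\delta_k$:
\[
\Pr\paren*{\abs*{\Pesth(v) - \Ex[\Pesth(v)]} > \delta_k} \leq \frac{\Var(\Pesth(v))}{\delta_k^2} \leq \frac{\delta_k^4}{\delta_k^2} = \delta_k^2 \leq \delta_k.
\]
Then a single triangle-inequality step finishes things: on the complementary event,
\[
\abs*{\Pesth(v) - \pos(v)} \leq \abs*{\Pesth(v) - \Ex[\Pesth(v)]} + \abs*{\Ex[\Pesth(v)] - \pos(v)} \leq \delta_k + \delta_k = 2\delta_k,
\]
which gives exactly the stated bound.

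There is essentially no obstacle here; the whole point of having chosen $\sigma_a = \delta_a^2$ in the parameter table (\cref{table:values}) is to make this Chebyshev step go through cleanly with room to spare (in fact with probability $\delta_k^2$, not just $\delta_k$). The only small care needed is to verify the monotonicity $\delta_a \leq \delta_k$ and $\sigma_a \leq \sigma_k$ for all colors $a \in [k]$, which is immediate from the definitions in \cref{sec:constants} once one recalls $\epsilon \leq 0.01$. The proof is therefore a two-line deduction from \cref{lem:pesth-mean} and \cref{lem:pesth-variance}, with no new probabilistic ingredients.
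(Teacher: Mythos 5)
Your proof is correct and follows exactly the same route as the paper's: apply Chebyshev with deviation $\delta_k$ using the variance bound from \cref{lem:pesth-variance}, then finish with a triangle inequality via the mean bound from \cref{lem:pesth-mean}. The paper's proof is a verbatim two-liner doing precisely this, so there is nothing to compare beyond your (harmless) extra observation that the Chebyshev bound actually gives $\delta_k^2 \leq \delta_k$.
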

\begin{proof}
    By Chebyshev (\Cref{lem:chebyshev}) and \Cref{lem:pesth-variance}, it holds that 
    \[
    \Pr \paren*{\abs*{\Pesth(v)-\Ex[\Pesth(v)]} > \delta_k} \leq \frac{\sigma_k^2}{\delta_k^2} \leq \delta_k.
    \]
    Thus, combining this with \Cref{lem:pesth-mean} proves the claim with a simple triangle inequality.
\end{proof}

\begin{lemma}\label{lem:pesth-pos-prob-edge}
    For any edge $(u,v)$, it holds
    \[
    \Pr \paren*{\abs*{\Pesth(u)\cdot(1-\Pesth(v))-\pos(u)\cdot(1-\pos(v))} > 6\delta_k} \leq 2\delta_k.
    \]
\end{lemma}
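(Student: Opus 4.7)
The plan is to reduce this edge bound to the single-vertex bound in \cref{lem:pesth-pos-prob} via a union bound over the two endpoints, and then to control the product $\Pesth(u)(1-\Pesth(v))$ in terms of $\pos(u)(1-\pos(v))$ by a simple algebraic manipulation together with the fact that all four values lie in $[0,1]$ (since $\Pesth$ and $\pos$ are both clamped).

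Concretely, first I would apply \cref{lem:pesth-pos-prob} separately to $u$ and to $v$, and take a union bound. This gives that with probability at least $1 - 2\delta_k$, both inequalities
$$ \abs*{\Pesth(u) - \pos(u)} \leq 2\delta_k \quad \text{and} \quad \abs*{\Pesth(v) - \pos(v)} \leq 2\delta_k $$
hold simultaneously. Condition on this event.

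Next, I would write
$$ \Pesth(u)(1-\Pesth(v)) - \pos(u)(1-\pos(v)) = \bigl(\Pesth(u) - \pos(u)\bigr)\bigl(1-\Pesth(v)\bigr) + \pos(u)\bigl(\pos(v) - \Pesth(v)\bigr), $$
which is a standard product-difference rewrite. Applying the triangle inequality and using that $\pos(u), 1 - \Pesth(v) \in [0,1]$ (both $\pos$ and $\Pesth$ are clamped into $[0,1]$ by construction), the right-hand side is bounded in absolute value by $|\Pesth(u) - \pos(u)| + |\Pesth(v) - \pos(v)| \leq 4\delta_k$, which is well within the claimed slack of $6\delta_k$.

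There is no real obstacle here: the work is all done by the previous lemma, and the only thing to verify is the elementary product bound using that the estimates are clamped. The extra slack ($6\delta_k$ rather than $4\delta_k$) in the statement is comfortable, so the argument is essentially tight up to constants.
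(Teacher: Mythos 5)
Your proof is correct and follows essentially the same approach as the paper: union bound over the two endpoints via \cref{lem:pesth-pos-prob}, then an elementary product-difference estimate using that $\Pesth$ and $\pos$ lie in $[0,1]$. Your decomposition is slightly sharper (yielding $4\delta_k$ rather than the paper's $6\delta_k$), but the argument is otherwise the same.
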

\begin{proof}
    By a union bound on \Cref{lem:pesth-pos-prob} for vertices $u$ and $v$, it suffices to show that if we condition on the events where $\abs*{\Pesth(v)-\pos(v)} \leq 2\delta_k$ and $\abs*{\Pesth(u)-\pos(u)} \leq 2\delta_k$, then with probability 1 we have
    \[
    \abs*{\Pesth(u)\cdot(1-\Pesth(v))-\pos(u)\cdot(1-\pos(v))} \leq 6\delta_k.
    \]
    We show this as follows.
    \begin{align*}
        \abs*{\Pesth(u)\cdot(1-\Pesth(v))-\pos(u)\cdot(1-\pos(v))} &\leq \abs*{\Pesth(u)-\pos(u)} + \abs*{\Pesth(u)\Pesth(v)-\pos(u)\pos(v)} \\
        &\leq 2\delta_k + \abs*{\Pesth(u)\Pesth(v)-\Pesth(u)\pos(v)} + \abs*{\Pesth(u)\pos(v)-\pos(u)\pos(v)} \\
        &= 2\delta_k + \Pesth(u)\cdot\abs*{\Pesth(v)-\pos(v)} + \pos(v)\cdot\abs*{\Pesth(u)-\pos(u)}\\
        &\leq 2\delta_k + 2\delta_k + 2\delta_k = 6\delta_k. \qedhere
    \end{align*}
\end{proof}

With this, we can now show that $\val{G'}{\Pesth}$ is close to $\val{G'}{\pos}$ with high probability. This allows us to fix the randomness of $\Rsvsin(\cdot)$ and $\Rsvsout(\cdot)$ with small failure probability.

\begin{lemma}\label{lem:valG-Pesth-valG-pos}
    \[
    \Pr(\abs*{\val{G'}{\Pesth} - \val{G'}{\pos}} > 13\delta_k) \leq \delta_k.
    \]
\end{lemma}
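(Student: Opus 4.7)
The plan is to apply Chebyshev's inequality to the random variable $\val{G'}{\Pesth}$ after separately controlling its bias relative to $\val{G'}{\pos}$ and its variance, using the per-vertex bounds from \cref{lem:pesth-mean} and \cref{lem:pesth-variance}.

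First, I would bound the bias $|\Ex[\val{G'}{\Pesth}] - \val{G'}{\pos}|$ by $4\delta_k$. For each edge $(u,v) \in E$, expand
\[
\Ex[\Pesth(u)(1-\Pesth(v))] - \pos(u)(1-\pos(v)) = (\Ex[\Pesth(u)] - \pos(u)) - \bigl(\Ex[\Pesth(u)]\Ex[\Pesth(v)] - \pos(u)\pos(v)\bigr) - \Cov(\Pesth(u),\Pesth(v)).
\]
The first two terms are bounded by $\delta_k + 2\delta_k = 3\delta_k$ using \cref{lem:pesth-mean} together with $\pos,\Ex[\Pesth] \in [0,1]$, and the third satisfies $|\Cov(\Pesth(u),\Pesth(v))| \leq \sqrt{\Var(\Pesth(u))\Var(\Pesth(v))} \leq \sigma_k^2 \leq \delta_k$ by Cauchy--Schwarz and \cref{lem:pesth-variance}. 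Averaging over edges gives $|\Ex[\val{G'}{\Pesth}] - \val{G'}{\pos}| \leq 4\delta_k$.

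Second, I would bound $\Var(\val{G'}{\Pesth}) = \tfrac{1}{|E|^2}\sum_{e,e'}\Cov(f_e,f_{e'})$, where $f_e := \Pesth(u_e)(1-\Pesth(v_e))$. By Cauchy--Schwarz, $\sum_{e,e'}|\Cov(f_e,f_{e'})| \leq (\sum_e \sqrt{\Var(f_e)})^2 \leq |E|^2 \cdot \max_e \Var(f_e)$, so it suffices to show $\Var(f_e) = O(\sigma_k^2) = O(\delta_k^4)$. Writing $\Pesth(u) = \mu_u + A$ and $1-\Pesth(v) = \mu_v' + B$ with $A,B$ centered and bounded in $[-1,1]$, one has $f_e - \Ex[f_e] = \mu_u B + \mu_v' A + (AB - \Ex[AB])$; the triangle inequality in $L^2$ gives $\sqrt{\Var(f_e)} \leq \mu_u \sigma_k + \mu_v' \sigma_k + \sqrt{\Var(AB)}$, and finally $\Var(AB) \leq \Ex[A^2 B^2] \leq \sqrt{\Ex[A^4]\Ex[B^4]} \leq \sigma_k^2$, using the boundedness $|A|,|B|\leq 1$ to upgrade $\Ex[A^4] \leq \Ex[A^2] \leq \sigma_k^2$. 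Thus $\Var(f_e) \leq 9\sigma_k^2$, and so $\Var(\val{G'}{\Pesth}) \leq 9\delta_k^4$.

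With these in hand, Chebyshev's inequality yields
\[
\Pr\bigl(\abs*{\val{G'}{\Pesth} - \Ex[\val{G'}{\Pesth}]} > 9\delta_k\bigr) \leq \frac{9\delta_k^4}{81\delta_k^2} = \frac{\delta_k^2}{9} \leq \delta_k,
\]
and combining with the $4\delta_k$ bias bound via the triangle inequality gives $\Pr(|\val{G'}{\Pesth} - \val{G'}{\pos}| > 13\delta_k) \leq \delta_k$. The main obstacle is obtaining the $O(\sigma_k^2)$ bound on the per-edge variance: the $\Pesth$ values on different vertices share the randomness of the sparsified trees and can be correlated in complex ways, and $f_e$ is a product, so naive bounding only yields $\Var(f_e)=O(1)$, which is far too weak. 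The saving point is that the $[0,1]$-boundedness of $\Pesth$ lets us control the awkward cross term $\Ex[A^2 B^2]$ at the target $\sigma_k^2$ scale, which in turn makes Chebyshev tight enough to beat the $\delta_k$ tail threshold.
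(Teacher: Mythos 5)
Your proof is correct and follows essentially the same three-step structure as the paper's: bound the bias $\abs*{\Ex[\val{G'}{\Pesth}]-\val{G'}{\pos}}$, bound $\Var(\val{G'}{\Pesth})$ via Cauchy--Schwarz and a per-edge variance bound of $9\sigma_k^2$ (your inline derivation is exactly the paper's \cref{lem:var-product}), then apply Chebyshev and the triangle inequality. The one genuine departure is in the bias step: the paper detours through concentration bounds on individual vertices (\cref{lem:pesth-pos-prob}), then edges (\cref{lem:pesth-pos-prob-edge}), and finally a probability-to-expectation conversion (\cref{lem:custom-prob-ex}), arriving at a bias bound of $12\delta_k$ and then using Chebyshev with threshold $\delta_k$. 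You instead expand $\Ex[\Pesth(u)(1-\Pesth(v))]$ directly into the per-vertex bias terms from \cref{lem:pesth-mean} plus a covariance term controlled by \cref{lem:pesth-variance}, obtaining the tighter bias $4\delta_k$, and then widen the Chebyshev threshold to $9\delta_k$ to compensate. Both routes land at $13\delta_k$ with failure probability $\leq \delta_k$; yours is a bit more direct, while the paper's gains modularity by reusing the already-established probability lemmas.
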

\begin{proof}
    The strategy is to show that the mean of $\val{G'}{\Pesth}$ is close to $\val{G'}{\pos}$, then we bound the variance and use \Cref{lem:chebyshev}.
    \[
    \begin{aligned}
        \abs*{\Ex\bracket*{\val{G'}{\Pesth}} - \val{G'}{\pos}} &\leq \abs*{\Ex \bracket*{\frac{1}{|E|} \sum_{(u,v) \in E} \Pesth(u) \cdot (1-\Pesth(v))} - \frac{1}{|E|} \sum_{(u,v)\in E} \pos(u) \cdot (1-\pos(v))} \\
        &\leq \frac{1}{|E|} \sum_{(u,v)\in E} \abs*{\Ex \bracket*{\Pesth(u) \cdot (1-\Pesth(v))} - \pos(u) \cdot (1-\pos(v))}.
    \end{aligned}
    \]

    From \Cref{lem:pesth-pos-prob} and \Cref{lem:custom-prob-ex}, we know that $\abs*{\Ex \bracket*{\Pesth(u) \cdot (1-\Pesth(v))} - \pos(u) \cdot (1-\pos(v))} \leq 12\delta_k$. This gives us
    \begin{equation}\label{eq:mean-valG-Pesth}
        \abs*{\Ex\bracket*{\val{G'}{\Pesth}} - \val{G'}{\pos}} \leq 12\delta_k.
    \end{equation}

    Now, we bound the variance of $\val{G'}{\Pesth}$ as follows, notating that for an edge $e = (u,v)$ $\Pesth(e) = \Pesth(u) \cdot (1-\Pesth(v))$.
    \[
    \begin{aligned}
        \Var \paren*{\val{G'}{\Pesth}} &= \Var \paren*{\frac{1}{|E|} \sum_{e \in E} \Pesth(e)} \\
        &= \frac{1}{|E|^2} \sum_{e_1 \in E} \sum_{e_2 \in E} \Cov \paren*{\Pesth(e_1), \Pesth(e_2)}.
    \end{aligned}
    \]

    By \Cref{lem:cauchy-schwarz}, $\Cov \paren*{\Pesth(e_1), \Pesth(e_2)} \leq \sqrt{\Var(\Pesth(e_1))\Var(\Pesth(e_2))}$. We can bound the variance of $\Pesth(e_1) = \Pesth(u) \cdot (1-\Pesth(v))$ using \Cref{lem:pesth-variance} and \Cref{lem:var-product} as at most $9\sigma_k^2$. Thus, this bound carries to our overall variance.
    \[
    \Var \paren*{\val{G'}{\Pesth}} \leq 9\sigma_k^2.
    \]
    Now, we can use \Cref{lem:chebyshev} to see that
    \[
    \Pr \paren*{\abs*{\val{G'}{\Pesth} - \Ex \bracket*{\val{G'}{\Pesth}}} > \delta_k} \leq \frac{9\sigma_k^2}{\delta_k^2} \leq \delta_k.
    \]
    Thus, the lemma follows from a triangle inequality with this and \Cref{eq:mean-valG-Pesth}.
\end{proof}

\section{Dealing with Vertex Sampling}
\label{sec:p2-p3}

Up until this point, we have not dealt with the randomness of the vertex sampling in $W$, nor have we dealt with the scaling in our Horvitz-Thompson Estimator (\Cref{alg:HTAvg}). This section addresses this, as the only randomness remaining in the algorithm is this vertex sampling in $W$, and the random sampling of $\C$, which we will deal with later as it is independent from the rest of the algorithm. In our algorithm, for each vertex $v$ we keep track of two items, $\Pest(v)$ and $\T(v)$. $\Pest(v)$ is our position estimate, but our algorithm is only able to compute this if we sample each vertex in $\T(v)$. Otherwise, the vertex estimator fails. Our edge-estimator works similarly, with both endpoints needing their respective trees to be sampled. We define the indicator variable for a vertex $v$ to be $X_v = \Pest(v) \cdot n^{c\abs*{\T(v)}}$ when $\T(v) \subseteq W$, and 0 otherwise. We note that this is the value used in our Horvitz-Thompson Estimator (\Cref{alg:HTAvg}). Similarly for an edge $e=(u,v)$, the indicator variable $X_e = \Pest(e) \cdot n^{c\abs*{\T(u) \cup \T(v)}}$. The key property of these indicator variables which is necessary to note is that $\Ex[X_v] = \Ex[\Pest(v)\mid\mc{E}_v]$ where $\mc{E}_v$ is the event where $\T(v) \subseteq W$. This is because the scaling up from the formula for $X_v$ by $n^{c\abs*{\T(v)}}$ cancels out with $\Pr(\mc{E}_v) = n^{-c\abs*{\T(v)}}$ as we sample each vertex independently with probability $n^{-c}$.

The goal of this section is similar to the previous one, we want to show that $\val{G'}{\Pest}$ is close to $\val{G'}{\Pesth}$ with good probability. To do this, as before, we must bound the variance and show that our mean is close, then we can use Chebyshev (\Cref{lem:chebyshev}). As before, we do this for individual vertices first. The main difference, however, is now it is possible to fail an estimate for a vertex, so we can only bound the expectation conditioned on that vertex's success. 

In order to prove a variance bound for $\Pest(\cdot)$, we will prove a stronger lemma (\Cref{lem:variance+mean-shift}) which bounds both a more general conditioning on variance and a difference in conditional expectation. We do this to strengthen our inductive hypothesis. The crux of the proof is that vertices with disjoint trees from one another have little affect on each other's position estimate, though the existence of high-degree vertices implies that there may be a large number of such vertices which have small non-zero covariance with any given vertex. Because of this, we define a slowly-increasing function $f(x)$ to represent how $x$ disjoint vertices affect the mean and variance of $\Pest(\cdot)$. 

Before stating our lemma, we define some notation. Define the functions $f(x)$ and $g(x)$ as follows.

\begin{align}
    f(x) &= 1+\frac{x}{n^q} \label{eq:f-def}, \\
    g(x) &= f(x)^2 \label{eq:g-def}.
\end{align}

We note some properties of $f(x)$ and $g(x)$ on $x \geq 0$. Namely that they are both increasing, at least 1, and $g(x) \geq f(x)$.

\begin{lemma}\label{lem:variance+mean-shift}
    Take a vertex $v$ with $\chi(v) = a$, and vertex sets $R$ and $S$. Let $\mc{E}_v$, $\mc{E}_R$, and $\mc{E}_S$ be the events that $T(v) \subseteq W$, $R \subseteq W$, and $S \subseteq W$ respectively. The following statements hold:
    \begin{enumerate}
        \item $\Var \paren*{\Pest(v) \mid \mc{E}_v \cap \mc{E}_S} \leq \sigma_a^2 \cdot g(|S|)$, and
        \item $\card{\Ex[\Pest(v) \mid \mc{E}_v \cap \mc{E}_R]- \Ex[\Pest(v) \mid \mc{E}_v \cap \mc{E}_R \cap \mc{E}_S]} \leq 3\sigma_{a} \cdot f(\card{R}+ \card{S}).$
    \end{enumerate}
\end{lemma}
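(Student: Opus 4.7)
The strategy is to establish both items simultaneously by induction on the color $a$. The base case $a=1$ is immediate: a color-$1$ vertex has no lower-color neighbors, so $\Pest(v)$ is a deterministic function of quantities that become constants after conditioning on the randomness of $\B$ (namely $\yin(v), \yout(v)$ in the low-degree branch and $\outdegB(v), \degB(v)$ in the high-degree branch). Hence the conditional variance and mean-shift are both $0$, and both bounds hold trivially since $f, g \geq 1$.

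For the inductive step we fix a vertex $v$ of color $a$ and treat the low-degree and high-degree branches separately. In both branches the plan is to express $\Pest(v)$ as an affine-plus-clamp function of the estimates $\Pest(u)$ for the vertices $u$ appearing in the sparsified recursion of $v$, strip the clamp using \cref{lem:var-of-clamp} for the variance bound and \cref{lem:conditional-clamping} for the mean-shift bound, and then invoke the inductive hypothesis on each $\Pest(u)$. A technical point is that conditioning on $\mc{E}_v$ already forces the success of each child $u \in \Rsvsin(v) \cup \Rsvsout(v) \cup \Neighbors(v)$ (because $\T(u) \subseteq \T(v)$ in the low-degree case and success is automatic for high-degree children); moreover, after factoring out $\mc{E}_u$, the residual ``sibling'' conditioning adds at most $\T_a$ vertices to the conditioning set, which is absorbed into the slack between $(\sigma_{a-1}, \delta_{a-1})$ and $(\sigma_a, \delta_a)$ through the monotonicity of $f$ and $g$ and the choice of constants in \cref{table:values}.

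In the low-degree branch the argument closely mirrors the unconditional analysis of $\Pesth$ in \cref{lem:pesth-variance} and \cref{lem:pesth-mean}. Expanding $\zloest(v) - \zliest(v)$ as a linear combination of the $2d$ selected-neighbor estimates, Cauchy--Schwarz (\cref{lem:cauchy-schwarz}) together with the inductive variance bound $\sigma_{a-1}^2\, g(|S|+\T_a)$ controls every pairwise covariance, while the triangle inequality together with the inductive mean-shift bound $3\sigma_{a-1} f(|R|+|S|+\T_a)$ controls the shift term-by-term. Division by $\yin(v)+\yout(v) \geq \alpha(|\IL(v)|+|\OL(v)|)$ contributes the factor $1/\alpha$, and the $1/(4d)$ singleton slack is dominated once $d$ is chosen as in \cref{table:values}; everything lines up so that the $(a-1)$-level bounds at inflated size $|S|+\T_a$ convert to $a$-level bounds at size $|S|$.

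The main obstacle is the high-degree branch, where the Horvitz--Thompson rescaling inflates each neighbor estimate by $n^{c|\T(u)|}$, potentially as large as $n^{c \Tmax}$. After removing the clamp it suffices to control $\Var(\zbest(v) \mid \mc{E}_v \cap \mc{E}_S)$, which expands into a double sum of scaled covariances over pairs of neighbors $u_1, u_2 \in \Neighbors(v)$. For pairs with disjoint trees $\T(u_1) \cap \T(u_2) = \emptyset$, the joint success event factors exactly so the rescaling cancels, and any residual conditional bias is controlled by the inductive mean-shift estimate applied to $u_1$ with conditioning set $\T(u_2)$, contributing only $O(\sigma_{a-1}^2\, g(\T_a))$ per pair. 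Pairs with intersecting trees can have covariance as large as $\sigma_{a-1}^2 n^{2c\Tmax}$, but \cref{lem:count-tree-containment} caps by $n^{q\, 2^{a-1}}$ the number of vertices of color $\leq a-1$ whose tree contains any given low-degree vertex; since $|\T(u_1)| \leq \T_a$, the total count of intersecting pairs is at most $|\Neighbors(v)| \cdot \T_a \cdot n^{q\, 2^{a-1}}$. The high-degree hypothesis on $v$ forces $|\Neighbors(v)| \gtrsim n^{q\, 2^a}$, so the intersecting contribution is smaller by a factor of roughly $n^{q\, 2^{a-1}}$ than the normalizer $|\Neighbors(v)|^2$, which is the precise margin needed to absorb the HT rescaling and yield the target $\sigma_a^2\, g(|S|)$ bound. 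The mean-shift bound in the high-degree case is easier: a triangle inequality over $\Neighbors(v)$ reduces it to summing the inductive mean-shift bound across HT-rescaled summands, and the HT normalization cancels in expectation so only the $3\sigma_{a-1} f(|R|+|S|+\T_a)$ factor survives.
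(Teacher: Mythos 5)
Your plan matches the paper's proof in its essential architecture: a joint induction on color, a low/high-degree split, clamp removal via \cref{lem:var-of-clamp} and \cref{lem:conditional-clamping}, and (in the high-degree branch) a good/bad decomposition of the pairwise covariance sum with the bad count controlled by \cref{lem:count-tree-containment}. However, there is one concrete gap in the high-degree case, and it affects both items. You classify a pair $(u_1,u_2)\in\Neighbors(v)^2$ as ``good'' whenever $\T(u_1)\cap\T(u_2)=\emptyset$ and then assert that the HT rescaling cancels for such a pair. That cancellation also requires $\T(u_i)\cap S=\emptyset$: if $\T(u_1)$ meets $S$, then conditional on $\mc{E}_S$ the success probability of $u_1$ is $n^{-c\card{\T(u_1)\setminus S}}$ rather than $n^{-c\card{\T(u_1)}}$, so the rescaled term carries an uncanceled multiplicative factor of $n^{c\card{\T(u_1)\cap S}}\ge n^c$; this blow-up is of a different order than the $O(\sigma_{a-1})$ ``residual conditional bias'' you invoke, and the pair cannot be treated as good. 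The paper therefore declares a pair bad if any of $\T(u_i)\cap\T(u_j)$, $\T(u_i)\cap S$, or $\T(u_j)\cap S$ is nonempty, adding roughly $2\card{\Neighbors(v)}\card{S}\,n^{q2^{a-1}}$ bad pairs to your count of $\card{\Neighbors(v)}\Tmax\, n^{q2^{a-1}}$.

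This is not just bookkeeping: after dividing by $\alpha^2\card{\Neighbors(v)}^2$ and using $\card{\Neighbors(v)}\gtrsim n^{q2^a}$, the $S$-intersecting bad pairs produce precisely the $\card{S}/n^q$ term, which is where the $g(\card{S})$ factor in the conclusion comes from. In your bookkeeping the $g(\card{S})$ dependence has no source, since your good-pair bound is constant in $\card{S}$ (it should be $\sigma_{a-1}^2 g(\card{S}+\Tmax)$, obtained by conditioning the inductive variance bound on $\T(u_j)\cup S$) and your bad-pair count omits the $\card{S}$-dependent pairs. The same issue recurs in the mean-shift step: you say the HT normalization ``cancels in expectation'' uniformly over $u_i\in\Neighbors(v)$, but it fails for the $\le(\card{R}+\card{S})\,n^{q2^{a-1}}$ neighbors whose tree meets $R\cup S$; those terms must be bounded trivially by $n^{c\Tmax}$ and counted separately, which again is the source of the $f(\card{R}+\card{S})$ factor. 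Everything else in your plan — including the observation that $\mc{E}_v$ already forces success of every child of a low-degree vertex, the use of the level-$(a{-}1)$ mean-shift bound to align the conditionings inside a good-pair covariance, and the factor of $1/\alpha$ from the clamp denominator — lines up with the paper (you also don't need the $1/(4d)$ singleton term here; that belongs to \cref{lem:pesth-variance}, not to this lemma).
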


\begin{proof}
    We first note that without loss of generality we can assume that $\T(v), R$, and $S$ are pairwise disjoint sets. If they aren't, we simply construct $R' = R-\T(v)$ and $S' = S-(\T(v) \cup R)$, along with their analogous events $\mc{E}_{R'}, \mc{E}_{S'}$. It is clear that the sets $\T(v), R', S'$ are pairwise disjoint. and if we assume the lemma holds in this case, we can see the following:
    \[
    \begin{aligned}
        \Var \paren*{\Pest(v) \mid \mc{E}_v \cap \mc{E}_S} &= \Var \paren*{\Pest(v) \mid \mc{E}_v \cap \mc{E}_{S'}} \leq \sigma_a^2 \cdot g(|S'|) \leq \sigma_a^2 \cdot g(|S|),
    \end{aligned}
    \]
    \[
    \begin{aligned}
        \card{\Ex[\Pest(v) \mid \mc{E}_v \cap \mc{E}_R]- \Ex[\Pest(v) \mid \mc{E}_v \cap \mc{E}_R \cap \mc{E}_S]} &= \card{\Ex[\Pest(v) \mid \mc{E}_v \cap \mc{E}_{R'}]- \Ex[\Pest(v) \mid \mc{E}_v \cap \mc{E}_{R'} \cap \mc{E}_{S'}]} \\
        &\leq 3\sigma_{a} \cdot f(\card{R'}+ \card{S'}) \leq 3\sigma_{a} \cdot f(\card{R}+ \card{S}).
    \end{aligned}
    \]

    Thus, from this point, we assume that $\T(v), R, S$ are pairwise disjoint. We prove this claim by induction. In the base case where $a=1$, once we condition on $\mc{E}_v$, our value of $\Pest(v)$ is determined. This is because all randomness in high-degree vertices can come only from the $\zbest(v)$ term, which is zero. For low-degree vertices, the only randomness comes from the $\zloest(v)$ and $\zliest(v)$ terms which are similarly zero, and since we condition on $\mc{E}_v$ $(\T(v) = \{v\})$ in both equations, there is no variance and $\Pest(v)$ is constant making the left hand sides of both equations zero. Thus, the lemma follows in the base case.

    The inductive case, we prove the statements in order, assuming that both claims hold for all vertices with color less than $a$. This gives us the following two inductive hypotheses for any vertex $u$ with $\chi(u) < a$ and any vertex sets $R$ and $S$.

    \begin{equation}\label{eq:variance-ih}
        \Var \paren*{\Pest(u) \mid \mc{E}_u \cap \mc{E}_S} \leq \sigma_{a-1}^2 \cdot g(|S|).
    \end{equation}
    \begin{equation}\label{eq:mean-shift-ih}
        \card{\Ex[\Pest(u) \mid \mc{E}_u \cap \mc{E}_R]- \Ex[\Pest(v) \mid \mc{E}_u \cap \mc{E}_R \cap \mc{E}_S]} \leq 3\sigma_{a-1} \cdot f(\card{R}+ \card{S}).
    \end{equation}

    Assuming these hypotheses, we prove the two claims in order as we must assume the first to prove the second.
    
    \begin{claim}\label{claim:conditional-variance}
        Let $v$ be a vertex with $\chi(v) = a$. Let $S$ be any set of low degree vertices disjoint from $\T(v)$. Let $\mc{E}_v$ be the event where $\T(v) \subseteq W$, and $\mc{E}_S$ be the event where $S \subseteq W$. Then we have
        \[
            \Var \paren*{\Pest(v) \mid \mc{E}_v \cap \mc{E}_S} \leq \sigma_a^2 \cdot g(|S|).
        \]
    \end{claim}

    \begin{proof}
        To prove this, we split into cases for high- and low-degree. Beginning with high-degree, $\T(v) = \emptyset$, so $\mc{E}_v$ happens with probability 1, and we can ignore it.
        \[
        \Var(\Pest(v)\mid\mc{E}_S) = \Var \left( \altclamp \left(
            \begin{cases}
                \frac{\alpha+1}{\alpha} \cdot \frac{\outdegB(v)}{\degB(v)} - \frac{1}{\alpha} \cdot \zbest(v) & \text{if } \chi(v) = k \\
                \frac{k-1}{k-\chi(v)} \cdot \frac{\outdegB(v)}{\degB(v)} - \frac{\chi(v)-1}{k-\chi(v)} \cdot \zbest(v) & \text{if } \chi(v) \neq k  
            \end{cases}
            \right)\ \middle|\ \mc{E}_S\right).
        \]
        
        Now, we perform several steps at once. First, use \Cref{lem:var-of-clamp} to get rid of the clamp. Then, in both cases the first term is constant and can be removed. We can also negate the second term, and in both cases bring out the constant in front of $\zbest$, but since $1/\alpha$ is always larger, we use that as the upper bound. This gives us the following.
        \[
        \Var(\Pest(v)\mid\mc{E}_S) \leq \frac{1}{\alpha^2} \Var \paren*{\zbest(v)\mid\mc{E}_S}.
        \]
        
        We now plug in the formula for $\zbest(v)$, and let $u_1, \dots, u_{\abs*{\Neighbors(v)}}$ enumerate the elements of $\Neighbors(v)$, and let their corresponding indicator variables be $X_1, \dots, X_{\abs*{\Neighbors(v)}}$ respectively. 
        \[
        \Var(\Pest(v)\mid\mc{E}_S) \leq \frac{1}{\alpha^2} \Var \paren*{\frac{1}{\abs*{\Neighbors(v)}} \displaystyle \sum_{i=1}^{\abs*{\Neighbors(v)}} X_i\ \middle|\ \mc{E}_S}.
        \]
        
        Next, we pull out the $\frac{1}{\abs*{\Neighbors(v)}}$ and use the fact that the variance of a sum is the sum of pairwise covariances.
        \[
        \Var(\Pest(v)\mid\mc{E}_S) \leq \frac{1}{\alpha^2\abs*{\Neighbors(v)}^2} \displaystyle \sum_{i=1}^{\abs*{\Neighbors(v)}} \displaystyle \sum_{j=1}^{\abs*{\Neighbors(v)}} \Cov \paren*{X_i, X_j \mid \mc{E}_S}.
        \]
        
        We now split this sum into good and bad pairs of $(i,j)$. Let $(i,j)$ be good if $\T(u_i), \T(u_j), S$ are pairwise disjoint sets. Otherwise, call the pair $(i,j)$ bad. We will prove that if $(i,j)$ is a good pair, then their covariance is very small , and if it is a bad pair, the covariance may be very large, though they only make up a small fraction of the total number of pairs. Splitting up the sum in this way gives us the following.
        \begin{equation} \label{eq:cov-split-good-bad}
            \Var(\Pest(v)\mid\mc{E}_S) \leq \frac{1}{\alpha^2\abs*{\Neighbors(v)}^2} \paren*{\mathop{\sum\sum}_{(i,j) \text{ good}} \Cov \paren*{X_i, X_j \mid \mc{E}_S} + \mathop{\sum\sum}_{(i,j) \text{ bad}} \Cov \paren*{X_i, X_j \mid \mc{E}_S}}.
        \end{equation}

         Now, assume $(i,j)$ is a good pair. We split up the covariance and use chain-rule on the events $\mc{E}_i$ and $\mc{E}_j$ that correspond to $\T(u_i) \subseteq W$ and $\T(u_j) \subseteq W$ respectively. Since $(i,j)$ is a good pair, these events as well as $\mc{E}_S$ are independent, thus all the upscaling from the definitions of $X_i, X_j$ cancel out with their probabilities from the chain rule:
        \[
        \begin{aligned}
            \Cov \paren*{X_i, X_j \mid \mc{E}_S} &\leq \Ex \bracket*{X_iX_j\mid\mc{E}_S} - \Ex \bracket*{X_i\mid\mc{E}_S}\Ex \bracket*{X_j\mid\mc{E}_S} \\
            &= \Ex \bracket*{\Pest(u_i)\Pest(u_j)\mid\mc{E}_i \cap \mc{E}_j \cap \mc{E}_S} - \Ex \bracket*{\Pest(u_i)\mid\mc{E}_i \cap \mc{E}_S}\Ex \bracket*{\Pest(u_j)\mid\mc{E}_j \cap \mc{E}_S}.
        \end{aligned}
        \]
        
        This is close to $\Cov \paren*{\Pest(u_i)\Pest(u_j)\mid\mc{E}_i \cap \mc{E}_j \cap \mc{E}_S}$, however we must replace $\Ex \bracket*{\Pest(u_i)\mid\mc{E}_i \cap \mc{E}_S}$ with $\Ex \bracket*{\Pest(u_i)\mid\mc{E}_i \cap \mc{E}_j \cap \mc{E}_S}$ and $\Ex \bracket*{\Pest(u_j)\mid\mc{E}_j \cap \mc{E}_S}$ with $\Ex \bracket*{\Pest(u_j)\mid\mc{E}_i \cap \mc{E}_j \cap \mc{E}_S}$. By \Cref{eq:mean-shift-ih}, we know that both differences are at most $3\sigma_{a-1} \cdot f(|S|+\Tmax)$. We can then use the identity that $|ab-a'b'| \leq |a-a'|+|b-b'|$ which holds when $a,b,a',b' \in [0,1]$ to show that this difference is at most $12\sigma_{a-1} \cdot f(|S|)$. This gives us
        \begin{align}
            \Cov \paren*{X_i, X_j \mid \mc{E}_S} &\leq \Cov \paren*{\Pest(u_i)\Pest(u_j) \mid \mc{E}_i \cap \mc{E}_j \cap \mc{E}_S} + 12\sigma_{a-1} \cdot f(|S|) \notag \\
            &\leq \sqrt{\Var \paren*{\Pest(u_i) \mid \mc{E}_i \cap \mc{E}_j \cap \mc{E}_S}\Var \paren*{\Pest(u_j)\mid\mc{E}_i \cap \mc{E}_j \cap \mc{E}_S}} + 12\sigma_{a-1} \cdot f(|S|). \tag{\Cref{lem:cauchy-schwarz}}
        \end{align}
        
        We can now use the inductive hypothesis to show that $\Var \paren*{\Pest(u_i)\mid\mc{E}_i \cap \mc{E}_j \cap \mc{E}_S} \leq \sigma_{a-1}^2 \cdot g(|S|+\Tmax)$, and similarly for $u_j$. This gives us the final bound for our covariance:
        \[
        \Cov \paren*{X_i, X_j \mid \mc{E}_S} \leq \sigma_{a-1}^2 \cdot g(|S|+\Tmax) + 12\sigma_{a-1} \cdot f(|S|) \leq 13\sigma_{a-1} \cdot g(|S|).
        \]
        Thus, we can use this to bound the sum of covariances of all good pairs:
        \begin{equation}\label{eq:good-pairs}
            \mathop{\sum\sum}_{(i,j) \text{ good}} \Cov \paren*{X_i, X_j \mid \mc{E}_S} \leq \abs*{\Neighbors(v)}^2 \cdot 13\sigma_{a-1} \cdot g(|S|).
        \end{equation}
        
        Now to bound the bad pairs, we see that for any indicator variables $X_i,X_j$, $\Cov(X_i,X_j|A) \leq n^{2c\Tmax}$ for any event $A$, as indicator variables lie in the range $[0,n^{c\Tmax}]$. It suffices to simply bound the number of bad pairs which can exist and then multiply by the maximum possible covariance of each pair. Consider a pair $(i,j)$, this can be bad if at least one of three conditions hold: $\T(u_i) \cap \T(u_j) \neq \emptyset$, $\T(u_i) \cap S \neq \emptyset$, or $\T(u_j) \cap S \neq \emptyset$. From \Cref{lem:count-tree-containment}, we can see that any single vertex can be in at most $n^{q\cdot 2^{a-1}}$ trees of vertices in $\Neighbors(v)$. Thus, the total number of vertices with trees intersecting with $S$ is at most $|S| \cdot n^{q\cdot 2^{a-1}}$, which accounts for $2 \cdot \abs*{\Neighbors(v)} \cdot |S| \cdot n^{q\cdot 2^{a-1}}$ pairs. Similarly, the number of pairs vertices in $\Neighbors(v)$ which can intersect each other is at most $\abs*{\Neighbors(v)} \cdot \Tmax \cdot n^{q\cdot 2^{a-1}}$. This gives us the following bound for bad pairs:
        \[
        \begin{aligned}
            \mathop{\sum\sum}_{(i,j) \text{ bad}} \Cov \paren*{X_i, X_j \mid \mc{E}_S} &\leq n^{2c\Tmax} \cdot \paren*{2 \cdot \abs*{\Neighbors(v)} \cdot |S| \cdot n^{q\cdot 2^{a-1}} + \abs*{\Neighbors(v)} \cdot \Tmax \cdot n^{q\cdot 2^{a-1}}} \\
            &\leq \abs*{\Neighbors(v)} \cdot n^{q\cdot 2^{a-1}} \cdot n^{2c\Tmax} \cdot \paren*{2|S|+\Tmax}.
        \end{aligned}
        \] 
        Substituting this and \Cref{eq:good-pairs} into \Cref{eq:cov-split-good-bad} gives an overall bound on our variance in the high-degree case (noting that $\Neighbors(v) \geq d_{\B,1}^+(v) \geq n^{q\cdot2^a}/2k$ is a simple lower bound derivable from the conditioning on \Cref{claim:high-deg-bound-prob}).
        \[
        \begin{aligned}
            \Var(\Pest(v)\mid\mc{E}_S) &\leq \frac{1}{\alpha^2\abs*{\Neighbors(v)}^2} \paren*{\abs*{\Neighbors(v)}^2 \cdot 13\sigma_{a-1} \cdot g(|S|) + \abs*{\Neighbors(v)} \cdot n^{q\cdot 2^{a-1}} \cdot n^{2c\Tmax} \cdot \paren*{2|S|+\Tmax}} \\
            &\leq \frac{13\sigma_{a-1}\cdot g(|S|)}{\alpha^2} + \frac{2k \cdot n^{q\cdot2^{a-1}}\cdot n^{2c\Tmax}\cdot (2|S|+\Tmax)}{\alpha^2\cdot n^{q\cdot2^a}} \\
            &\leq \sigma_a^2/2\cdot g(|S|) + \frac{2k \cdot2n^{2c\Tmax}}{\alpha^2 n^q}\cdot \frac{|S|+\Tmax/2}{n^q} \\
            &\leq \sigma_a^2 \cdot g(|S|).
        \end{aligned}
        \]
        
        Now for the low-degree case, we can substitute the formula for $\Pest(v)$ as we are conditioning on $\mc{E}_v$
        \[
            \Var \paren*{\Pest(v) \mid \mc{E}_v \cap \mc{E}_S} = \Var \paren*{\clamp{\frac{\yout(v)+\zloest(v)-\zliest(v)}{\yin(v)+\yout(v)}}\ \middle|\ \mc{E}_v \cap \mc{E}_S}.
        \]
        
        Now, clamp only decreases variance (\Cref{lem:var-of-clamp}), so we can ignore it, then we can pull out constants and substitute the formulas for $\zloest$ and $\zliest$.
        \[
        \begin{aligned}
            \Var &\paren*{\Pest(v) \mid \mc{E}_v \cap \mc{E}_S} \\
            &\leq \frac{1}{(\yin(v)+\yout(v))^2}\Var \paren*{\frac{\abs*{\OL(v)}}{d} \cdot \displaystyle \sum_{(u,v) \in \Rsvsout(v)} \Pest(u) - \frac{\abs*{\IL(v)}}{d} \cdot \displaystyle \sum_{(v,u) \in \Rsvsin(v)} (1-\Pest(u))\ \middle|\ \mc{E}_v \cap \mc{E}_S} \\
            &= \frac{1}{(\yin(v)+\yout(v))^2}\Var \paren*{\displaystyle \sum_{(u,v) \in \Rsvsout(v)} \frac{\abs*{\OL(v)}}{d} \cdot \Pest(u) + \displaystyle \sum_{(v,u) \in \Rsvsin(v)} \frac{\abs*{\IL(v)}}{d} \cdot \Pest(u)\ \middle|\ \mc{E}_v \cap \mc{E}_S}.
        \end{aligned}
        \]
        
        The denominator of the term our front can be lower-bounded by $\alpha^2(\abs*{\IL(v)}+\abs*{\IL(v)})^2$. In addition, we can split the variance of a sum into a sum of pairs of covariances. For notational purposes, enumerate the neighbors of $v$ in $\Rsvsout(v)$ by $u^{out}_1, \dots, u^{out}_d$, and similarly for $\Rsvsin(v)$ by $u^{in}_1, \dots, u^{in}_d$.
        \[
        \begin{aligned}
            &\leq \frac{1}{\alpha^2\paren*{\abs*{\IL(v)}+\abs*{\OL(v)}}^2} \cdot \paren*{\begin{array}{rl}
                &\displaystyle \sum_{i=1}^d \sum_{j=1}^d \Cov \paren*{\frac{\abs*{\OL(v)}}{d}\Pest(u^{out}_i), \frac{\abs*{\OL(v)}}{d}\Pest(u^{out}_j) \ \middle|\  \mc{E}_v \cap \mc{E}_S} \\
                + 2 \cdot&\displaystyle \sum_{i=1}^d \sum_{j=1}^d \Cov \paren*{\frac{\abs*{\OL(v)}}{d}\Pest(u^{out}_i), \frac{\abs*{\IL(v)}}{d}\Pest(u^{in}_j) \ \middle|\  \mc{E}_v \cap \mc{E}_S} \\
                + &\displaystyle \sum_{i=1}^d \sum_{j=1}^d \Cov \paren*{\frac{\abs*{\IL(v)}}{d}\Pest(u^{in}_i), \frac{\abs*{\IL(v)}}{d}\Pest(u^{in}_j) \ \middle|\  \mc{E}_v \cap \mc{E}_S}\\
            \end{array}} \\
            &= \frac{1}{\alpha^2\paren*{\abs*{\IL(v)}+\abs*{\OL(v)}}^2} \cdot \paren*{\begin{array}{rl}
                \frac{\abs*{\OL(v)}^2}{d^2}\cdot&\displaystyle \sum_{i=1}^d \sum_{j=1}^d \Cov \paren*{\Pest(u^{out}_i), \Pest(u^{out}_j) \ \middle|\  \mc{E}_v \cap \mc{E}_S} \\
                + \frac{2\abs*{\OL(v)}\abs*{\IL(v)}}{d^2} \cdot&\displaystyle \sum_{i=1}^d \sum_{j=1}^d \Cov \paren*{\Pest(u^{out}_i), \Pest(u^{in}_j) \ \middle|\  \mc{E}_v \cap \mc{E}_S} \\
                + \frac{\abs*{\IL(v)}^2}{d^2}\cdot&\displaystyle \sum_{i=1}^d \sum_{j=1}^d \Cov \paren*{\Pest(u^{in}_i), \Pest(u^{in}_j) \ \middle|\  \mc{E}_v \cap \mc{E}_S}.\\
            \end{array}} \\
        \end{aligned}
        \]
        
        Now consider any of these covariance terms between vertices which we will name $u'_1$ and $u'_2$. By Cauchy-Schwarz, we can split this into a geometric mean of variances. $\Cov\paren*{\Pest(u'_1), \Pest(u'_2)\mid \mc{E}_v \cap \mc{E}_S} \leq \sqrt{\Var\paren*{\Pest(u'_1)\mid \mc{E}_v \cap \mc{E}_S}\Var\paren*{\Pest(u'_2)\mid \mc{E}_v \cap \mc{E}_S}}$. Similarly to before, for the first variance, we can split up $\mc{E}_v$ into $\mc{E}_{u'_1} \cap \mc{E}_{v-{u'_1}}$ where $\mc{E}_{u'_1}$ corresponds to the event where $\T(u'_1) \subseteq W$ and $\mc{E}_{v-{u'_1}}$ corresponds to the event where $\T(v) - \T(u'_1) \subseteq W$. This holds because any tree of a child of a low-degree vertex must be a subset of the tree of its parent. We then group the second event with $\mc{E}_S$ and use the inductive hypothesis to show that $\Var\paren*{\Pest(u'_1)\mid \mc{E}_v \cap \mc{E}_S} \leq \sigma_{a-1}^2 \cdot g(|S|+\Tmax)$. The same logic applies to $u'_2$, so we know that this bound also applies to each covariance term. This simplifies the expression above to the following  
        \[
        \begin{aligned}
            &\leq \frac{1}{\alpha^2\paren*{\abs*{\IL(v)}+\abs*{\OL(v)}}^2} \cdot \paren*{\frac{\paren*{\abs*{\IL(v)}+\abs*{\OL(v)}}^2}{d^2} \cdot d^2 \cdot \sigma_{a-1}^2 \cdot g(|S|+\Tmax)} \\
            &= \frac{1}{\alpha^2} \cdot \sigma_{a-1}^2 \cdot g(|S|+\Tmax) \leq \sigma_a^2 \cdot g(|S|),
        \end{aligned}
        \]
        which concludes the proof of \Cref{claim:conditional-variance}.
    \end{proof}
    
    Next, we prove the second claim of \Cref{lem:variance+mean-shift} which uses the above.
    
    \begin{claim}\label{claim:mean-shift}
        Take a vertex $v$, and vertex sets $R$ and $S$, such that $T(v)$, $R$, and $S$ are pairwise disjoint. Let $\mc{E}_v$, $\mc{E}_R$, and $\mc{E}_S$ be the events that $T(v) \subseteq W$, $R \subseteq W$, and $S \subseteq W$ respectively. It holds that
        \[
            \card{\Ex[\Pest(v) \mid \mc{E}_v \cap \mc{E}_R]- \Ex[\Pest(v) \mid \mc{E}_v \cap \mc{E}_R \cap \mc{E}_S]} \leq 3\sigma_{a-1} \cdot f(\card{R}+ \card{S}).
        \]
    \end{claim}
    
    \begin{proof}
        First, we consider the case where $v$ is a high-degree vertex.
        Recall the formulation of $\Pest(v)$ for high-degree vertices:
        $$
        \Pest(v) = \altclamp \left(
            \begin{cases}
                \frac{\alpha+1}{\alpha} \cdot \frac{\outdegB(v)}{\degB(v)} - \frac{1}{\alpha} \cdot \zbest & \text{if } \chi(v) = k \\
                \frac{k-1}{k-\chi(v)} \cdot \frac{\outdegB(v)}{\degB(v)} - \frac{\chi(v)-1}{k-\chi(v)} \cdot \zbest & \text{if } \chi(v) \neq k  
            \end{cases}
            \right).
        $$
        Note that after fixing the randomness of $\B$, both $\degB(v)$ and $\outdegB(v)$ are constants.
        Also for any color $\chi(v)$, it holds $\frac{\chi(v) - 1}{k - \chi(v)} \leq \frac{1}{\alpha}$, and we can remove the $\altclamp$ with \cref{lem:conditional-clamping}.
        Therefore, it follows:
        \begin{align*}
           \mathsf{(LHS)}
            \leq \frac{1}{\alpha} \card{\Ex\left[\zbest(v) \mid \mc{E}_v \cap \mc{E}_R\right]
            - \Ex\left[\zbest(v) \mid \mc{E}_v \cap \mc{E}_R \cap \mc{E}_S\right]
            } &+  \sqrt{\Var\paren*{\Pest(v) \mid \mc{E}_v \cap \mc{E}_R}} \\
            &+ \sqrt{\Var\paren*{\Pest(v) \mid \mc{E}_v \cap \mc{E}_R \cap \mc{E}_S}}.
        \end{align*}

        By \Cref{claim:conditional-variance}, we can bound the last two terms by $\sqrt{\sigma_a^2 \cdot g(|R|)}$ and $\sqrt{\sigma_a^2 \cdot g(|R|+|S|)}$ for a total of at most $2\sigma_a \cdot f(|R|+|S|)$. Thus, it suffices to show that
        \[
        \frac{1}{\alpha} \card{\Ex\left[\zbest(v) \mid \mc{E}_v \cap \mc{E}_R\right]
            - \Ex\left[\zbest(v) \mid \mc{E}_v \cap \mc{E}_R \cap \mc{E}_S\right]
            } \leq \sigma_a \cdot f(|R|+|S|).
        \]
        Let this term be denoted by $(*)$, we can expand the definition of $\zbest(v)$, and let $X_i$ denote the indicator variable for $u_i \in \Neighbors(v)$, the $i$-th sampled neighbor of $v$ in the Horvitz-Thompson average. 
        \[
        (*) \leq \frac{1}{\alpha \card{\Neighbors(v)}} \sum_{i=1}^{\card{\Neighbors(v)}} \card{\Ex \bracket*{X_i \mid \mc{E}_v \cap \mc{E}_R} - \Ex \bracket*{X_i \mid \mc{E}_v \cap \mc{E}_R \cap \mc{E}_S}}.
        \]
    
        Now, we bound the sum by dividing its terms into two groups: (1) terms corresponding to neighbors such that $\T(u_i)$ is disjoint from $R$ and $S$ (note that all high-degree neighbors are in this group).
        In this case, by the induction hypothesis, we can bound each term by $3\sigma_{a-1} \cdot f(\card{R} + \card{S})$. This is because if $\T(u_i)$ is disjoint from $R$ and $S$, then the event where $\T(u_i) \subseteq W$ is independent from $\mc{E}_R \cap \mc{E}_S$, meaning that the probability of success cancels with the upscaling of $X_i$.
        Since there are a total of $\card{\Neighbors(v)}$ terms, the total contribution of these terms to the sum is at most $\card{\Neighbors(v)} \cdot 3\sigma_{a-1} \cdot f(\card{R} + \card{S})$.
        (2) terms corresponding to neighbors with $T(u_i)$ intersecting either $R$ or $S$.
        In this case, we bound each term trivially by the maximum value of an indicator variable, which is $n^{c \Tmax}$.
        By \Cref{lem:count-tree-containment}, each vertex in $R$ or $S$ is included in the tree of at most $n^{q2^{a-1}}$ neighbors of $v$. 
        Consequently, the number of neighbors $u_i$ for which $T(u_i)$ intersects with $R$ or $S$ is at most $(\card{R} + \card{S}) n^{q2^{a-1}}$, 
        and the total contribution of such terms to the sum can be bounded by 
        $(\card{R} + \card{S}) n^{q2^{a-1}} \cdot n^{c \Tmax}$.
    
        Overall, we can bound the sum as follows (noting that $\Neighbors(v) \geq d_{\B,1}^+(v) \geq n^{q\cdot2^a}/2k$ is a simple lower-bound derivable from the conditioning on \Cref{claim:high-deg-bound-prob}):
        \begin{align}
            (*)
            &\leq \frac{1}{\alpha \card{\Neighbors(v)}} \sum_{i=1}^{\card{\Neighbors(v)}} \card{\Ex \bracket*{X_i \mid \mc{E}_v \cap \mc{E}_R} - \Ex \bracket*{X_i \mid \mc{E}_v \cap \mc{E}_R \cap \mc{E}_S}} \notag \\
            &\leq \frac{1}{\alpha \card{\Neighbors(v)}}\left(\card{\Neighbors(v)} \cdot 3\sigma_{a-1} \cdot f(\card{R} + \card{S}) + (\card{R} + \card{S}) n^{q2^{a-1}} \cdot n^{c \Tmax}\right) \notag \\
            &\leq \frac{3\sigma_{a-1}}{\alpha} f(\card{R} + \card{S}) + 
            (\card{R} + \card{S}) \frac{2k \cdot n^{q2^{a-1}} \cdot n^{c \Tmax}}{\alpha \cdot n^{q2^{a}}} \label{eq:conditional-expectation-1} \\
            &\leq \sigma_{a} \cdot f(\card{R} + \card{S}). \label{eq:conditional-expectation-2}
        \end{align}
        Here, \eqref{eq:conditional-expectation-1} follows from the fact that $v$ is high-degree (i.e., $\card{\Neighbors(v)} \geq n^{q2^{a}}$), and \eqref{eq:conditional-expectation-2} holds since $3\sigma_{a - 1} / \alpha \leq \sigma_{a} / 2$ and 
        $$
        (\card{R} + \card{S}) \frac{n^{q2^{a-1}} \cdot n^{c \Tmax}}{\alpha \cdot n^{q2^{a}}}
        \leq \frac{\card{R} + \card{S}}{n^q} \cdot \frac{n^{c\Tmax}}{\alpha \cdot n^q} 
        \leq f(\card{R} + \card{S}) \frac{\sigma_{a}}{2}.
        $$
        This concludes the proof for high-degree vertices.
    
        Next, we move on to proving the claim for a low-degree vertex $v$.
        Recall the definition of $\Pest(v)$ for low-degree vertices:
        $$
        \Pest(v) = \clamp{\frac{\yout(v) + \zloest(v) - \zliest(v)}{\yin(v)+\yout(v)}}.
        $$
        We can remove the clamp (by \cref{lem:conditional-clamping}) and the constants $\yout(v)$ and $\yin(v)$ and lower-bound the denominator by $\alpha (\card{\IL(v)} + \card{\OL(v)})$ to get (noting that if $\IL(v)$ and $\OL(v)$ are empty then the case is trivial):

        \[
        \begin{aligned}
            \mathsf{(LHS)} &\leq \frac{\card{\Ex[\zloest(v) - \zliest(v) \mid \mc{E}_v \cap \mc{E}_R] -\Ex[\zloest(v) - \zliest(v) \mid \mc{E}_v \cap \mc{E}_R \cap \mc{E}_S]}}{\alpha (\card{\IL(v)} + \card{\OL(v)})} \\
            &+ \sqrt{\Var\paren*{\Pest(v) \mid \mc{E}_v \cap \mc{E}_R}}
            + \sqrt{\Var\paren*{\Pest(v) \mid \mc{E}_v \cap \mc{E}_R \cap \mc{E}_S}}.
        \end{aligned}
        \]
        Again, by \Cref{claim:conditional-variance}, we can bound the right two terms by $\sqrt{\sigma_a^2 \cdot g(|R|)}$ and $\sqrt{\sigma_a^2 \cdot g(|R|+|S|)}$ for a total of at most $2\sigma_a \cdot f(|R|+|S|)$. Thus, it suffices to show that
        \begin{equation}\label{eq:conditional-expectation-3}
            \frac{\card{\Ex[\zloest(v) - \zliest(v) \mid \mc{E}_v \cap \mc{E}_R] -\Ex[\zloest(v) - \zliest(v) \mid \mc{E}_v \cap \mc{E}_R \cap \mc{E}_S]}}{\alpha (\card{\IL(v)} + \card{\OL(v)})} \leq \sigma_a \cdot f(|R|+|S|).
        \end{equation}
        
        Let us restrict our attention to this difference for $\zliest(v)$,
        letting $(*)$ denote the difference $\card{\Ex[\zliest(v) \mid \mc{E}_v \cap \mc{E}_R] -\Ex[\zliest(v) \mid \mc{E}_v \cap \mc{E}_R \cap \mc{E}_S]}$.
        If $\IL(v)$ is empty, the difference is 0. Otherwise, recall the definition of $\zliest$:
        $$
        \zliest(v) = \frac{\card{\IL(v)}}{d} \sum_{i = 1}^d \Pest(u_i),
        $$
        where $u_1, u_2, \ldots, u_d$ are the sampled neighbors $\Rsvsin(v)$. From this, we obtain:
        \begin{equation}
        (*) \leq \frac{\card{\IL(V)}}{d} \sum_{i=1}^d \card{\Ex[\Pest(u_i) \mid \mc{E}_v \cap \mc{E}_R] - \Ex[\Pest(u_i) \mid \mc{E}_v \cap \mc{E}_R \cap \mc{E}_S]}.
         \label{eq:conditional-expectation-4}
        \end{equation}
        
        Now, we bound each term using the induction hypothesis.
        For a neighbor $u_i$, let $\mc{E}_i$ denote the event that $T(u_i) \subseteq W$, and $\mc{E}_{v - i}$ denote the event that $T(v) \setminus T(u_i) \subseteq W$.
        We can rewrite the event $\mc{E}_v$ as $\mc{E}_i \cap \mc{E}_{v-i}$, and it follows:
        \begin{align}
            |\Ex[\Pest(u_i) \mid \mc{E}_v \cap \mc{E}_R] - \Ex[\Pest(u_i) &\mid \mc{E}_v \cap \mc{E}_R \cap \mc{E}_S]| \notag \\
            &= \card{\Ex[\Pest(u_i) \mid \mc{E}_i \cap (\mc{E}_{v-i} \cap \mc{E}_R)] - \Ex[\Pest(u_i) \mid \mc{E}_i \cap (\mc{E}_{v-i} \cap \mc{E}_R) \cap \mc{E}_S]}  \notag \\
            &\leq 3\sigma_{a - 1} \cdot f(\card{R \cup (T(v) \setminus T(u_i))} + \card{S})  \label{eq:conditional-expectation-5} \\
            &\leq 3\sigma_{a - 1} \cdot f(\card{R} + \card{S} + \card{\Tmax}) \notag \\
            &\leq 3\sigma_{a - 1} \cdot 2f(\card{R} + \card{S}), \notag
        \end{align}
        where \eqref{eq:conditional-expectation-5} can be derived by invoking the induction hypothesis for $u_i$, $R' = R \cup (T(v) \setminus T(u_i)$, and $S$.
        Observe that since $T(v)$, $R$, and $S$ are pairwise disjoint, so are $T(u_i)$, $R'$, and $S$ (as the new sets are obtained by moving some vertices from the first set to the second).
    
        Plugging the upper bound for each term into the sum in \eqref{eq:conditional-expectation-4} yields
        $$
        \card{\Ex[\zliest(v) \mid \mc{E}_v \cap \mc{E}_R] -\Ex[\zliest(v) \mid \mc{E}_v \cap \mc{E}_R \cap \mc{E}_S]} \leq \card{\IL(v)} \cdot 6\sigma_{a - 1} \cdot f(\card{R} + \card{S}).
        $$
        Similarly, it can be shown for the $\zloest$ that
        $$
        \card{\Ex[\zloest(v) \mid \mc{E}_v \cap \mc{E}_R] -\Ex[\zloest(v) \mid \mc{E}_v \cap \mc{E}_R \cap \mc{E}_S]} \leq \card{\OL(v)} \cdot 6\sigma_{a - 1} \cdot f(\card{R} + \card{S}).
        $$
        Putting everything together, we see that the left hand side of \eqref{eq:conditional-expectation-3} is at most
        $$
        \frac{6\sigma_{a-1} \card{\IL(v)} + 6\sigma_{a-1} \card{\OL(v)}}{\alpha(\card{\IL(v)} + \card{\OL(v)})} \cdot f(\card{R} + \card{S}) \leq \sigma_{a} \cdot f(\card{R} + \card{S}).
        $$
        This concludes the proof for the low-degree vertices, and in turn, the claim.
    \end{proof}
Thus, the combination of \Cref{claim:conditional-variance} and \Cref{claim:mean-shift} conclude the proof of \Cref{lem:variance+mean-shift}
\end{proof}

Now, we show that our expected $\Pest(\cdot)$ is close to $\pos(\cdot)$.

\begin{lemma}\label{lem:Ex-P-pesth}
    For any vertex $v$, with color $\chi(v) =a$, let $\mc{E}_v$ be the event where $\T(v) \subseteq W$. It holds that
    \[
    \abs*{\Ex \bracket*{\Pest(v) \mid \mc{E}_v} - \Pesth(v)} \leq \delta_a.
    \]
\end{lemma}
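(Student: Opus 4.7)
The plan is to induct on the color $a = \chi(v)$. The base case $a = 1$ is trivial: a color-$1$ vertex has no lower-colored neighbors, so either $\Rsvsin(v), \Rsvsout(v)$ contain no valid edges (low-degree), forcing $\zliest(v) = \zliesth(v) = \zloest(v) = \zloesth(v) = 0$, or $\Neighbors(v) = \emptyset$ (high-degree), forcing $\zbest(v) = \zbesth(v) = 0$ (and the coefficient of $\zbest$ in \Cref{line:Pest-def-high-deg} vanishes at $\chi(v)=1$ anyway). In either sub-case $\Pest(v) = \Pesth(v)$ deterministically.

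For the inductive step, assume the bound for all $u$ with $\chi(u) < a$. In both cases I strip the clamp via \Cref{lem:conditional-clamping} to obtain
\[
|\Ex[\Pest(v)\mid\mc{E}_v] - \Pesth(v)| \leq |\Ex[Y\mid\mc{E}_v] - Y'| + \sqrt{\Var(\Pest(v)\mid\mc{E}_v)},
\]
where $Y, Y'$ are the pre-clamp expressions in the definitions of $\Pest(v)$ and $\Pesth(v)$; the variance is at most $\sigma_a^2$ by part 1 of \Cref{lem:variance+mean-shift} with $S = \emptyset$. For a \emph{high-degree} $v$, $\mc{E}_v$ is vacuous, and the Horvitz-Thompson rescaling is tailored so that $\Ex[\zbest(v)] = \tfrac{1}{|\Neighbors(v)|}\sum_u \Ex[\Pest(u)\mid\mc{E}_u]$, matching $\zbesth(v) = \tfrac{1}{|\Neighbors(v)|}\sum_u \Pesth(u)$ term by term; the induction hypothesis gives $|\Ex[\zbest(v)] - \zbesth(v)| \leq \delta_{a-1}$, hence $|\Ex[Y]-Y'| \leq \delta_{a-1}/\alpha$ (the coefficient of $\zbest$ is at most $1/\alpha$ regardless of $\chi(v)$). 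For a \emph{low-degree} $v$, I decompose $|\Ex[Y\mid\mc{E}_v]-Y'|$ into terms of the form $|\Ex[\Pest(u_i)\mid\mc{E}_v] - \Pesth(u_i)|$, one per selected neighbor $u_i \in \Rsvsin(v)\cup\Rsvsout(v)$. Since $\T(u_i) \subseteq \T(v)$ by algorithm construction, conditioning on $\mc{E}_v$ is the same as $\mc{E}_{u_i} \cap \mc{E}_{R_i}$ with $R_i = \T(v)\setminus\T(u_i)$ and $|R_i| \leq \Tmax$. Applying part 2 of \Cref{lem:variance+mean-shift} to $u_i$ (with the lemma's $R=\emptyset$ and $S=R_i$) bounds the shift $|\Ex[\Pest(u_i)\mid\mc{E}_v] - \Ex[\Pest(u_i)\mid\mc{E}_{u_i}]|$ by $3\sigma_{a-1}\,f(\Tmax)$, and the induction hypothesis closes the remaining gap with $\delta_{a-1}$. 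Summing the $d$ contributions and using $\yin(v)+\yout(v) \geq \alpha(|\IL(v)|+|\OL(v)|)$ from \cref{eq:yinout-simple}, both cases yield
\[
|\Ex[\Pest(v)\mid\mc{E}_v] - \Pesth(v)| \leq \tfrac{1}{\alpha}\bigl(3\sigma_{a-1}\,f(\Tmax) + \delta_{a-1}\bigr) + \sigma_a,
\]
which the parameter choices in \Cref{table:values} ($\sigma_a = \delta_a^2$, $\delta_{a-1} = \delta_a^4$, $\alpha = \epsilon^4$, and $f(\Tmax)$ arbitrarily close to $1$) render at most $\delta_a$.

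The main obstacle is the low-degree case: conditioning on $\T(v) \subseteq W$ is strictly stronger than the natural $\T(u_i) \subseteq W$ that the induction hypothesis directly controls, because the algorithm insists on the trees of \emph{all} selected neighbors (plus the dummy vertices) being in $W$ simultaneously, and these subtrees can share vertices in complicated ways with one another. The cleanly-stated mean-shift bound in part 2 of \Cref{lem:variance+mean-shift}, with its slowly-growing $f(\cdot)$ factor, is precisely the tool that lets this extra conditioning be absorbed within the inductive step.
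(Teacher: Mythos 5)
Your proposal is correct and follows essentially the same path as the paper: induction on color, stripping the clamp with \Cref{lem:conditional-clamping}, bounding the variance term via part 1 of \Cref{lem:variance+mean-shift} with $S=\emptyset$, decomposing the pre-clamp mean difference per selected neighbor, and using the nesting $\T(u_i)\subseteq\T(v)$ to invoke part 2 of \Cref{lem:variance+mean-shift} (with $R=\emptyset$, $S=\T(v)\setminus\T(u_i)$) to control the shift from $\mc{E}_{u_i}$ to $\mc{E}_v$, with the induction hypothesis closing the gap. The only cosmetic difference is that the paper folds $3\sigma_{a-1}f(\Tmax)\le\delta_{a-1}$ into a clean $2\delta_{a-1}/\alpha + \sigma_a$ bound before checking against $\delta_a$, whereas you carry the $f(\Tmax)$ factor explicitly until the end; both close identically.
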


\begin{proof}  
    We prove this lemma inductively, let $\chi(v) = 1$. The low degree case is trivial as $\zliesth(v) = \zloesth(v) = \zliest(v) = \zloest(v) = 0$, so $\Pesth(v) = \Pest(v)$ with probability 1. If $v$ is high-degree, then $\zbesth = \zbest = 0$, so again we have $\Pesth(v) = \Pest(v)$ concluding the base case. Now consider $\chi(v) = a$. We can assume the following inductive hypothesis for any vertex $u$ with $\chi(u) < a$ and $\mc{E}_u$ being the event where $\T(u) \subseteq W$.
    \begin{equation}\label{eq:Ex-P-pesth-ih}
        \abs*{\Ex \bracket*{\Pest(u) \mid \mc{E}_u} - \Pesth(u)} \leq \delta_{a-1}.
    \end{equation}

    Again, we begin with the case where $v$ is low-degree and simply expand the formulas.
    \[
    \abs*{\Ex \bracket*{\Pest(v)\mid\mc{E}_v} - \Pesth(v)} = \abs*{\Ex \bracket*{\clamp{\frac{\yout(v)- \zloest(v)+\zliest(v)}{\yin(v) + \yout(v)}}\ \middle|\ \mc{E}_v}- \clamp{\frac{\yout(v)- \zloesth(v)+\zliesth(v)}{\yin(v) + \yout(v)}}}
    \]
    
    Now, we can apply \Cref{lem:conditional-clamping} to get the following.
    \begin{align}
        &\abs*{\Ex \bracket*{\Pest(v)\mid\mc{E}_v} - \Pesth(v)} \leq \abs*{\Ex \bracket*{\frac{\yout(v)- \zloest(v)+\zliest(v)}{\yin(v) + \yout(v)} \ \middle|\  \mc{E}_v}- \frac{\yout(v)- \zloesth(v)+\zliesth(v)}{\yin(v) + \yout(v)}} + \sqrt{\Var(\Pest(v) \mid \mc{E}_v)} \notag \\
        &\leq \frac{1}{\alpha \cdot (\abs*{\IL(v)} + \abs*{\OL(v)})} \cdot \paren*{\abs*{\Ex \bracket*{\zloest(v) \ \middle|\  \mc{E}_v} - \zloesth(v)} + \abs*{\Ex \bracket*{\zliest(v) \ \middle|\  \mc{E}_v} - \zliesth(v)}} + \sqrt{\Var(\Pest(v) \mid \mc{E}_v)}. \label{eq:Pest-mean-eq-1}
    \end{align}
    Here, the second step is just simplification and applying \Cref{eq:yinout-simple}, noting that if $\IL(v)$ and $\OL(v)$ are empty the lemma is trivial. Now, we bound $\abs*{\Ex \bracket*{\zliest(v) \mid \mc{E}_v} - \zliesth(v)}$. An analogous bound applies to $\abs*{\Ex \bracket*{\zloest(v) \mid \mc{E}_v} - \zloest(v)}$. To do this, first note that if $\IL(v)$ is empty, then the difference is 0. We can expand the definition of both $\zliest(v)$ and $\zliesth(v)$. For notation, let the $d$ neighbors of $v$ in $\Rsvsin(v)$ be $u^{in}_1, \dots, u^{in}_d$.
    \[
    \begin{aligned}
        \abs*{\Ex \bracket*{\zliest(v) \ \middle|\  \mc{E}_v} - \zliesth(v)} &= \abs*{\Ex \bracket*{\frac{\abs*{\IL(v)}}{d} \displaystyle \sum_{i=1}^d \Pest(u^{in}_i) \ \middle|\  \mc{E}_v} - \frac{\abs*{\IL(v)}}{d} \displaystyle \sum_{i=1}^d \Pesth(u^{in}_i)} \\
        &\leq \frac{\abs*{\IL(v)}}{d} \sum_{i=1}^d \abs*{\Ex \bracket*{\Pest(u^{in}_i) \ \middle|\  \mc{E}_v} - \Pesth(u^{in}_i)}.
    \end{aligned}
    \]

    Now, from the second part of \Cref{lem:variance+mean-shift}, we know that $\abs*{\Ex \bracket*{\Pest(u^{in}_i) \mid \mc{E}_v} - \Ex \bracket*{\Pest(u^{in}_i) \mid \mc{E}_{u^{in}_i}}} \leq \delta_{a-1}$ (Where $\mc{E}_{u^{in}_i}$ is the event where $\T(u^{in}_i) \subseteq W$). This is because $\mc{E}_v = \mc{E}_{u^{in}_i} \cap \mc{E}_v$, so we can apply our lemma and get a bound of $3\sigma_{a-1}\cdot f(\Tmax) \leq \delta_{a-1}$.
    This fact along with \Cref{eq:Ex-P-pesth-ih} gives us 
    \[
    \abs*{\Ex \bracket*{\zliest(v) \ \middle|\  \mc{E}_v} - \zliesth(v)} \leq \frac{\abs*{\IL(v)}}{d} \sum_{i=1}^d \paren*{\delta_{a-1} + \delta_{a-1}} \leq \abs*{\IL(v)}\cdot 2\delta_{a-1}.
    \]

    Analogously, for $\zloest(v)$ we have
    \[
    \abs*{\Ex \bracket*{\zloest(v) \ \middle|\  \mc{E}_v} - \zloesth(v)} \leq \abs*{\OL(v)}\cdot 2\delta_{a-1}.
    \]
    Plugging these back into \Cref{eq:Pest-mean-eq-1} along with the first part of \Cref{lem:variance+mean-shift} gives
    \[
    \begin{aligned}
        \abs*{\Ex \bracket*{\Pest(v)\mid\mc{E}_v} - \Pesth(v)} &\leq \frac{1}{\alpha \cdot (\abs*{\IL(v)} + \abs*{\OL(v)})} \cdot \paren*{2\delta_{a-1}\cdot\abs*{\OL(v)} + 2\delta_{a-1}\cdot\abs*{\IL(v)}} + \sigma_{a} \\
        &\leq \frac{2\delta_{a-1}}{\alpha}+\sigma_{a} \leq \delta_a.
    \end{aligned}
    \]
    This concludes the proof for the low-degree case. 
    
    Now, we assume that $v$ is high-degree. We can ignore the conditioning on $\mc{E}_v$ as it happens with probability 1. We bound the desired difference by substituting their formulas and using \Cref{lem:conditional-clamping} as we did in the low-degree case. As done previously, we will use the case where $\chi(v) = k$ for the formula as it is an upper-bound on both cases because $\frac{\chi(v) - 1}{k - \chi(v)} \leq k - 2 \leq \frac{1}{\alpha}$.
    \begin{align}
        \abs*{\Ex \bracket*{\Pest(v)} - \Pesth(v)} &\leq \abs*{\Ex \bracket*{\frac{\alpha+1}{\alpha} \cdot\frac{\outdegB(v)}{\degB(v)} - \frac{\zbest(v)}{\alpha}} - \paren*{\frac{\alpha+1}{\alpha} \cdot\frac{\outdegB(v)}{\degB(v)} - \frac{\zbesth(v)}{\alpha}}} + \sqrt{\Var(\Pest(v))} \notag\\
        &\leq \frac{1}{\alpha}\abs*{\Ex[\zbest(v)]-\zbesth(v)} + \sqrt{\Var(\Pest(v))} \label{eq:Pest-mean-eq-hi}.
    \end{align}

    We see that we can upper-bound the second term with \Cref{lem:variance+mean-shift}, so we focus on the first. We see that for $u\in \Neighbors(v)$ with indicator variable $X_u$ and event $\mc{E}_u$ corresponding to $\T(u) \subseteq W$, we know that $\Ex[X_u] = \Ex[\Pest(u)\mid\mc{E}_u]$. Thus, we can bound the desired term as follows using triangle inequality and \Cref{eq:Ex-P-pesth-ih}.
    \[
    \begin{aligned}
        \abs*{\Ex[\zbest(v)]-\zbesth(v)} &\leq \abs*{\frac{1}{\abs*{\Neighbors(v)}} \sum_{u \in \Neighbors(v)} X_u -\frac{1}{\abs*{\Neighbors(v)}} \sum_{u \in \Neighbors(v)} \Pesth(u)} \\
        &\leq \frac{1}{\abs*{\Neighbors(v)}} \sum_{u \in \Neighbors(v)} \abs*{\Ex[\Pest(u)\mid\mc{E}_u] - \Pesth(u)} \leq \frac{1}{\abs*{\Neighbors(v)}} \sum_{u \in \Neighbors(v)} \delta_{a-1} \leq \delta_{a-1}.
    \end{aligned}
    \]
    Substituting this back into \Cref{eq:Pest-mean-eq-hi} along with the first part of \Cref{lem:variance+mean-shift} gives
    \[
    \abs*{\Ex \bracket*{\Pest(v)} - \Pesth(v)} \leq \frac{1}{\alpha}\cdot \delta_{a-1} + \sigma_a \leq \delta_a.
    \]
    This concludes the proof of the high-degree case and the lemma.
\end{proof}

Since our expectation is close, we also prove that our estimate $\Pest(\cdot)$ is close to $\Pesth(\cdot)$ with good probability, even when conditioned on an arbitrary other vertex's success. This will become relevant when extending this result about vertices to a result about edges.

\begin{lemma}\label{lem:Pr-P-pesth}
    For any vertices $u,v$,  let $\mc{E}_u,\mc{E}_v$ be the events where $\T(u),\T(v) \subseteq W$. It holds that
    \[
    \Pr \paren*{\abs*{\Pest(v) - \Pesth(v)} > 3\delta_k \mid \mc{E}_v, \mc{E}_u} \leq \delta_k.
    \]
\end{lemma}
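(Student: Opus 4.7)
The plan is to combine the two bounds established by \Cref{lem:variance+mean-shift} with the bias bound from \Cref{lem:Ex-P-pesth}, and then invoke Chebyshev's inequality. The role of conditioning on $\mc{E}_u$ is to absorb the shift in mean and the inflation in variance caused by this extra conditioning, both of which are controlled by \Cref{lem:variance+mean-shift} applied with $S = \T(u)$ (and $R = \emptyset$).

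\textbf{Step 1 (bounding the conditional bias).} First I would bound $|\Ex[\Pest(v) \mid \mc{E}_v \cap \mc{E}_u] - \Pesth(v)|$. By \Cref{lem:Ex-P-pesth} applied to $v$ (with $\chi(v) \le k$), we have $|\Ex[\Pest(v) \mid \mc{E}_v] - \Pesth(v)| \le \delta_k$. Then by the second part of \Cref{lem:variance+mean-shift} with $R = \emptyset$ and $S = \T(u)$,
\[
\bigl|\Ex[\Pest(v) \mid \mc{E}_v] - \Ex[\Pest(v) \mid \mc{E}_v \cap \mc{E}_u]\bigr| \le 3\sigma_k \cdot f(|\T(u)|) \le 3\sigma_k \cdot f(\Tmax).
\]
Since $f(\Tmax) = 1 + \Tmax/n^q \le 2$ for $n$ large enough, and $\sigma_k = \delta_k^2$, this is at most $6\delta_k^2 \le \delta_k$. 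A triangle inequality then gives $|\Ex[\Pest(v) \mid \mc{E}_v \cap \mc{E}_u] - \Pesth(v)| \le 2\delta_k$.

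\textbf{Step 2 (bounding the conditional variance and applying Chebyshev).} Next, the first part of \Cref{lem:variance+mean-shift} with $S = \T(u)$ gives
\[
\Var(\Pest(v) \mid \mc{E}_v \cap \mc{E}_u) \le \sigma_k^2 \cdot g(|\T(u)|) \le \sigma_k^2 \cdot g(\Tmax),
\]
and again $g(\Tmax) \le 4$ by our choice of constants. Now Chebyshev (\Cref{lem:chebyshev}) applied to the conditional distribution of $\Pest(v)$ given $\mc{E}_v \cap \mc{E}_u$ yields
\[
\Pr\bigl(|\Pest(v) - \Ex[\Pest(v) \mid \mc{E}_v \cap \mc{E}_u]| > \delta_k \,\bigm|\, \mc{E}_v \cap \mc{E}_u\bigr) \le \frac{\sigma_k^2 \cdot g(\Tmax)}{\delta_k^2} = \delta_k^2 \cdot g(\Tmax) \le \delta_k,
\]
where we used $\sigma_k = \delta_k^2$.

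\textbf{Step 3 (combining).} Finally, combining Step 1 and Step 2 by triangle inequality, the event $|\Pest(v) - \Ex[\Pest(v) \mid \mc{E}_v \cap \mc{E}_u]| \le \delta_k$ (which holds with probability at least $1 - \delta_k$ conditional on $\mc{E}_v \cap \mc{E}_u$) implies $|\Pest(v) - \Pesth(v)| \le \delta_k + 2\delta_k = 3\delta_k$, completing the proof. No step should be genuinely difficult here: the two parts of \Cref{lem:variance+mean-shift} are tailored precisely so that adding a single extra conditioning $\mc{E}_u$ costs only $O(f(\Tmax))$ in the mean and $O(g(\Tmax))$ in the variance, and the constants $\sigma_k = \delta_k^2$ are set so that Chebyshev gives the correct failure probability. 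The only care needed is to verify the numerical slack with $f(\Tmax), g(\Tmax) = O(1)$ by the choice of $q$ relative to $\Tmax$.
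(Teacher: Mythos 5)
Your proposal is correct and follows essentially the same route as the paper: apply part~1 of \Cref{lem:variance+mean-shift} with $S=\T(u)$ to bound the conditional variance, part~2 with $R=\emptyset$, $S=\T(u)$ together with \Cref{lem:Ex-P-pesth} to bound the conditional bias by $2\delta_k$, and then Chebyshev plus a triangle inequality. The only cosmetic difference is that you bound $g(\Tmax)\le 4$ while the paper uses $g(\Tmax)\le 2$; either constant is comfortably absorbed by $\sigma_k=\delta_k^2$, so the argument closes identically.
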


\begin{proof}
    The proof comes from a combination of \Cref{lem:Ex-P-pesth} and \Cref{lem:chebyshev}. From Chebyshev, we know that 
    \[
    \Pr \paren*{\abs*{\Pest(v) - \Ex \bracket*{\Pest(v) \mid \mc{E}_v, \mc{E}_u}} > \delta_k \mid \mc{E}_v, \mc{E}_u} \leq \frac{\Var \paren*{\Pest(v) \mid \mc{E}_v, \mc{E}_u}}{\delta_k^2}.
    \]

    We can now use the first part of \Cref{lem:variance+mean-shift}  to see that $\Var \paren*{\Pest(v) \mid \mc{E}_v, \mc{E}_u} \leq \sigma_k^2\cdot g(|\Tmax|) \leq 2\sigma_k^2$. Thus, our probability is at most $2\sigma_k^2/\delta_k^2 \leq \delta_k$. Thus, it suffices to show that $\abs*{\Ex \bracket*{\Pest(v) \mid \mc{E}_v, \mc{E}_u} - \Pesth(v)} \leq 2\delta_k$. This can be seen with a triangle inequality, \Cref{lem:Ex-P-pesth}, and an application of the second part of \Cref{lem:variance+mean-shift} which shows that $\abs*{\Ex \bracket*{\Pest(v) \mid \mc{E}_v, \mc{E}_u}-\Ex \bracket*{\Pest(v) \mid \mc{E}_v}} \leq 3\sigma_k \cdot f(|\Tmax|) \leq 6\sigma_k \leq \delta_k$. Combining these gives error of $\delta_k+\delta_k = 2\delta_k$, proving the claim.
\end{proof}

Now, we extend this result to edges, making use of the extra conditioning that we added in the previous lemma.

\begin{lemma}\label{lem:P-pesth-prob-edge}
    For any edge $(u,v)$, it holds that
    \[
    \Pr \paren*{\abs*{\Pest(u)\cdot(1-\Pest(v))-\Pesth(u)\cdot(1-\Pesth(v))} > 9\delta_k \mid \mc{E}_u, \mc{E}_v} \leq 2\delta_k.
    \]
\end{lemma}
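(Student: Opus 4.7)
The plan is to mimic the proof of \Cref{lem:pesth-pos-prob-edge}, but now using \Cref{lem:Pr-P-pesth} in place of \Cref{lem:pesth-pos-prob}. The key advantage here is that \Cref{lem:Pr-P-pesth} was explicitly stated with an extra conditioning on the success of an arbitrary second vertex: this is exactly what allows us to apply the vertex-level bound to both endpoints of the edge simultaneously while conditioning on $\mc{E}_u \cap \mc{E}_v$.

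First, I would apply \Cref{lem:Pr-P-pesth} twice: once to the vertex $u$ with the auxiliary conditioning being $\mc{E}_v$, and once to the vertex $v$ with the auxiliary conditioning being $\mc{E}_u$. Each application yields
\[
\Pr\paren*{\abs*{\Pest(u)-\Pesth(u)} > 3\delta_k \mid \mc{E}_u,\mc{E}_v} \leq \delta_k, \qquad \Pr\paren*{\abs*{\Pest(v)-\Pesth(v)} > 3\delta_k \mid \mc{E}_u,\mc{E}_v} \leq \delta_k.
\]
A union bound on these two failure events contributes the $2\delta_k$ overall failure probability in the statement. So it suffices to show that whenever both $\abs*{\Pest(u)-\Pesth(u)} \leq 3\delta_k$ and $\abs*{\Pest(v)-\Pesth(v)} \leq 3\delta_k$ hold, the product difference is at most $9\delta_k$ deterministically.

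For the deterministic bound, I would use the same algebraic trick as in \Cref{lem:pesth-pos-prob-edge}: add and subtract intermediate terms and use that all the estimates and positions lie in $[0,1]$ (this is true because of the $\altclamp$ in their definitions). Specifically, writing
\begin{align*}
\abs*{\Pest(u)(1-\Pest(v)) - \Pesth(u)(1-\Pesth(v))}
&\leq \abs*{\Pest(u) - \Pesth(u)} + \abs*{\Pest(u)\Pest(v) - \Pesth(u)\Pesth(v)}\\
&\leq 3\delta_k + \Pest(u)\abs*{\Pest(v)-\Pesth(v)} + \Pesth(v)\abs*{\Pest(u)-\Pesth(u)}\\
&\leq 3\delta_k + 3\delta_k + 3\delta_k = 9\delta_k,
\end{align*}
which completes the argument. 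There is essentially no hard step here; the entire content is in having proved \Cref{lem:Pr-P-pesth} with the extra conditioning, which is what permits the union bound under $\mc{E}_u \cap \mc{E}_v$ rather than under each single-vertex success event separately.
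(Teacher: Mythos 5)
Your proposal is correct and matches the paper's proof essentially verbatim: a union bound over two applications of \Cref{lem:Pr-P-pesth} (with the roles of $u$ and $v$ swapped to handle both endpoints under the conditioning $\mc{E}_u \cap \mc{E}_v$), followed by the same deterministic triangle-inequality decomposition of the product difference into three terms of at most $3\delta_k$ each.
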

\begin{proof}
    The proof is almost identical to \Cref{lem:pesth-pos-prob-edge}. Condition on events $\mc{E}_u, \mc{E}_v$. By a union bound on \Cref{lem:Pr-P-pesth} for vertices $u$ and $v$, it suffices to show that if we condition further on the events where $\abs*{\Pest(v)-\Pesth(v)} \leq 3\delta_k$ and $\abs*{\Pest(u)-\Pesth(u)} \leq 3\delta_k$, then with probability 1 we have
    \[
    \abs*{\Pest(u)\cdot(1-\Pest(v))-\Pesth(u)\cdot(1-\Pesth(v))} \leq 9\delta_k.
    \]
    We show this as follows:
    \begin{align*}
        \abs*{\Pest(u)\cdot(1-\Pest(v))-\Pesth(u)\cdot(1-\Pesth(v))} &\leq \abs*{\Pest(u)-\Pesth(u)} + \abs*{\Pest(u)\Pest(v)-\Pesth(u)\Pesth(v)} \\
        &\leq 3\delta_k + \abs*{\Pest(u)\Pest(v)-\Pest(u)\Pesth(v)} + \abs*{\Pest(u)\Pesth(v)-\Pesth(u)\Pesth(v)} \\
        &= 3\delta_k + \Pest(u)\cdot\abs*{\Pest(v)-\Pesth(v)} + \Pesth(v)\cdot\abs*{\Pest(u)-\Pesth(u)}\\
        &\leq 3\delta_k + 3\delta_k + 3\delta_k = 9\delta_k. \qedhere
    \end{align*}
\end{proof}

We have now succeeded in bounding the probability that an edge's $\Pest(\cdot)$ value is close to its $\Pesth(\cdot)$ value. We now extend this to a result bounding the difference between $\cutval$ (the value returned by the algorithm) and $\val{G'}{\Pesth}$. We do this by bounding the both variance of $\cutval$ and the difference between its mean and $\val{G'}{\Pesth}$.

\begin{lemma}\label{lem:mean-cutval} It holds that
    \[
    \abs*{\Ex \bracket*{\cutval} - \val{G'}{\Pesth}} \leq \delta_{k+1}.
    \]
\end{lemma}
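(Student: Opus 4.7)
The plan is to unfold $\cutval$ via linearity of expectation, reduce the claim to a per-edge conditional expectation, and then invoke \Cref{lem:P-pesth-prob-edge} to compare that expectation with $\Pesth(u)(1-\Pesth(v))$. First I would use the correctness of reservoir sampling to note that $\C$ is a uniformly random subset of $E$ of size $n^{1-c}$ (which is well-defined by assumption \ref{assumption:linearedges}), so each edge lies in $\C$ with probability $n^{1-c}/|E|$. By the definition of $\HTAvg$ and \Cref{line:cutval-def}, we have $\cutval = \frac{1}{|\C|}\sum_{e \in \C} X_e$, where for $e = (u,v)$ the Horvitz--Thompson term is $X_e = \Pest(u)(1-\Pest(v)) \cdot n^{c|\T(u) \cup \T(v)|}$ on the event $\mc{E}_u \cap \mc{E}_v$ and $X_e = 0$ otherwise. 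Taking expectation first over the reservoir and using independence of $\{e \in \C\}$ from $X_e$ then yields
\[
\Ex[\cutval] = \frac{1}{|E|}\sum_{(u,v) \in E}\Ex[X_e].
\]

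The key identity is $\Ex[X_e] = \Ex\bracket*{\Pest(u)(1-\Pest(v)) \mid \mc{E}_u \cap \mc{E}_v}$. This follows because every real and every dummy vertex is independently included in $W$ with probability $n^{-c}$, so $\Pr(\mc{E}_u \cap \mc{E}_v) = n^{-c|\T(u) \cup \T(v)|}$ exactly, and this probability cancels the rescaling built into $X_e$. With this identity in hand, I would apply \Cref{lem:P-pesth-prob-edge}: conditioned on $\mc{E}_u \cap \mc{E}_v$, with probability at least $1-2\delta_k$ we have $\abs*{\Pest(u)(1-\Pest(v)) - \Pesth(u)(1-\Pesth(v))} \le 9\delta_k$, and on the complementary event the deviation is at most $1$ since both quantities lie in $[0,1]$. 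Consequently
\[
\abs*{\Ex[X_e] - \Pesth(u)(1-\Pesth(v))} \le 9\delta_k + 2\delta_k = 11\delta_k.
\]

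Averaging this per-edge bound over all $|E|$ edges and applying the triangle inequality gives $\abs*{\Ex[\cutval] - \val{G'}{\Pesth}} \le 11\delta_k$. Plugging in $\delta_k = \epsilon^{4^{2}}$ and $\delta_{k+1} = \epsilon^{4}$ from \Cref{table:values}, the inequality $11\delta_k \le \delta_{k+1}$ is immediate for $\epsilon \le 0.01$, which closes the proof. The one place that most needs care --- and the main conceptual obstacle --- is verifying the clean cancellation $\Ex[X_e] = \Ex[\Pest(u)(1-\Pest(v)) \mid \mc{E}_u \cap \mc{E}_v]$: it relies on the dummy vertices introduced inside $\VertexEstimator$ being fresh and independent across invocations (so they never appear in $\T(u) \cap \T(v)$) while real vertices contribute a single $n^{-c}$ factor regardless of overlap, which is exactly what makes $\Pr(\mc{E}_u \cap \mc{E}_v) = n^{-c|\T(u) \cup \T(v)|}$ match the rescaling used in \Cref{line:probability-tree-sampling}.
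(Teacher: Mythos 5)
Your proof is correct and follows essentially the same route as the paper: linearity of expectation over the reservoir, cancellation of the Horvitz--Thompson rescaling against $\Pr(\mc{E}_u \cap \mc{E}_v) = n^{-c|\T(u)\cup\T(v)|}$, and then an invocation of \Cref{lem:P-pesth-prob-edge} to control the per-edge conditional expectation. The only difference is cosmetic --- you do the probability-to-expectation conversion inline (getting $11\delta_k$) rather than packaging it through \Cref{lem:custom-prob-ex} as the paper does (getting $18\delta_k$); both comfortably clear $\delta_{k+1}$.
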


\begin{proof}
    Let $X_e$ be the standard indicator variable for an edge $e$. We create another variable $Y_e$ which takes the value $X_e$ if $e \in \C$, and 0 otherwise. Thus, we can see that $\cutval = \sum_{e \in E} Y_e$. In addition, for each edge $\Pr(e \in \C) = \frac{|\C|}{|E|}$. Thus $\Ex[Y_e] = \frac{|\C|}{|\E|} \Ex[X_e]$. We use this fact to expand the desired difference in the lemma statement and bound it, making use of the fact that $\abs*{\Ex\bracket*{\Pest(u)\cdot(1-\Pest(v)) \mid \mc{E}_u, \mc{E}_v}-\Pesth(u)\cdot(1-\Pesth(v))} \leq 18\delta_k$ which can be seen from \Cref{lem:P-pesth-prob-edge} and \Cref{lem:custom-prob-ex}.
    \begin{align*}
        \abs*{\Ex \bracket*{\cutval} - \val{G'}{\Pesth}} &= \abs*{\Ex \bracket*{\frac{1}{|\C|}\sum_{e \in E} Y_e} - \frac{1}{|E|}\sum_{e \in E} \Pesth(e)} \\
        &\leq \frac{1}{|E|}\sum_{e \in E} \abs*{\frac{|E|}{|\C|}\Ex[Y_e]-\Pesth(e)}\\
        &= \frac{1}{|E|}\sum_{e \in E} \abs*{\Ex[X_e]-\Pesth(e)} \\
        &= \frac{1}{|E|}\sum_{(u,v) \in E} \abs*{\Ex\bracket*{\Pest(u)\cdot(1-\Pest(v)) \mid \mc{E}_u, \mc{E}_v}-\Pesth(u)\cdot(1-\Pesth(v))} \\
        &\leq \frac{1}{|E|}\sum_{(u,v) \in E} 18\delta_k \leq 18\delta_k \leq \delta_{k+1}. \qedhere
    \end{align*}
    
\end{proof}

\begin{lemma}\label{lem:var-cutval}
It holds that
    \[
    \Var \paren*{\cutval} \leq \sigma_{k+1}^2.
    \]
\end{lemma}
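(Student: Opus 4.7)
The plan is a law-of-total-variance decomposition with respect to the vertex sampling $W$:
\[
\Var(\cutval) = \Ex_W\bracket*{\Var_\C(\cutval \mid W)} + \Var_W\paren*{\Ex_\C[\cutval \mid W]}.
\]
The first term captures the reservoir-sampling variance from $\C$; the second captures how the HT-rescaled values $X_e = \Pest(e)\, n^{c|\T(e)|}\mathbb{1}[\mc{E}_e]$ fluctuate with $W$.

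The first term is controlled by the standard finite-population variance formula: given $W$, $\cutval = \frac{1}{|\C|}\sum_{e \in \C} X_e$ is the mean of a uniform $|\C|$-subset of $\set*{X_e}_{e \in E}$, so $\Var_\C(\cutval \mid W) \leq \frac{1}{|\C||E|}\sum_e X_e^2$. Taking expectation over $W$ with $\Ex_W[X_e^2] \leq n^{c|\T(e)|} \leq n^{2c\Tmax}$ bounds this first term by $n^{2c\Tmax}/|\C|$, which with $c\Tmax = q/10$ and $|\C| = n^{1-c}$ is $n^{q/5 + c - 1}$, far below $\sigma_{k+1}^2$.

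For the second term, $\Ex_\C[\cutval \mid W] = \frac{1}{|E|}\sum_e X_e$, so it equals $\frac{1}{|E|^2}\sum_{e_1, e_2} \Cov(X_{e_1}, X_{e_2})$. I would partition pairs $(e_1, e_2)$ exactly as in \Cref{claim:conditional-variance}, into "good" pairs with $\T(e_1) \cap \T(e_2) = \emptyset$ and "bad" pairs with overlapping trees. For good pairs (and the diagonal), the HT rescalings cancel since $\Pr(\mc{E}_{e_1} \cap \mc{E}_{e_2}) = n^{-c(|\T(e_1)| + |\T(e_2)|)}$, and then extending \Cref{claim:conditional-variance} and \Cref{claim:mean-shift} from vertices to edges via $\Pest(e) = \Pest(u)(1 - \Pest(v))$ (the product-variance move already used in \Cref{lem:valG-Pesth-valG-pos}) and applying Cauchy--Schwarz yields $|\Cov(X_{e_1}, X_{e_2})| \leq O(\sigma_k)$. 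Summing over at most $|E|^2$ good pairs and dividing by $|E|^2$ gives $O(\sigma_k)$, comfortably below $\sigma_{k+1}^2/2$.

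For bad pairs, I rely on \Cref{lem:count-tree-containment}: for each fixed $e_1$ and each $w \in \T(e_1)$, the number of vertices $x$ with $w \in \T(x)$ is at most $n^{q \cdot 2^k}$, and each such $x$ is low-degree with degree at most $n^{q \cdot 2^k + c}$, so the number of bad partners of $e_1$ is at most $O(\Tmax \cdot n^{2q \cdot 2^k + c})$. Combined with the trivial covariance bound $|\Cov(X_{e_1}, X_{e_2})| \leq n^{c|\T(e_1) \cap \T(e_2)|}$ obtained directly from $\Pr(\mc{E}_{e_1} \cap \mc{E}_{e_2}) = n^{-c(|\T(e_1)| + |\T(e_2)| - t)}$, the identity $2q \cdot 2^k = 1$ from the choice $q = 2^{-(k+1)}$ together with the smallness of $c\Tmax = q/10$ absorbs the bad-pair contribution into $\sigma_{k+1}^2/2$, in the same spirit in which $n^{q \cdot 2^{a-1}} \cdot n^{2c\Tmax}$ is absorbed against $|\Neighbors(v)|^2 \geq n^{2q \cdot 2^a}$ inside \Cref{claim:conditional-variance}. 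The delicate bad-pair bookkeeping, together with carefully extending the vertex-level conditional variance and mean-shift bounds to the edge level, is the main technical obstacle.
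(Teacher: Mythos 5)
Your law-of-total-variance decomposition over $W$ and $\C$ is a genuine refinement relative to the paper's presentation. The paper writes $\Var(\cutval) \leq \frac{1}{|\C|^2}\sum_{e_1,e_2\in\C}\Cov(X_{e_1},X_{e_2})$, treating the random reservoir $\C$ as if it were fixed, so the reservoir-sampling contribution to the variance is never isolated; your first term $\Ex_W[\Var_\C(\cutval\mid W)]$ handles it cleanly via the finite-population bound, and your second term is the same covariance sum but correctly taken over $E\times E$ and normalized by $|E|^2$. For the good pairs and diagonal, your plan (extend \Cref{claim:conditional-variance} and \Cref{claim:mean-shift} to edges via the product rule, then Cauchy--Schwarz and \Cref{lem:var-product}) coincides with what the paper carries out.

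The gap is the bad-pair accounting, which you flag as the main technical obstacle but then assert "absorbs into $\sigma_{k+1}^2/2$" by analogy to \Cref{claim:conditional-variance}. That analogy does not transfer, and your own count is the reason. You unpack \Cref{lem:count-tree-containment} carefully: for each of the $O(\Tmax)$ vertices $w\in\T(e_1)$, at most $n^{q\cdot 2^k}$ vertices $x$ have $w\in\T(x)$, and each such (necessarily low-degree) $x$ has degree at most $n^{q\cdot 2^k+c}$, giving $O(\Tmax\, n^{2q\cdot 2^k + c}) = O(\Tmax\, n^{1+c})$ bad partners per edge. But $n^{1+c}$ already exceeds $|E| = O(n/\epsilon^4)$, so this count is vacuous, and after the $|E|^2$-normalization the bad-pair contribution would be $\Omega(n^c)$ whenever $|\T(e_1)\cap\T(e_2)|\geq 1$, not $o(1)$. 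What tamed the analogous term inside \Cref{claim:conditional-variance} was the structural lower bound $|\Neighbors(v)|^2 = \Omega(n^{2q\cdot 2^a})$, available only because $v$ is high-degree; nothing comparable is available for $|E|$, so the arithmetic you invoke does not close. The paper's own justification, that "each edge is a bad pair with at most $2n^{q\cdot 2^k}$ other edges," is what makes its final display vanish, but it reads \Cref{lem:count-tree-containment} as bounding edges rather than vertices and drops both the $\Tmax$ factor and the degree factor your count includes; your more literal reading is exactly what exposes the difficulty, and the proposal leaves this central step unresolved.
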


\begin{proof}
    For an edge $e = (u,v)$, let $X_e$ be its standard indicator variable and let $\Pest(e) = \Pest(u) \cdot (1-\Pest(v))$. We can expand the formula for $\cutval$ \Cref{line:cutval-def} as follows:
    \begin{equation}\label{eq:var-cutval-start}
    \begin{aligned}
        \Var(\cutval) &= \Var \paren*{\frac{1}{|\C|} \sum_{e\in C} X_e} \\
        &\leq \frac{1}{|\C|^2} \sum_{e_1 \in C} \sum_{e_2 \in C} \Cov \paren*{X_{e_1}, X_{e_2}}.
    \end{aligned}
    \end{equation}
    
    Similar to before, we split the sum of covariances into good pairs and bad pairs. Call a pair of edges $e_1 = (u_1,v_1),e_2=(u_2,v_2)$ good if $\T(u_1) \cup \T(v_1)$ is disjoint from $\T(u_2)\cup\T(v_2)$. Let $\mc{E}_1$ be the event where $\T(u_1) \cup \T(v_1) \subseteq W$, and similarly define $\mc{E}_2$. Then, if $e_1,e_2$ is a good pair, then the events $\mc{E}_1, \mc{E}_2$ are independent. This allows us to say that $\Ex[X_{e_1}X_{e_2}] = \Ex[\Pest(e_1)\Pest(e_2)\mid\mc{E}_1, \mc{E}_2]$. The logic is the same as for vertices, we use chain rule over both events $\mc{E}_1, \mc{E}_2$, and the independence implies that the probability of both occurring scales inversely proportional to the upscaling of our Horvitz-Thompson average. 
    \begin{equation}\label{eq:var-cutval-1}
    \begin{aligned}
        \Cov \paren*{X_{e_1}, X_{e_2}} &= \Ex \bracket*{X_{e_1}X_{e_2}} - \Ex \bracket*{X_{e_1}}\Ex \bracket*{X_{e_2}} \\
        &= \Ex \bracket*{\Pest(e_1)\Pest(e_2) \mid \mc{E}_1, \mc{E}_2} - \Ex \bracket*{\Pest(e_1)\mid \mc{E}_1}\Ex \bracket*{\Pest(e_2)\mid \mc{E}_2}.
    \end{aligned}
    \end{equation}

    We see that this is close to $\Cov \paren*{\Pest(e_1),\Pest(e_2) \mid \mc{E}_1, \mc{E}_2}$, the only change that needs to be made is the conditioning in the terms on the right. Because of this, we bound the shift in mean of each $\Pest(\cdot)$. We focus on the first as the second is analogous.
    \[
    \begin{aligned}
        \abs*{\Ex \bracket*{\Pest(e_1)\mid \mc{E}_1} - \Ex \bracket*{\Pest(e_1)\mid \mc{E}_1, \mc{E}_2}} &= \abs*{\Ex \bracket*{\Pest(u) \cdot (1-\Pest(v))\mid \mc{E}_1} - \Ex \bracket*{\Pest(u) \cdot (1-\Pest(v))\mid \mc{E}_1, \mc{E}_2}} \\
        &= |\Cov \paren*{\Pest(u),(1-\Pest(v)) \mid \mc{E}_1} + \Ex \bracket*{\Pest(u)\mid \mc{E}_1}\Ex \bracket*{(1-\Pest(v))\mid \mc{E}_1} \\
        &\hspace{1cm}- \Cov \paren*{\Pest(u),(1-\Pest(v)) \mid \mc{E}_1, \mc{E}_2} - \Ex \bracket*{\Pest(u)\mid \mc{E}_1, \mc{E}_2}\Ex \bracket*{(1-\Pest(v))\mid \mc{E}_1, \mc{E}_2}| \\
        &= \abs*{\Cov \paren*{\Pest(u),(1-\Pest(v)) \mid \mc{E}_1}} + \abs*{\Cov \paren*{\Pest(u),(1-\Pest(v)) \mid \mc{E}_1, \mc{E}_2}} \\
        &+ \abs*{\Ex \bracket*{\Pest(u)\mid \mc{E}_1}\Ex \bracket*{(1-\Pest(v))\mid \mc{E}_1}-\Ex \bracket*{\Pest(u)\mid \mc{E}_1, \mc{E}_2}\Ex \bracket*{(1-\Pest(v))\mid \mc{E}_1, \mc{E}_2}}.
    \end{aligned}
    \]

    We can now bound each of the three terms and begin with the two covariances. The bound on the first follows the same logic as the bound from the second which is larger, so we just use the second bound on both terms for simplicity. We use \Cref{lem:cauchy-schwarz} to see that $\Cov \paren*{\Pest(u),(1-\Pest(v)) \mid \mc{E}_1, \mc{E}_2} \leq \sqrt{\Var\paren*{\Pest(u) \mid \mc{E}_1, \mc{E}_2}\Var\paren*{\Pest(v) \mid \mc{E}_1, \mc{E}_2}}$. From \Cref{lem:variance+mean-shift}, we can see that each variance is at most $\sigma_k^2 \cdot g(3\Tmax) \leq 4\sigma_k^2$. This is because the conditioning $\mc{E}_1,\mc{E}_2$ is made up of the intersection of the events of the trees of 4 vertices being sampled in $W$, the two vertices of each edge. Thus, since two of these vertices are $u$ and $v$, the total number of other vertices who's sampling is conditioned on is at most $3\Tmax$ for each vertex. This gives us the following. 
    \begin{equation}\label{eq:var-vertex-of-edge}
        \begin{aligned}
            \Var\paren*{\Pest(u) \mid \mc{E}_1, \mc{E}_2} &\leq 4\sigma_k^2, \\
            \Var\paren*{\Pest(v) \mid \mc{E}_1, \mc{E}_2} &\leq 4\sigma_k^2, \\
            \Cov\paren*{\Pest(u),\Pest(v) \mid \mc{E}_1, \mc{E}_2} &\leq 4\sigma_k^2. \\
        \end{aligned}
    \end{equation}

    We also bound the difference term similarly, we use the fact that $|ab-a'b'| \leq a|b-b'|+b'|a-a'|$ and the fact that all expectations are in $[0,1]$ to see that we must only bound the differences $\abs*{\Ex\bracket*{\Pest(u) \mid \mc{E}_1} - \Ex\bracket*{\Pest(u) \mid \mc{E}_1, \mc{E}_2}}$ and $\abs*{\Ex\bracket*{\Pest(v) \mid \mc{E}_1} - \Ex\bracket*{\Pest(v) \mid \mc{E}_1, \mc{E}_2}}$. By \Cref{lem:variance+mean-shift} again as well as similar splitting of the events $\mc{E}_1,\mc{E}_2$ into their constituent events, we can see that these terms are at most $3\sigma_k \cdot f(3\Tmax) \leq 6\sigma_k$. Thus, we can attain a bound on $\abs*{\Ex \bracket*{\Pest(e_1)\mid \mc{E}_1} - \Ex \bracket*{\Pest(e_1)\mid \mc{E}_1, \mc{E}_2}}$.
    \[
    \abs*{\Ex \bracket*{\Pest(e_1)\mid \mc{E}_1} - \Ex \bracket*{\Pest(e_1)\mid \mc{E}_1, \mc{E}_2}} \leq 4\sigma_k^2 + 4\sigma_k^2 + 6\sigma_k + 6\sigma_k \leq 13\sigma_k.
    \]

    Now, referring back to \Cref{eq:var-cutval-1}, and the fact stated before that $|ab-a'b'| \leq a|b-b'|+b'|a-a'|$ we bound $\abs*{\Ex \bracket*{\Pest(e_1)\mid \mc{E}_1}\Ex \bracket*{\Pest(e_2)\mid \mc{E}_2} - \Ex \bracket*{\Pest(e_1)\mid \mc{E}_1, \mc{E}_2}\Ex \bracket*{\Pest(e_2)\mid \mc{E}_1, \mc{E}_2}}$ by $13\sigma_k + 13\sigma_k = 26\sigma_k$. Thus, it holds that
    \[
    \Cov \paren*{X_{e_1}, X_{e_2}} \leq \Cov \paren*{\Pest(e_1),\Pest(e_2) \mid \mc{E}_1, \mc{E}_2} + 26\sigma_k.
    \]

    Now, we use \Cref{lem:cauchy-schwarz}, to bound this covariance, We see that $\Cov \paren*{\Pest(e_1),\Pest(e_2) \mid \mc{E}_1, \mc{E}_2} \leq \sqrt{\Var \paren*{\Pest(e_1) \mid \mc{E}_1, \mc{E}_2}\Var \paren*{\Pest(e_2) \mid \mc{E}_1, \mc{E}_2}}$. Consider the first as the bound for the second is the same, $\Pest(e_1) = \Pest(u) \cdot (1-\Pest(v))$. Thus, by \Cref{eq:var-vertex-of-edge} and \Cref{lem:var-product}, we have that $\Var \paren*{\Pest(e_1) \mid \mc{E}_1, \mc{E}_2} \leq 36\sigma_k^2 \leq \sigma_k$. Thus, we can bound our covariance by
    \begin{equation}\label{eq:var-cutval-good-cov-bound}
        \Cov \paren*{X_{e_1}, X_{e_2}} \leq \sigma_k + 26\sigma_k = 27\sigma_k.
    \end{equation}

    Thus, the total contribution to the sum of covariances in \Cref{eq:var-cutval-start} from good pairs is at most $|\C|^2 \cdot 27\sigma_k$. Next, we bound the total contribution of bad pairs. We see that each bad pair has covariance at most $n^{4c\Tmax}$ (as indicator variables for edges are in the range $[0,n^{2c\Tmax}]$), so we bound the total number of them. From \Cref{lem:count-tree-containment}, each vertex $u$ is contained in the tree of at most $n^{q\cdot2^k}$ other vertices. Thus, each edge $(u,v)$ is a bad pair with at most $2n^{q\cdot2^k}$ other edges. Thus, the total contribution of bad pairs to the sum of covariances is at most $|\C|\cdot 2n^{q\cdot2^k}\cdot n^{4c\Tmax}$. Now, we refer back to \Cref{eq:var-cutval-start} and conclude the proof as follows.
    \[
    \begin{aligned}
        \Var \paren*{\cutval} &\leq \frac{1}{|\C|^2} \cdot \paren*{|\C|^2 \cdot 27\sigma_k + |\C|\cdot 2n^{q\cdot2^k}\cdot n^{4c\Tmax}} \\
        &\leq 27\sigma_k + 2n^{q\cdot2^k-4c\Tmax-1+c} \leq 28\sigma_k \leq \sigma_{k+1}^2.
    \end{aligned}
    \]
\end{proof}

\begin{lemma}\label{lem:cutval-valG-Pesth}
    \[
    \Pr(\abs*{\cutval - \val{G'}{\Pesth}} > 2\delta_{k+1}) \leq \delta_{k+1}
    \]
\end{lemma}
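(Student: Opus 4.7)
The plan is to mirror the two-step pattern already established for \Cref{lem:valG-Pesth-valG-pos}: combine the mean bound from \Cref{lem:mean-cutval} with the variance bound from \Cref{lem:var-cutval} via Chebyshev's inequality, then finish with a triangle inequality. Since all the heavy lifting (bounding the shift in mean due to the Horvitz--Thompson rescaling, and bounding the covariance contributions from good and bad edge pairs) has already been carried out in the preceding two lemmas, no further structural work is needed.

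Concretely, I would first invoke \Cref{lem:var-cutval} to obtain $\Var(\cutval) \leq \sigma_{k+1}^2$, and then apply Chebyshev's inequality (\Cref{lem:chebyshev}) with deviation threshold $\delta_{k+1}$:
\[
\Pr\paren*{\abs*{\cutval - \Ex[\cutval]} > \delta_{k+1}} \leq \frac{\Var(\cutval)}{\delta_{k+1}^2} \leq \frac{\sigma_{k+1}^2}{\delta_{k+1}^2}.
\]
Using the constant relations from \Cref{table:values}, namely $\sigma_{k+1} = \delta_{k+1}^2$, the right-hand side becomes $\delta_{k+1}^2 \leq \delta_{k+1}$ (as $\delta_{k+1} \leq 1$). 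Thus, outside an event of probability $\delta_{k+1}$, we have $\abs*{\cutval - \Ex[\cutval]} \leq \delta_{k+1}$.

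Next, I would combine this with \Cref{lem:mean-cutval}, which gives $\abs*{\Ex[\cutval] - \val{G'}{\Pesth}} \leq \delta_{k+1}$ deterministically. A simple triangle inequality then yields
\[
\abs*{\cutval - \val{G'}{\Pesth}} \leq \abs*{\cutval - \Ex[\cutval]} + \abs*{\Ex[\cutval] - \val{G'}{\Pesth}} \leq 2\delta_{k+1}
\]
with probability at least $1 - \delta_{k+1}$, which is exactly the claim.

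There is no real obstacle here: the proof is essentially a bookkeeping step after \Cref{lem:mean-cutval} and \Cref{lem:var-cutval}. The only care required is verifying that $\sigma_{k+1}^2/\delta_{k+1}^2 \leq \delta_{k+1}$ using the values in \Cref{table:values}, which is immediate from $\sigma_a = \delta_a^2$. All the technically delicate work, the interaction between the Horvitz--Thompson rescaling and the conditional means/variances via $\mc{E}_u, \mc{E}_v$ events, as well as the good/bad pair covariance split for edges in $\C$, has already been carried out in the two preceding lemmas.
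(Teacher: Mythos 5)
Your proposal is correct and is essentially identical to the paper's own proof: Chebyshev with $\Var(\cutval) \leq \sigma_{k+1}^2$ from \Cref{lem:var-cutval}, the check $\sigma_{k+1}^2/\delta_{k+1}^2 = \delta_{k+1}^2 \leq \delta_{k+1}$, and a triangle inequality with \Cref{lem:mean-cutval}. Nothing is missing.
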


\begin{proof}
    From \Cref{lem:var-cutval} and \Cref{lem:chebyshev}, we have
    \[
    \Pr(\abs*{\cutval - \Ex[\cutval]} > \delta_{k+1}) \leq \frac{\sigma_{k+1}^2}{\delta_{k+1}^2} \leq \delta_{k+1}.
    \]
    Thus, the lemma follows from a triangle inequality with this and \Cref{lem:mean-cutval}.
\end{proof}


\section{Final Proof of Correctness}

Finally, we have everything we need to prove the correctness and accuracy of the streaming algorithm (\Cref{alg:streaming-alg}). We restate \Cref{thm:main} more precisely and prove it.

\begin{lemma}
    There is a randomized, adversarial streaming algorithm using $\O{n^{1-c}}$ space that runs on a graph $G$ and returns a value $\Out$ such that
    \[
    \paren*{\frac{1}{2}-\epsilon} \cdot \maxval{G} \leq \Out \leq \maxval{G},
    \]
    with probability greater than $1-\epsilon$.
\end{lemma}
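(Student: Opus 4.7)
The plan is to chain together the guarantees of \Cref{lem:streaming-reduction}, \Cref{lem:half-apx-valGpos}, \Cref{lem:valG-Pesth-valG-pos}, and \Cref{lem:cutval-valG-Pesth}. First I would invoke \Cref{lem:streaming-reduction}, which reduces the problem on a general graph $G$ to the problem on a colored graph $G'$ satisfying \ref{assumption:linearedges}, \ref{assumption:balanced-in-out}, and \ref{assumption:balanced-colors}: any algorithm attaining a $\paren*{\tfrac{1}{2} - 17\epsilon^2}$-approximation on such a $G'$ with failure probability $\epsilon^2$ and space $\O{n^{1-c}}$ automatically yields a $\paren*{\tfrac{1}{2} - \epsilon}$-approximation on $G$ with failure probability $\epsilon$ and the same space. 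So it suffices to show that \Cref{alg:streaming-alg} achieves the former on a colored $G'$.

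The approximation argument on $G'$ proceeds via the triangle inequality
$$
\card{\cutval - \val{G'}{\pos}} \leq \card{\cutval - \val{G'}{\Pesth}} + \card{\val{G'}{\Pesth} - \val{G'}{\pos}},
$$
where \Cref{lem:cutval-valG-Pesth} bounds the first term by $2\delta_{k+1}$ with failure probability $\delta_{k+1}$, and \Cref{lem:valG-Pesth-valG-pos} bounds the second by $13\delta_k$ with failure probability $\delta_k$. I would then union bound over these events together with the $\delta_0$ failure from conditioning on the high-degree degree estimates in \Cref{sec:high-deg-randomness} and the $2\epsilon^3$ failure from the space-tail Chernoff bounds derived at the end of \Cref{sec:fixing-the-trees}. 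Using $\delta_0 \leq \delta_k \leq \delta_{k+1} = \epsilon^4$ from \Cref{table:values}, the total failure probability is at most $3\epsilon^4 + 2\epsilon^3 \leq \epsilon^2$, the space is $\O{n^{1-c}}$, and on the good event the additive error satisfies $\card{\cutval - \val{G'}{\pos}} \leq 15\epsilon^4$.

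To convert this into the multiplicative guarantee demanded by \Cref{lem:streaming-reduction}, I would combine with \Cref{lem:half-apx-valGpos} (giving $\paren*{\tfrac{1}{2} - \epsilon^4}\maxval{G'} \leq \val{G'}{\pos} \leq \maxval{G'}$) and the trivial bound $\maxval{G'} \geq 1/4$. The additive error $15\epsilon^4$ is then at most $60\epsilon^4 \cdot \maxval{G'}$, so on the good event
$$
\paren*{\tfrac{1}{2} - 61\epsilon^4}\maxval{G'} \leq \cutval \leq (1 + 60\epsilon^4)\maxval{G'}.
$$
Outputting $\Out := \cutval/(1 + 60\epsilon^4)$ then yields $\paren*{\tfrac{1}{2} - 17\epsilon^2}\maxval{G'} \leq \Out \leq \maxval{G'}$ (using $\epsilon \leq 0.01$ so that $100\epsilon^4 \leq 17\epsilon^2$), and feeding this colored-graph algorithm into \Cref{lem:streaming-reduction} completes the proof. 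No new technical obstacle arises at this stage; all the heavy lifting has already been done in \Cref{sec:fixing-the-trees} and \Cref{sec:p2-p3}, so the remaining task is purely a matter of union-bounding failure probabilities, performing the additive-to-multiplicative conversion, and verifying that the chosen constants in \Cref{table:values} leave enough slack for the final rescaling.
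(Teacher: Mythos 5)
Your proposal is correct and follows essentially the same route as the paper: invoke \Cref{lem:streaming-reduction} to reduce to the colored graph $G'$, union bound over the $\delta_0$ high-degree conditioning, the $\delta_k$ event from \Cref{lem:valG-Pesth-valG-pos}, the $\delta_{k+1}$ event from \Cref{lem:cutval-valG-Pesth}, and the $2\epsilon^3$ space-tail events, then convert the resulting additive $O(\delta_{k+1})$ error into a multiplicative one via $\val{G'}{\pos} \geq (\tfrac12 - \alpha)\maxval{G'}$ and $\maxval{G'}\geq 1/4$. The only cosmetic difference is your choice of rescaling $\Out := \cutval/(1+60\epsilon^4)$ where the paper uses $\Out := (1-16\epsilon^2)\cutval$; both ensure $\Out \leq \maxval{G'}$ while preserving the $(\tfrac12 - 17\epsilon^2)$ lower bound, so this is the same argument.
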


\begin{proof}

    From \Cref{lem:streaming-reduction}, it suffices to show that we have an algorithm that produces a $\paren*{\frac{1}{2}-17\epsilon^2}$-approximation with failure probability at most $\epsilon^2$ on a graph $G'$ with assumptions \ref{assumption:linearedges}, \ref{assumption:balanced-colors}, and \ref{assumption:balanced-in-out}.

    Thus, it suffices to show that our streaming algorithm outputs a value $\Out$ with
    \begin{equation}\label{eq:final-suffices}
        \Pr\paren*{\paren*{\frac{1}{2}-17\epsilon^2}\cdot\maxval{G'} \leq \Out \leq \maxval{G'}} \geq 1-\epsilon^2.
    \end{equation}

    \begin{claim}\label{claim:Out-valG-pos}
        It holds that
        \[
        \Pr\paren*{\abs*{\Out-\paren*{1-16\epsilon^2}\cdot \val{G'}{\pos}} \leq \epsilon^2} \geq 1-\epsilon^2.
        \]
    \end{claim}
    \begin{proof}
        To prove this, we use a series of randomness fixings and triangle inequalities corresponding to how we performed our analysis of the algorithm. We first fixed the randomness of $\B$ which allowed us to prove the inequalities in \Cref{eq:high-deg-estimates} that were necessary in the following section. This failed with probability at most $\delta_0$. Then, we proved \Cref{lem:valG-Pesth-valG-pos} which states
        \[
        \Pr(\abs*{\val{G'}{\Pesth} - \val{G'}{\pos}} > 13\delta_k) \leq \delta_k.
        \]
        This allowed us to fix the randomness in $\Rsvsin(\cdot), \Rsvsout(\cdot)$ with failure probability $\delta_k$. After fixing this, we proved \Cref{lem:cutval-valG-Pesth} which states
        \[
        \Pr(\abs*{\cutval - \val{G'}{\Pesth}} > 2\delta_{k+1}) \leq \delta_{k+1}.
        \]
        Thus, because we have decoupled the randomness in each fixing, we can combine them with a triangle inequality and a union bound. Also noting our probability of failing because of the space bound of $2\epsilon^3$, our total failure probability is at most $\delta_0 + \delta_k + \delta_{k+1} +2\epsilon^3\leq \epsilon^2$. Conditioning on success, we show that the bound holds, proving the claim.
        \begin{align*}
            \abs*{\Out-\paren*{1-16\epsilon^2}\cdot \val{G'}{\pos}} &= \abs*{\paren*{1-16\epsilon^2}\cdot\cutval-\paren*{1-16\epsilon^2}\cdot \val{G'}{\pos}} \\
            &\leq \paren*{1-16\epsilon^2}\cdot\paren*{13\delta_k + 2\delta_{k+1}} \leq \epsilon^2. \qedhere
        \end{align*}
    \end{proof}

    From this, in the case where the bound in \Cref{claim:Out-valG-pos} holds, we have
    \[
    \paren*{1-16\epsilon^2}\cdot \val{G'}{\pos} - \epsilon^2 \leq \Out \leq \paren*{1-16\epsilon^2}\cdot \val{G'}{\pos} + \epsilon^2.
    \]
    Using \Cref{lem:half-apx-valGpos} and the fact that $\val{G'}{\pos} \geq (1/2-\alpha)\cdot \maxval{G'} \geq 1/16$, it holds that
    \[
    \begin{aligned}
        \paren*{1-16\epsilon^2}\cdot \val{G'}{\pos} - 16\epsilon^2\cdot\val{G'}{\pos} \leq &\Out \leq \paren*{1-16\epsilon^2}\cdot \val{G'}{\pos} + 16\epsilon^2\cdot\val{G'}{\pos},\\
        \paren*{1-32\epsilon^2}\cdot \val{G'}{\pos} \leq &\Out \leq \val{G'}{\pos},\\
        \paren*{\frac{1}{2}-17\epsilon^2}\cdot\maxval{G'} \leq &\Out \leq \maxval{G'}.
    \end{aligned}
    \]
    As this happens with probability at least $1-\epsilon^2$, this concludes the proof as we have shown \Cref{eq:final-suffices}.
\end{proof}


\section*{Acknowledgments}
We thank the anonymous STOC’26 reviewers for their helpful comments and suggestions.

\bibliographystyle{plainnat}
\bibliography{references}

\appendix
\section{General Lemmas About Statistics}

\begin{lemma}\label{lem:clamp-basic}
    For any variables $x$ and $y$ in $\mathbb{R}$,
    \[
    \abs{\clamp{x}-\clamp{y}} \leq \abs{x-y}.
    \]
\end{lemma}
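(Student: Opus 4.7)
The plan is to show that $\clamp{\cdot}$ is a $1$-Lipschitz map from $\mathbb{R}$ to $[0,1]$, from which the lemma is immediate. The cleanest route is via the identity
\[
\clamp{x} \;=\; \min\!\bigl(1,\, \max(0, x)\bigr),
\]
combined with the elementary fact that, for any constant $a$, the maps $t \mapsto \max(a, t)$ and $t \mapsto \min(a, t)$ are each $1$-Lipschitz on $\mathbb{R}$ (this is a one-line check by case analysis on whether the argument is above or below $a$). Since the composition of $1$-Lipschitz functions is $1$-Lipschitz, the conclusion follows.

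Alternatively, and perhaps more transparently for an appendix-style lemma, I would carry out a direct case analysis. Without loss of generality assume $x \leq y$; since $\clamp{\cdot}$ is monotone nondecreasing, $\abs{\clamp{x}-\clamp{y}} = \clamp{y} - \clamp{x}$, so it suffices to establish $\clamp{y} - \clamp{x} \leq y - x$. Splitting on where each of $x, y$ falls among $(-\infty, 0]$, $(0, 1)$, $[1, \infty)$ yields six subcases under the ordering $x \leq y$. The cases in which $x, y$ lie in the same region are trivial (either the difference is $0$ or exactly $y - x$); the remaining cases all reduce to the observation that clipping can only shrink the gap, e.g.\ if $x \leq 0$ and $y \in (0,1)$ then $\clamp{y} - \clamp{x} = y \leq y - x$, if $x \in (0,1)$ and $y \geq 1$ then $\clamp{y}-\clamp{x}=1-x \leq y - x$, and if $x \leq 0$ and $y \geq 1$ then $\clamp{y}-\clamp{x}=1 \leq y - x$.

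I do not anticipate any real obstacle: this is a standard fact about truncations, and either proof strategy is fully routine. The only mild care required is to enumerate the case split cleanly without redundancy; the composition argument avoids this entirely and is probably the cleaner version to include.
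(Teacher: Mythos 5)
Your proposal is correct and its direct case analysis (WLOG $x \leq y$, split by which of the regions $(-\infty,0]$, $(0,1)$, $[1,\infty)$ each point lies in) is essentially the same argument as the paper's proof, which also proceeds by cases and handles the rest by symmetry. The alternative via writing $\clamp{x}=\min(1,\max(0,x))$ as a composition of $1$-Lipschitz maps is a valid, slightly slicker packaging of the same elementary fact, not a substantively different route.
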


\begin{proof}
    We prove this by cases. Case 1 is where $x$ and $y$ are both in the range $[0,1]$. This follows trivially as $\clamp{x} = x$ and similarly for $y$. In case 2 let $x > 1$ and $y \leq 1$. Then $\clamp{x} = 1$, so $\abs{\clamp{x}-\clamp{y}} \leq \abs{1-y} \leq \abs{x-y}$. Lastly in case 3 let $x > 1$ and $y > 1$. Then, $\clamp{x}=\clamp{y}=1$, so the claim is trivial. The other cases follow symmetrically to either case 2 or case 3.
\end{proof}

\begin{proposition}[Chebyshev's Inequality]\label{lem:chebyshev}
    For any random variable $X$ with mean $\mu$ and variance $\sigma^2$ and constant $t > 0$, it holds that
    \[
    \Pr \paren*{\abs*{X-\mu} \geq t} \leq \frac{\sigma^2}{t^2}.
    \]
\end{proposition}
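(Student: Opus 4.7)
The plan is to derive Chebyshev's inequality from Markov's inequality applied to the nonnegative random variable $(X - \mu)^2$. This is the standard textbook route, and since the statement appears in a general-lemmas appendix alongside other well-known inequalities, a short and clean derivation is what the paper is after.

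First I would recall Markov's inequality: for any nonnegative random variable $Y$ and any $s > 0$,
\[
\Pr(Y \geq s) \leq \frac{\Ex[Y]}{s}.
\]
This is the one tool I will need; if it is not stated elsewhere in the appendix, I would either cite it as a standard fact or include a one-line proof via $\Ex[Y] \geq \Ex[Y \cdot \mathds{1}[Y \geq s]] \geq s \Pr(Y \geq s)$.

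Next I would apply Markov's inequality to $Y = (X - \mu)^2$, which is nonnegative, with threshold $s = t^2 > 0$. Observing that the events $\{|X - \mu| \geq t\}$ and $\{(X - \mu)^2 \geq t^2\}$ are identical (since squaring is monotone on nonnegative reals and $t > 0$), I get
\[
\Pr(|X - \mu| \geq t) = \Pr\bigl((X - \mu)^2 \geq t^2\bigr) \leq \frac{\Ex[(X - \mu)^2]}{t^2} = \frac{\sigma^2}{t^2},
\]
where the final equality is the definition of variance. That completes the proof.

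There is essentially no obstacle here: the only subtlety worth stating explicitly is the equivalence of the two events, which relies on $t$ being strictly positive (as guaranteed in the hypothesis). The proposal above is already at the level of detail I would present in the final write-up, so no further calculation is needed.
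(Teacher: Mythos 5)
Your proof is correct: it is the standard derivation of Chebyshev's inequality from Markov's inequality applied to $(X-\mu)^2$, and the handling of the event equivalence $\{|X-\mu|\geq t\} = \{(X-\mu)^2\geq t^2\}$ is properly justified. The paper states this proposition as a known fact without proof, so there is nothing to compare against; your argument is the canonical one and would be a fine inclusion.
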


\begin{proposition}[Multiplicative Chernoff Bound]\label{lem:mult-chernoff}
Let $X_1, \dots, X_n$ be independent random variables taking values in $[0,1]$, 
and let $X = \sum_{i=1}^n X_i$ with $\mu = \mathbb{E}[X]$. 
For any $0 < \delta < 1$,
\[
\Pr \paren*{\abs{X - \mu} \geq \delta \mu} \leq 2 \exp\paren*{-\frac{\delta^2}{3} \mu}
\]
\end{proposition}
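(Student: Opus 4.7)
The plan is to prove this via the standard Chernoff / moment generating function method, treating the upper and lower tails separately and then combining them with a union bound. First I would split $\Pr\paren*{\abs{X - \mu} \geq \delta\mu}$ into $\Pr(X \geq (1+\delta)\mu) + \Pr(X \leq (1-\delta)\mu)$, and reduce the whole argument to showing each tail is at most $\exp(-\delta^2\mu/3)$.

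For the upper tail, I would introduce a parameter $t > 0$ and apply Markov's inequality to $e^{tX}$:
\[
\Pr\paren*{X \geq (1+\delta)\mu} = \Pr\paren*{e^{tX} \geq e^{t(1+\delta)\mu}} \leq e^{-t(1+\delta)\mu} \cdot \mathbb{E}\bracket*{e^{tX}}.
\]
By independence, $\mathbb{E}[e^{tX}] = \prod_{i=1}^n \mathbb{E}[e^{tX_i}]$. Since $X_i \in [0,1]$ and $e^{tx}$ is convex, we have $e^{tX_i} \leq 1 + (e^t - 1)X_i$ pointwise, so $\mathbb{E}[e^{tX_i}] \leq 1 + (e^t-1)\mathbb{E}[X_i] \leq \exp\paren*{(e^t-1)\mathbb{E}[X_i]}$, using $1+y \leq e^y$. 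Multiplying over $i$ gives $\mathbb{E}[e^{tX}] \leq \exp((e^t-1)\mu)$, and hence
\[
\Pr\paren*{X \geq (1+\delta)\mu} \leq \exp\paren*{(e^t - 1 - t(1+\delta))\mu}.
\]
Choosing $t = \ln(1+\delta) > 0$ yields the bound $\paren*{\tfrac{e^\delta}{(1+\delta)^{1+\delta}}}^\mu$. The key analytic step is then the inequality $\tfrac{e^\delta}{(1+\delta)^{1+\delta}} \leq e^{-\delta^2/3}$ for $0 < \delta < 1$, which follows by showing $\delta - (1+\delta)\ln(1+\delta) + \delta^2/3 \leq 0$ on $(0,1)$ via a one-variable calculus argument (checking the value and derivative at $\delta = 0$ and monotonicity of the derivative).

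For the lower tail, the argument is symmetric: apply Markov to $e^{-tX}$ with $t > 0$, use $\mathbb{E}[e^{-tX_i}] \leq \exp((e^{-t}-1)\mathbb{E}[X_i])$, and optimize with $t = -\ln(1-\delta)$ to obtain $\Pr(X \leq (1-\delta)\mu) \leq \paren*{\tfrac{e^{-\delta}}{(1-\delta)^{1-\delta}}}^\mu \leq \exp(-\delta^2\mu/2) \leq \exp(-\delta^2\mu/3)$. Adding the two tail bounds gives $2\exp(-\delta^2\mu/3)$, as required.

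The main obstacle is not the probabilistic manipulation but the elementary real-analysis inequality $\tfrac{e^\delta}{(1+\delta)^{1+\delta}} \leq e^{-\delta^2/3}$ on $(0,1)$; the constant $3$ in the exponent is precisely what forces the careful (and somewhat loose) calculus argument, as opposed to the tighter constant $2$ one would want but which fails at $\delta$ close to $1$. Since this is a completely standard textbook result, I would simply cite a reference (e.g., Mitzenmacher--Upfal) rather than reproduce the analytic verification in full.
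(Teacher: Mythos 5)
Your proof is correct and is the standard moment-generating-function argument; note that the paper itself states this proposition in the appendix as a known fact with no proof at all, so there is nothing to compare against, and citing a standard reference (as you propose) is exactly what is expected here. The only imprecision is in your sketch of the calculus step: for $f(\delta) = \delta - (1+\delta)\ln(1+\delta) + \delta^2/3$ the second derivative $f''(\delta) = \tfrac{2}{3} - \tfrac{1}{1+\delta}$ changes sign at $\delta = \tfrac12$, so $f'$ is \emph{not} monotone on $(0,1)$; the argument instead concludes from $f'(0)=0$, $f'$ decreasing on $(0,\tfrac12)$, and $f'(1) = \tfrac23 - \ln 2 < 0$ that $f' \le 0$ throughout, hence $f \le 0$. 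This is a cosmetic fix, not a gap; everything else (the convexity bound $e^{tX_i} \le 1+(e^t-1)X_i$ handling general $[0,1]$-valued variables, the choices $t=\ln(1+\delta)$ and $t=-\ln(1-\delta)$, and the union bound giving the factor $2$) is right.
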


\begin{proposition}[Additive Chernoff]\label{lem:add-chernoff}
Let $X_1,\dots,X_N$ be independent Bernoulli random variables and set
$X:=\sum_{i=1}^N X_i$ with $\mu:=\mathbb{E}[X]$. Then for any $\lambda>0$,
\[
\Pr\!\left[\,|X-\mu|\ge \lambda\,\right]
\;\le\; 2\exp\!\left(-\frac{\lambda^2}{3\mu}\right).
\]
(Equivalently, for one-sided deviations:
$\Pr[X-\mu\ge \lambda]\le \exp\!\big(-\lambda^2/(3\mu)\big)$ and
$\Pr[\mu-X\ge \lambda]\le \exp\!\big(-\lambda^2/(3\mu)\big)$.)
\end{proposition}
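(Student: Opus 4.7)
The approach is to obtain this as a direct corollary of the multiplicative form (Proposition \ref{lem:mult-chernoff}) by a reparametrization of the deviation. Specifically, I would introduce the relative deviation $\delta := \lambda/\mu$, so that the event $\{|X-\mu|\geq \lambda\}$ coincides with $\{|X-\mu|\geq \delta\mu\}$ and the target bound $2\exp(-\lambda^2/(3\mu))$ rewrites cleanly as $2\exp(-\delta^2\mu/3)$.

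For the main regime $0<\lambda\leq \mu$, we have $\delta\in(0,1]$, so Proposition \ref{lem:mult-chernoff} applies verbatim and immediately yields
\[
\Pr[\,|X-\mu|\geq \lambda\,] \;=\; \Pr[\,|X-\mu|\geq \delta\mu\,] \;\leq\; 2\exp\!\left(-\tfrac{\delta^2\mu}{3}\right) \;=\; 2\exp\!\left(-\tfrac{\lambda^2}{3\mu}\right).
\]
For the regime $\lambda>\mu$, the lower tail $\Pr[\mu-X\geq\lambda]$ is trivially $0$ since each $X_i\geq 0$ and hence $X\geq 0$. For the upper tail in this range, I would fall back on the underlying moment-generating-function argument: use independence and $1+x\leq e^x$ to bound $\mathbb{E}[e^{tX}]=\prod_i(1+p_i(e^t-1))\leq \exp(\mu(e^t-1))$, apply Markov to $e^{tX}$ at threshold $e^{t(\mu+\lambda)}$, and optimize $t>0$ by taking $t=\ln(1+\lambda/\mu)$, which recovers the desired exponential form.

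The one-sided statements in the parenthetical remark are obtained by running exactly the same computation for a single tail; the factor of $2$ simply disappears because we do not union-bound over both tails. The main (mild) obstacle is the regime $\lambda > \mu$, where Proposition \ref{lem:mult-chernoff} as stated does not apply directly and one has to redo the Chernoff MGF step by hand; however, every invocation of this lemma in the paper (notably the cut-preservation and degree-balancing Chernoff bounds inside \Cref{lem:streaming-reduction}) is applied with $\lambda\ll\mu$, so the clean reduction from Proposition \ref{lem:mult-chernoff} in the first regime suffices for all uses in the body.
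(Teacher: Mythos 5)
The paper offers no proof of this proposition --- it is stated as a standard fact alongside the other appendix propositions --- so your write-up is being judged on its own. The reduction to \Cref{lem:mult-chernoff} via $\delta=\lambda/\mu$ is correct and clean for $0<\lambda\le\mu$, and the observation that the lower tail vanishes when $\lambda>\mu$ is also fine. The gap is in the upper tail for $\lambda>\mu$: the MGF optimization you describe does \emph{not} recover the claimed form. With $t=\ln(1+\delta)$ the exponent you obtain is $\mu\bigl(\delta-(1+\delta)\ln(1+\delta)\bigr)$, and the standard estimate here is $\delta-(1+\delta)\ln(1+\delta)\le-\delta^2/(2+\delta)$, which matches $-\delta^2/3$ only when $\delta\le1$. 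For $\delta>1$ it is strictly weaker (already at $\delta=2$ one has $2-3\ln 3\approx-1.296>-4/3$), so the argument does not close. Worse, the inequality as stated in the proposition is actually \emph{false} in that regime: for $X\sim\mathrm{Bin}(100,0.01)$, so $\mu=1$, and $\lambda=9$, one has
\[
\Pr[X\ge 10]\;\ge\;\binom{100}{10}(0.01)^{10}(0.99)^{90}\;\approx\;7\times10^{-8}\;\gg\;2\exp(-81/3)\;\approx\;3.8\times10^{-12},
\]
so no proof can establish the bound for all $\lambda>0$; the correct general additive form has exponent $-\lambda^2/(2\mu+\lambda)$ or similar (Bernstein), degrading to $-\Omega(\lambda)$ rather than $-\Omega(\lambda^2/\mu)$ once $\lambda\gg\mu$.

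Your closing remark is the right instinct and is essentially the fix: the proposition should be stated (or at least read) with the restriction $\lambda\le\mu$, in which case your first paragraph is a complete proof, and the invocations in \Cref{lem:streaming-reduction} and elsewhere are in (or can be reduced to) that regime --- e.g.\ in Step 1 a cut with $\delta<3\epsilon^2 m$ crossing edges never deviates by $3p\epsilon^2 m$ in absolute value to begin with, since $0 \le X \le \delta$. But as written, the claim that the by-hand Chernoff step ``recovers the desired exponential form'' for $\lambda>\mu$ is a genuine error, not merely an omitted computation.
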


\begin{proposition}[Additive Hoeffding Bound] \label{lem:hoeffding}
Let $X_1, \dots, X_n$ be independent random variables taking values in $[0,1]$, 
and let $X = \sum_{i=1}^n X_i$ with $\mu = \mathbb{E}[X]$. 
For any $t > 0$,
\[
\Pr \paren*{\abs{X - \mu} \geq t} \leq 2 \exp\paren*{-\frac{2t^2}{n}}.
\]
\end{proposition}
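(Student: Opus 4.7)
The plan is to prove this via the standard Chernoff--Cram\'er exponential moment method, combined with Hoeffding's lemma to bound the moment generating function of a bounded random variable. I would handle the two tails separately and union-bound at the end to obtain the factor of $2$.

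First I would set $Y_i = X_i - \mathbb{E}[X_i]$, so that $\mathbb{E}[Y_i] = 0$ and each $Y_i$ lies in an interval of length at most $1$ (since $X_i \in [0,1]$). For any $\lambda > 0$, Markov's inequality applied to $e^{\lambda(X-\mu)}$ gives
\[
\Pr[X - \mu \geq t] \;\leq\; e^{-\lambda t}\, \mathbb{E}\!\left[e^{\lambda(X-\mu)}\right] \;=\; e^{-\lambda t} \prod_{i=1}^n \mathbb{E}\!\left[e^{\lambda Y_i}\right],
\]
where the product uses independence of the $X_i$'s.

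Next I would invoke Hoeffding's lemma: for a zero-mean random variable $Y$ supported in an interval of length $\ell$, $\mathbb{E}[e^{\lambda Y}] \leq e^{\lambda^2 \ell^2 / 8}$. Since each $Y_i$ lies in an interval of length at most $1$, this gives $\mathbb{E}[e^{\lambda Y_i}] \leq e^{\lambda^2/8}$. Combining,
\[
\Pr[X - \mu \geq t] \;\leq\; \exp\!\left(-\lambda t + \frac{n \lambda^2}{8}\right).
\]
Optimizing over $\lambda > 0$ by taking $\lambda = 4t/n$ yields $\Pr[X - \mu \geq t] \leq \exp(-2t^2/n)$. An identical argument applied to $-X$ gives $\Pr[\mu - X \geq t] \leq \exp(-2t^2/n)$, and a union bound then delivers the claimed factor of $2$.

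The only nontrivial step is Hoeffding's lemma, which is the main obstacle. To establish it, I would use convexity of $e^{\lambda y}$: write any $y \in [a,b]$ as a convex combination of the endpoints, apply the convexity bound to $e^{\lambda Y}$ pointwise, take expectations (using $\mathbb{E}[Y]=0$), and then analyze the resulting function $\varphi(\lambda) = \log\!\left(\tfrac{b}{b-a} e^{\lambda a} - \tfrac{a}{b-a} e^{\lambda b}\right)$ via a second-order Taylor expansion, showing $\varphi''(\lambda) \leq (b-a)^2/4$, so $\varphi(\lambda) \leq \lambda^2 (b-a)^2/8$. Everything else is a routine optimization and union bound.
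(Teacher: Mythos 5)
Your proof is correct: this is the standard Chernoff--Cram\'er argument with Hoeffding's lemma, and the optimization $\lambda = 4t/n$ does yield the exponent $-2t^2/n$, with the factor of $2$ from the union bound over the two tails. Note that the paper states this proposition as a known textbook fact without giving any proof, so there is nothing to compare against; your argument is the standard one and fills that (intentional) gap correctly.
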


\begin{lemma}\label{lem:frac-diff-prelim}
    For positive variables $a, b, c, d$, the following inequality holds:
    \[
    \abs*{\frac{a}{b}-\frac{c}{d}} \leq \frac{\abs*{a-c} \cdot d + \abs*{b-d} \cdot c}{bd}.
    \]
\end{lemma}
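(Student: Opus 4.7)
The plan is to reduce the claim to a purely algebraic manipulation of the numerator after combining the two fractions over a common denominator. Since $b, d > 0$, we have
\[
\abs*{\frac{a}{b} - \frac{c}{d}} = \frac{\abs*{ad - bc}}{bd},
\]
so the entire task reduces to bounding $\abs*{ad - bc}$ by $\abs*{a-c} \cdot d + \abs*{b-d} \cdot c$.

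The key step is a classical ``add and subtract'' trick: insert the cross term $cd$ (equivalently $bc$ if we prefer the other symmetric variant), writing
\[
ad - bc = (ad - cd) + (cd - bc) = (a - c)\, d + c\, (d - b).
\]
Applying the triangle inequality then gives
\[
\abs*{ad - bc} \leq \abs*{a - c} \cdot d + \abs*{b - d} \cdot c,
\]
where I use $\abs*{d - b} = \abs*{b - d}$ and the positivity of $c$ and $d$ to drop the absolute values on those factors. Dividing both sides by $bd > 0$ yields the stated inequality.

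There is essentially no obstacle here; the only subtlety is choosing which cross term to insert, since the roles of the pairs $(a,c)$ and $(b,d)$ are not symmetric in the final bound (the $\abs*{a - c}$ is multiplied by $d$ and the $\abs*{b - d}$ is multiplied by $c$). The decomposition $ad - bc = (a-c)d + c(d-b)$ is the one that matches the right-hand side exactly, so that is the choice I would use. No hypotheses beyond $b, d > 0$ (and $c \geq 0$ for cleanliness of the triangle inequality, which follows from positivity) are needed.
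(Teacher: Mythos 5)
Your proposal is correct and matches the paper's proof exactly: the paper also writes $\frac{a}{b}-\frac{c}{d} = \frac{ad-cd+cd-bc}{bd}$ and applies the triangle inequality to the decomposition $(a-c)d + (d-b)c$. Nothing further is needed.
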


\begin{proof}
    \[
    \abs*{\frac{a}{b}-\frac{c}{d}} = \abs*{\frac{ad-cd+cd-bc}{bd}} \leq \frac{\abs*{a-c} \cdot d + \abs*{b-d} \cdot c}{bd}. \qedhere
    \]
\end{proof}

\begin{lemma}\label{lem:custom-prob-ex}
    Let $X$ be a random variable in the range $[0,1]$, let $0 \leq \mu, \delta \leq 1$ such that $\Pr \paren*{\abs*{X-\mu} > \delta} \leq \delta$. It holds that
    \[
    \abs*{\Ex \bracket*{X}-\mu} \leq 2\delta.
    \]
\end{lemma}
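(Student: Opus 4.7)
The plan is to split the expectation based on the event $|X - \mu| \leq \delta$ versus its complement, and bound each piece separately. First I would note that since both $X$ and $\mu$ lie in $[0,1]$, the difference $X - \mu$ always lies in $[-1,1]$, so $|X - \mu| \leq 1$ deterministically. This gives us a clean worst-case bound on the ``bad'' event, while on the ``good'' event we have $|X - \mu| \leq \delta$ by assumption.

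Specifically, I would start from $|\Ex[X] - \mu| = |\Ex[X - \mu]| \leq \Ex[|X - \mu|]$ by Jensen/triangle inequality. Then I would decompose
\[
\Ex[|X - \mu|] = \Ex\bigl[|X - \mu| \cdot \mathbb{1}[|X-\mu| \leq \delta]\bigr] + \Ex\bigl[|X - \mu| \cdot \mathbb{1}[|X-\mu| > \delta]\bigr].
\]
The first term is bounded by $\delta \cdot \Pr(|X-\mu| \leq \delta) \leq \delta$, and the second term is bounded by $1 \cdot \Pr(|X-\mu| > \delta) \leq \delta$ using the hypothesis together with the fact that $|X - \mu| \leq 1$. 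Adding gives $|\Ex[X] - \mu| \leq 2\delta$, as required.

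There is no real obstacle here; the only subtlety is making sure the deterministic bound $|X - \mu| \leq 1$ is invoked correctly for the tail term, which is why the hypothesis that both $X \in [0,1]$ and $\mu \in [0,1]$ is needed (otherwise the tail contribution could not be controlled by $\Pr(|X - \mu| > \delta)$ alone). The proof is a short two- or three-line calculation.
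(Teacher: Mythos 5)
Your proof is correct and follows essentially the same approach as the paper: both split according to whether the event $\abs*{X-\mu} > \delta$ occurs, bound the bad part by its probability using the fact that everything lies in $[0,1]$, and bound the good part by $\delta$. The only cosmetic difference is that you bound $\Ex\bracket*{\abs*{X-\mu}}$ via the triangle inequality, whereas the paper applies the law of total expectation to $\Ex[X]$ directly and tracks the resulting interval; your version is, if anything, slightly cleaner.
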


\begin{proof}
    Let $\mc{E}$ be the event where $\abs*{X-\mu} > \delta$. Then $\Pr(\mc{E}) \leq \delta$. We can split the expectation by chain rule.
    \[
    \Ex[X] = \Pr\paren*{\mc{E}}\Ex\bracket*{X|\mc{E}} + \Pr\paren*{\overline{\mc{E}}} \Ex\bracket*{X|\overline{\mc{E}}}
    \]
    We see that the left term is in $[0,\delta]$, and the right term is in $[(1-\delta)\cdot(\mu-\delta),\mu+\delta] \subset [\mu-2\delta,\mu+\delta]$. Thus, $\Ex[X] \in [\mu-2\delta,\mu+2\delta]$ proving the claim.
\end{proof}

\begin{proposition}[Jensen's Inequality] \label{lem:jensen}
    For a convex function $f$ and random variable $X$, $f(\Ex[X]) \leq \Ex[f(X)]$. We note that $f(x) = x^2$ and $f(x) = \abs*{x}$ are examples of convex functions.
\end{proposition}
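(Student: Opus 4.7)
The plan is to prove Jensen's inequality via the supporting line (subgradient) characterization of convexity. The key structural fact I would first establish is that for any convex function $f$ and any interior point $\mu$ in its domain, there exists a real number $c$ such that $f(x) \geq f(\mu) + c(x - \mu)$ for every $x$ in the domain. This can be derived directly from the definition of convexity: convex functions admit left and right derivatives everywhere in the interior of their domain, and any value in the (nonempty) interval between these one-sided derivatives serves as a valid subgradient, yielding the supporting line. For the purposes of this paper, the two concrete examples $f(x) = x^2$ and $f(x) = |x|$ are convex on all of $\mathbb{R}$, so no boundary complications arise.

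Once the supporting line is available, the rest is immediate. I would set $\mu = \Ex[X]$ (assuming this expectation exists and is finite, otherwise the inequality is trivial) and apply the supporting-line inequality pointwise to the random variable: for every outcome, $f(X) \geq f(\mu) + c(X - \mu)$. Taking expectations of both sides and using linearity of expectation gives
\[
\Ex[f(X)] \geq f(\mu) + c\paren*{\Ex[X] - \mu} = f(\mu) = f(\Ex[X]),
\]
which is exactly the claimed inequality.

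The main (minor) obstacle is simply the rigorous justification of the supporting line at $\mu = \Ex[X]$. If one wants to avoid one-sided derivatives entirely, an alternative is to prove the inequality first for simple random variables taking finitely many values (where it reduces by induction to the two-point definition of convexity) and then extend to general $X$ via a standard approximation argument using monotone/dominated convergence, assuming $\Ex[|f(X)|]$ and $\Ex[X]$ are finite. Either route is standard, so I would present the supporting-line proof as the cleanest version and note that the two specific convex functions used elsewhere in the paper, $x^2$ and $|x|$, are smooth enough that the subgradient is explicit (equal to $2\mu$ in the first case, and to $\operatorname{sign}(\mu)$ or any value in $[-1,1]$ at $0$ in the second).
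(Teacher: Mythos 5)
Your proof is correct. The paper does not actually prove this proposition at all---it is stated without proof as a classical fact, alongside Chebyshev's inequality and the Chernoff bounds, so there is no ``paper approach'' to compare against. Your supporting-line (subgradient) argument is the standard and cleanest proof: the existence of a $c$ with $f(x) \geq f(\mu) + c(x-\mu)$ at $\mu = \Ex[X]$ follows from the monotonicity of difference quotients of a convex function, and taking expectations finishes the argument by linearity. Your attention to the two caveats---that $\mu$ should be an interior point of the domain and that the relevant expectations should be well-defined---is appropriate but moot for this paper's uses, since $x^2$ and $\abs{x}$ are convex on all of $\mathbb{R}$ and are only ever applied here to bounded random variables (values in $[0,1]$ or differences thereof), so every expectation involved is finite.
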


\begin{lemma}\label{lem:root-variance}
    Let $X$ be a random variable. It holds that
    \[
    \Ex \bracket*{\abs*{X-\Ex \bracket*{X}}} \leq \sqrt{\Var(X)}.
    \]
\end{lemma}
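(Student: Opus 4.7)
The plan is to apply Jensen's inequality (\Cref{lem:jensen}) to the convex function $f(x) = x^2$, with the random variable $|X - \Ex[X]|$. Jensen gives $f(\Ex[|X - \Ex[X]|]) \leq \Ex[f(|X - \Ex[X]|)]$, i.e.,
\[
\paren*{\Ex\bracket*{\abs*{X - \Ex[X]}}}^2 \leq \Ex\bracket*{\abs*{X - \Ex[X]}^2} = \Ex\bracket*{(X - \Ex[X])^2} = \Var(X).
\]
Taking square roots on both sides (both sides are nonnegative) yields the claim.

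The only subtlety worth flagging is that we must apply Jensen to the random variable $|X - \Ex[X]|$ (not to $X$ itself) in order to square away the absolute value and recover $\Var(X)$ exactly; this is direct because $x \mapsto x^2$ is convex on all of $\mathbb{R}$ (and the paper already notes this in \Cref{lem:jensen}). There is no real obstacle — the whole proof is two lines of manipulation.
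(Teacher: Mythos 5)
Your proof is correct and is essentially identical to the paper's: both apply Jensen's inequality with the convex function $x \mapsto x^2$ to the random variable $\abs*{X - \Ex[X]}$, observe that the resulting right-hand side equals $\Var(X)$, and take square roots.
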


\begin{proof}
    From \Cref{lem:jensen}, we have that
    \[
    \Ex \bracket*{\abs*{X-\Ex \bracket*{X}}}^2 \leq \Ex \bracket*{\paren*{X-\Ex \bracket*{X}}^2} = \Var(X).
    \]
    Thus, the lemma follows from taking the square root of both sides.
\end{proof}

\begin{proposition}[Cauchy-Schwarz Inequalities] \label{lem:cauchy-schwarz}
    For any two random variables $X$ and $Y$ it holds $\Cov(X, Y)^2 \leq \Var(X) \Var(Y)$ and $\Ex\bracket*{XY}^2 \leq \Ex\bracket*{X^2}\Ex\bracket*{Y^2}$.
\end{proposition}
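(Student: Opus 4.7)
The plan is to prove the second inequality first, as it is the general Cauchy-Schwarz inequality for the inner product $\langle X, Y \rangle = \Ex[XY]$, and then deduce the covariance version as a direct corollary by centering the variables.

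For the inequality $\Ex[XY]^2 \leq \Ex[X^2]\Ex[Y^2]$, I would use the standard discriminant trick. Consider the real-valued function $\varphi(t) = \Ex[(X - tY)^2]$, which is non-negative for every $t \in \mathbb{R}$ since it is the expectation of a non-negative random variable. Expanding by linearity of expectation gives $\varphi(t) = \Ex[X^2] - 2t \Ex[XY] + t^2 \Ex[Y^2]$, a quadratic in $t$. If $\Ex[Y^2] > 0$, then because $\varphi(t) \geq 0$ for all $t$ its discriminant must be non-positive, i.e.\ $4\Ex[XY]^2 - 4\Ex[X^2]\Ex[Y^2] \leq 0$, which rearranges to the desired inequality. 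For the degenerate case $\Ex[Y^2] = 0$, we have $Y = 0$ almost surely, so $\Ex[XY] = 0$ and both sides vanish; a symmetric argument covers $\Ex[X^2] = 0$.

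For the covariance inequality, I would apply the one just proved to the centered variables $X' = X - \Ex[X]$ and $Y' = Y - \Ex[Y]$. By the definitions, $\Ex[X'Y'] = \Cov(X, Y)$, $\Ex[{X'}^2] = \Var(X)$, and $\Ex[{Y'}^2] = \Var(Y)$. Plugging these into $\Ex[X'Y']^2 \leq \Ex[{X'}^2]\Ex[{Y'}^2]$ yields $\Cov(X, Y)^2 \leq \Var(X)\Var(Y)$, as claimed.

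There is no real obstacle here, since both statements are classical. The only subtlety worth noting explicitly is handling the degenerate case where one of $\Ex[X^2]$ or $\Ex[Y^2]$ equals zero, which is resolved immediately by observing that the corresponding variable must vanish almost surely. Implicit throughout is the assumption that $X, Y$ have finite second moments, which is the standing hypothesis whenever the right-hand sides of these inequalities are well-defined.
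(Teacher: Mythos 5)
Your proof is correct: the discriminant argument for $\Ex[XY]^2 \leq \Ex[X^2]\Ex[Y^2]$ (including the degenerate case $\Ex[Y^2]=0$) and the reduction of the covariance version to it by centering are both sound, and this is the canonical way to establish these inequalities. The paper states this proposition as a standard fact without supplying a proof, so there is nothing to compare against; your argument is a complete and appropriate justification.
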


\begin{lemma} \label{lem:sampling-covariance}
    Let $X_1, \ldots, X_n$ be a set of (possibly correlated) random variables, such that $X_i \in [0, 1]$ and $\Var(X_i) \leq \sigma^2$. Let $I, J \in [n]$ be independent uniform random variables.
    It holds:
    $$
    \Cov(X_I, X_J) \leq \sigma^2.
    $$
\end{lemma}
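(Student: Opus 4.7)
The plan is to expand $\Cov(X_I, X_J)$ by conditioning on the values of the independent uniform indices $I$ and $J$, and then apply the Cauchy--Schwarz inequality (\cref{lem:cauchy-schwarz}) termwise. Because $I$ and $J$ are independent of each other and of the $X_i$'s, linearity of expectation gives
\[
\Ex[X_I X_J] = \frac{1}{n^2} \sum_{i, j} \Ex[X_i X_j]
\quad \text{and} \quad
\Ex[X_I]\,\Ex[X_J] = \frac{1}{n^2} \sum_{i, j} \Ex[X_i]\,\Ex[X_j],
\]
so subtracting yields the identity
\[
\Cov(X_I, X_J) = \frac{1}{n^2} \sum_{i, j} \Cov(X_i, X_j).
\]

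Next, I would apply \cref{lem:cauchy-schwarz} to each summand, giving $\Cov(X_i, X_j) \leq \sqrt{\Var(X_i)\Var(X_j)} \leq \sigma^2$, where the second inequality uses the hypothesis that every $\Var(X_i) \leq \sigma^2$. Summing over the $n^2$ pairs $(i, j)$ and dividing by $n^2$ immediately gives $\Cov(X_I, X_J) \leq \sigma^2$, which is the desired bound.

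There is no real obstacle here: the only small thing to be careful about is justifying the first identity, which hinges on the independence of $I$ and $J$ from the vector $(X_1, \ldots, X_n)$ (and from each other), so that we can decouple the index randomness from the value randomness in both $\Ex[X_I X_J]$ and $\Ex[X_I]\Ex[X_J]$. The assumption $X_i \in [0,1]$ is actually not needed for the bound itself, only the variance hypothesis; it appears in the statement likely because it is the natural regime of application in the main argument.
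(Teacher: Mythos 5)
Your proof is correct. The identity $\Cov(X_I, X_J) = \frac{1}{n^2}\sum_{i,j}\Cov(X_i, X_j)$ does hold (the cross terms in the means cancel exactly in the subtraction, as your two displayed expansions show), and bounding each summand by $\sqrt{\Var(X_i)\Var(X_j)} \leq \sigma^2$ via Cauchy--Schwarz finishes the argument. The paper takes a slightly different route: it applies the law of total covariance, writing $\Cov(X_I, X_J) = \Ex[\Cov(X_I, X_J \mid I, J)] + \Cov(\Ex[X_I \mid I, J], \Ex[X_J \mid I, J])$, bounds the first term by $\sigma^2$ exactly as you do (Cauchy--Schwarz conditionally on $I, J$), and argues the second term vanishes because $\Ex[X_I \mid I, J]$ is a function of $I$ alone and $\Ex[X_J \mid I, J]$ a function of $J$ alone, with $I, J$ independent. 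Your direct expansion is the more elementary of the two: it absorbs the vanishing "between-indices" term automatically rather than having to identify and kill it, at the cost of writing out the double sum explicitly; both arguments rest on the same two ingredients, namely the (implicit but necessary) independence of $(I, J)$ from the vector $(X_1, \ldots, X_n)$ and termwise Cauchy--Schwarz. You are also right that the hypothesis $X_i \in [0,1]$ is not needed here beyond guaranteeing finite variances.
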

\begin{proof}
    We have:
    $$\Cov(X_I, X_J) = \Ex[\Cov(X_I, X_J \mid I, J)] + \Cov(\Ex[X_I \mid I, J], \Ex[X_J \mid I, J]).$$
    The first term on the right-hand side is at most $\sigma^2$ by \cref{lem:cauchy-schwarz} and the assumption that $\Var(X_i) \leq \sigma^2$ for any fixed $i$:
    $$
    \Ex[\Cov(X_I, X_J \mid I, J)]
    \leq \Ex\left[ \sqrt{\Var(X_I \mid I, J)} \sqrt{\Var(X_J \mid I, J)}\right]
    \leq \Ex[\sigma^2]
    = \sigma^2.
    $$
    The second term on the right-hand side is zero, since $\Ex[X_I \mid I, J] = \Ex[X_I \mid I]$ is a function of $I$, and similarly $\Ex[X_J \mid I, J]$ is a function of $J$.
    Therefore, the two are independent, since $I$ and $J$ are independent.
    This concludes the proof.
\end{proof}

\begin{lemma}\label{lem:conditional-clamping}
    Let $X, Y$ be a random variables. Then, it holds:
    $$
    \card{\Ex[\altclamp(X)] - \Ex[\altclamp(Y)]} \leq 
    \card{\Ex[X] - \Ex[Y]} + \sqrt{\Var(X)} + \sqrt{\Var(Y)}. 
    $$
\end{lemma}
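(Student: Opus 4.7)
The plan is to exploit the fact that $\altclamp$ is 1-Lipschitz, which is exactly the content of \Cref{lem:clamp-basic}. The strategy is to sandwich the two random quantities between their deterministic counterparts $\altclamp(\Ex[X])$ and $\altclamp(\Ex[Y])$, and then apply the triangle inequality to obtain three pieces:
$$
\card{\Ex[\altclamp(X)] - \Ex[\altclamp(Y)]} \leq \card{\Ex[\altclamp(X)] - \altclamp(\Ex[X])} + \card{\altclamp(\Ex[X]) - \altclamp(\Ex[Y])} + \card{\altclamp(\Ex[Y]) - \Ex[\altclamp(Y)]}.
$$

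For the middle piece, \Cref{lem:clamp-basic} immediately yields $\card{\altclamp(\Ex[X]) - \altclamp(\Ex[Y])} \leq \card{\Ex[X] - \Ex[Y]}$, which is the first term on the right-hand side of the lemma. For each of the outer pieces, I would use the 1-Lipschitz property pointwise: since $\altclamp(\Ex[Z])$ is a constant, $\Ex[\altclamp(Z)] - \altclamp(\Ex[Z]) = \Ex\bracket*{\altclamp(Z) - \altclamp(\Ex[Z])}$, so by Jensen and \Cref{lem:clamp-basic},
$$
\card{\Ex[\altclamp(Z)] - \altclamp(\Ex[Z])} \leq \Ex\bracket*{\card{\altclamp(Z) - \altclamp(\Ex[Z])}} \leq \Ex\bracket*{\card{Z - \Ex[Z]}}.
$$
Applying \Cref{lem:root-variance} then bounds this by $\sqrt{\Var(Z)}$. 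Instantiating with $Z = X$ and $Z = Y$ supplies the remaining two terms and completes the bound.

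There is no real obstacle here; the argument is essentially a triangle inequality plus Lipschitz continuity, and every ingredient has already been established earlier in the appendix. The only mild subtlety is being careful about the direction of Jensen, namely that the absolute value is pulled inside the expectation before invoking \Cref{lem:root-variance}, which in turn relies on $\Ex[|W|]^2 \leq \Ex[W^2]$ via \Cref{lem:jensen}.
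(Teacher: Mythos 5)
Your proposal is correct and follows essentially the same route as the paper: the same three-way triangle-inequality decomposition around $\altclamp(\Ex[X])$ and $\altclamp(\Ex[Y])$, with \Cref{lem:clamp-basic} handling the middle term and Jensen plus Lipschitz plus \Cref{lem:root-variance} handling the two outer terms.
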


\begin{proof}
    We refer to the left-hand side of the inequality as $\mathsf{(LHS)}$, We have:
    \begin{align}
        \mathsf{(LHS)} &\leq \abs*{\Ex [\clamp{X}] - \clamp{\Ex[X]}} + \abs*{\clamp{\Ex[X]} - \clamp{\Ex[Y]}} + \abs*{\Ex [\clamp{Y}] - \clamp{\Ex[Y]}} \notag \\
        &\leq \Ex \bracket*{\abs*{\clamp{X} - \clamp{\Ex[X]}}} 
        + \abs*{\clamp{\Ex[X]}-\clamp{\Ex[Y]}} + \Ex \bracket*{\abs*{\clamp{Y} - \clamp{\Ex[Y]}}} \tag{\Cref{lem:jensen}} \\
        &\leq \Ex \bracket*{\abs*{X - \Ex[X]}} 
        + \abs*{\Ex[X]-\Ex[Y]} + \Ex \bracket*{\abs*{Y - \Ex[Y]}} \tag{\Cref{lem:clamp-basic}}\\
        &\leq \sqrt{\Var(X)} + \abs*{\Ex[X]-\Ex[Y]} + \sqrt{\Var(Y)} \tag{\Cref{lem:root-variance}}.
    \end{align}
    This concludes the proof.
\end{proof}

\begin{lemma}\label{lem:var-of-clamp}
    For random variable $X$, it holds that
    \[
    \Var\paren*{\clamp{X}} \leq \Var(X).
    \]
\end{lemma}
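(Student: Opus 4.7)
The plan is to exploit that $\operatorname{clamp}$ is a $1$-Lipschitz contraction, which is exactly the content of \Cref{lem:clamp-basic}, and combine it with the standard ``independent copies'' representation of variance. Specifically, I will use the identity
\[
\Var(Y) = \tfrac{1}{2}\,\Ex\bracket*{(Y - Y')^2},
\]
valid whenever $Y'$ is an independent copy of $Y$ with the same distribution. This identity follows immediately by expanding the square and using independence:
$\Ex[(Y-Y')^2] = 2\Ex[Y^2] - 2\Ex[Y]^2 = 2\Var(Y)$.

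The main step is then a one-line coupling argument. Let $X'$ be an independent copy of $X$, so that $\operatorname{clamp}(X')$ is an independent copy of $\operatorname{clamp}(X)$. Applying the identity above to both $\operatorname{clamp}(X)$ and $X$, I obtain
\[
2\Var\paren*{\clamp{X}} \;=\; \Ex\bracket*{\paren*{\clamp{X} - \clamp{X'}}^2} \;\leq\; \Ex\bracket*{(X - X')^2} \;=\; 2\Var(X),
\]
where the inequality is a pointwise application of \Cref{lem:clamp-basic} (which gives $|\operatorname{clamp}(x) - \operatorname{clamp}(y)| \leq |x - y|$, and hence the same inequality for squares) followed by taking expectations. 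Dividing by $2$ yields the claim.

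I do not anticipate a real obstacle here: the only subtlety is to avoid the tempting but incorrect route of writing $\Var(\operatorname{clamp}(X)) = \Ex[(\operatorname{clamp}(X) - \mu)^2]$ with $\mu = \Ex[\operatorname{clamp}(X)]$ and then trying to bound this by $\Ex[(X - \mu)^2]$ via \Cref{lem:clamp-basic}; that chain would produce $\Var(\operatorname{clamp}(X)) \leq \Ex[(X - \mu)^2]$, but $\Ex[(X-\mu)^2]$ is only known to exceed $\Var(X)$, so the inequality goes the wrong way. The symmetric formulation via an independent copy sidesteps this issue cleanly, because the contraction is applied to a difference of two i.i.d.\ instances rather than to a deviation from a mean.
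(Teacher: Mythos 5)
Your proof is correct and takes essentially the same route as the paper: both use the identity $\Var(Y) = \tfrac{1}{2}\Ex[(Y-Y')^2]$ for an independent copy $Y'$, then apply the $1$-Lipschitz contraction property of $\operatorname{clamp}$ from \Cref{lem:clamp-basic} pointwise. The remark at the end about why the naive ``deviation from the mean'' route fails is a nice bonus but not needed; otherwise this is the paper's argument verbatim.
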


\begin{proof}
    First, for any random variable $Y$ and independent copy $Y'$, it is true that $\frac{1}{2}\Ex[(Y-Y')^2] = \frac{1}{2}\paren*{2\Ex[Y^2]-2\Ex[Y]^2} = \Var(Y)$. Applying this to $Y = \clamp{X}$ and copy $Y' = \clamp{X'}$ and using \Cref{lem:clamp-basic} gives
    \[
    \Var\paren*{\clamp{X}} = \frac{1}{2} \Ex \bracket*{\paren*{\clamp{X}-\clamp{X'}}^2} \leq \Ex \bracket*{\paren*{X-X'}^2} = \Var(X). \qedhere
    \]
\end{proof}

\begin{lemma}\label{lem:cond-Ex-implies-Ex}
    For a random variable $X$ and constants $\mu, \delta, \epsilon$ taking values in $[0,1]$, and an event $A$ which occurs with $\Pr(A) > 1-\delta$. Let $\abs*{\Ex [X|A] - a} \leq \epsilon$. It holds that
    \[
    \abs*{\Ex[X] - a} \leq \epsilon+\delta.
    \]
\end{lemma}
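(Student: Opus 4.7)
The plan is to apply the law of total expectation, splitting on whether the event $A$ occurs, and then compare the resulting decomposition to the analogous trivial decomposition $a = \Pr(A)\cdot a + \Pr(\bar A)\cdot a$. Concretely, writing $\bar A$ for the complement of $A$, we have
\[
\Ex[X] - a = \Pr(A)\paren*{\Ex[X \mid A] - a} + \Pr(\bar A)\paren*{\Ex[X \mid \bar A] - a},
\]
so by the triangle inequality
\[
\abs*{\Ex[X] - a} \leq \Pr(A)\cdot\abs*{\Ex[X \mid A] - a} + \Pr(\bar A)\cdot\abs*{\Ex[X \mid \bar A] - a}.
\]

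From here I would bound each term in turn. The first is at most $1\cdot \epsilon = \epsilon$ by the hypothesis $\abs*{\Ex[X\mid A] - a} \leq \epsilon$ and the fact that $\Pr(A)\leq 1$. For the second, since both $X$ and $a$ lie in $[0,1]$, we have $\abs*{\Ex[X\mid \bar A] - a} \leq 1$, and by assumption $\Pr(\bar A) < \delta$, so this contribution is at most $\delta$. Summing yields $\abs*{\Ex[X] - a} \leq \epsilon + \delta$, which is exactly the claim.

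There is no real obstacle here — the statement is a routine ``conditioning on a high-probability event'' bound. The only mild subtlety is interpreting the phrase ``taking values in $[0,1]$'' in the hypothesis as applying to $X$ as well (which is necessary for the crude bound $\abs*{\Ex[X\mid \bar A] - a}\leq 1$ to hold); this is consistent with how the lemma is used elsewhere in the paper, where the random variables in question are estimates of positions or cut values bounded in $[0,1]$. The parameter $\mu$ in the statement appears to be vestigial and plays no role in the proof.
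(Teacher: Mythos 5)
Your proof is correct and follows essentially the same route as the paper: both decompose $\Ex[X]$ via the law of total expectation over $A$ and $\overline{A}$, bound the $A$-term by $\epsilon$ using the hypothesis, and bound the $\overline{A}$-term by $\delta$ using $\Pr(\overline{A})<\delta$ together with the $[0,1]$-boundedness (which, as you note, must be read as applying to $X$ as well, and the unused $\mu$ is indeed vestigial).
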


\begin{proof}
    We prove this by using the chain rule on $\Ex[X]$. 
    \[
    \Ex[X] = \Pr(A)\Ex[X|A] + \Pr(\overline{A})\Ex[X|\overline{A}].
    \]
    Now, we examine the intervals of each term based on the products of the bounds on their components. The term on the left must be in the interval $[(1-\delta)\cdot(a-\epsilon),1\cdot(a+\epsilon)] \subset [a-\epsilon-\delta, a+\epsilon]$. Similarly, the right term must be in the range $[0,\delta]$. This gives an overall range for $\Ex[X]$ of $[a-\epsilon-\delta, a+\epsilon+\delta]$ proving the claim.
\end{proof}

\begin{lemma}\label{lem:var-product}
    For random variables $X,Y$ taking values in $[0,1]$, with $\Var(X) \leq \epsilon$, $\Var(Y) \leq \epsilon$, it holds that
    \[
    \Var(XY) \leq 9\epsilon.
    \]
\end{lemma}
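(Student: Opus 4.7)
The plan is to upper-bound $\Var(XY)$ by the second moment of $XY$ around the product of means $\mu_X \mu_Y$, exploiting the basic fact that $\Var(Z) \le \Ex[(Z-c)^2]$ for any constant $c$. Writing $\mu_X = \Ex[X]$, $\mu_Y = \Ex[Y]$, $\tilde X = X - \mu_X$, and $\tilde Y = Y - \mu_Y$, the identity
\[
XY - \mu_X \mu_Y = \tilde X \tilde Y + \mu_X \tilde Y + \mu_Y \tilde X
\]
splits the quantity into three cross-terms, each of which I expect to have second moment at most $\epsilon$ under the hypothesis $\Var(X), \Var(Y) \le \epsilon$ and the $[0,1]$-boundedness.

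Concretely, I would apply the elementary inequality $(a+b+c)^2 \le 3(a^2+b^2+c^2)$ and take expectations to obtain
\[
\Var(XY) \;\le\; \Ex[(XY - \mu_X\mu_Y)^2] \;\le\; 3\Bigl(\Ex[\tilde X^2 \tilde Y^2] + \mu_X^2 \Var(Y) + \mu_Y^2 \Var(X)\Bigr).
\]
Since $X, Y, \mu_X, \mu_Y \in [0,1]$, we have $|\tilde X|, |\tilde Y| \le 1$ and $\mu_X^2, \mu_Y^2 \le 1$. Hence $\Ex[\tilde X^2 \tilde Y^2] \le \Ex[\tilde Y^2] = \Var(Y) \le \epsilon$, while the remaining two terms are each at most $\epsilon$ directly, yielding $\Var(XY) \le 9\epsilon$ as claimed.

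There is no substantive obstacle here; the proof is essentially an exercise in expanding $(X - \mu_X + \mu_X)(Y - \mu_Y + \mu_Y)$ and using $[0,1]$-boundedness. The only mild subtlety is to center at $\mu_X\mu_Y$ rather than $\Ex[XY]$, which loses only a factor (since $\Var$ is minimized by the true mean), and to notice that $\tilde X^2 \le 1$ is used to decouple $\Ex[\tilde X^2 \tilde Y^2]$ without needing independence of $X$ and $Y$.
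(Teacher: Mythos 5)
Your proof is correct and follows essentially the same route as the paper: the same decomposition $XY-\mu_X\mu_Y=\tilde X\tilde Y+\mu_X\tilde Y+\mu_Y\tilde X$ together with $(a+b+c)^2\le 3(a^2+b^2+c^2)$. The only (harmless) difference is that you bound $\Ex[\tilde X^2\tilde Y^2]\le\Ex[\tilde Y^2]$ directly from $|\tilde X|\le 1$, whereas the paper goes through Cauchy--Schwarz and fourth moments; your version is if anything slightly cleaner.
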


\begin{proof}
    Let $\mu_X = \Ex[X]$, $\mu_Y = \Ex[Y]$, and let $X' = X - \mu_X$, $Y' = Y - \mu_Y$. It holds that
    \[
    XY - \mu_X\mu_Y = X'Y' + X'\mu_Y + Y'\mu_X.
    \]
    Thus, we can express the variance of $XY$ as follows, using the fact that $(a+b+c)^2 \leq 3(a^2+b^2+c^2)$.
    \[
    \begin{aligned}
        \Var(XY) &\leq \Ex \bracket*{(X'Y' + X'\mu_Y + Y'\mu_X)^2} \\
        &\leq 3\paren*{\Ex \bracket*{(X'Y')^2} + \Ex \bracket*{(X'\mu_Y)^2} + \Ex \bracket*{(Y'\mu_X)^2}}.
    \end{aligned}
    \]
    We now notice that $\Ex[X'^2] = \Var(X) \leq \epsilon$ and similarly for $Y$, In addition, we can use \Cref{lem:cauchy-schwarz} to see that $\Ex[X'^2Y'^2] \leq \sqrt{\Ex[X'^4]\Ex[Y'^4]}$. Then, since $X'$ takes values in $[0,1]$, we can see that $\Ex[X'^4] \leq \Ex[X'^2] \leq \epsilon$, and similarly for $Y'$. Thus, we can put this together with the inequality above to conclude that $\Var(XY) \leq 3(\epsilon+\epsilon+\epsilon) \leq 9\epsilon$.
\end{proof}
\end{document}